%------------------------------------------------------------
\documentclass[a4paper,11pt,onecolumn,notitlepage]{article}
%---------------------------------

%\let\Lorig\L
%\usepackage[pdftex,final]{graphicx}
\usepackage[pagebackref=true,pdfpagemode=UseNone,bookmarksopen=false,colorlinks=true,linkcolor=blue,citecolor=blue,pdftitle={Lueders' and quantum Jeffrey's rules as entropic projections},
pdfauthor={Ryszard Pawel Kostecki}]{hyperref}
\usepackage[T1]{fontenc}
\usepackage[all]{xy}
\usepackage{amsmath}
\usepackage{amssymb}
\usepackage{amsthm}
\usepackage{bibspacing}
\usepackage{rpk}
%---------------------------------
\voffset -2cm\textheight 689pt
\setlength{\textwidth}{15.92cm}
\setlength{\oddsidemargin}{0mm}
\setlength{\evensidemargin}{0mm}    
%\setlength{\bibspacing}{0mm}
%------------------------------------------------------

\renewcommand*{\backref}[1]{}
\renewcommand*{\backrefalt}[4]{%
  \ifcase #1 %
    %No citations.% use \relax if you do not want the "No citations" message
    \relax
  \or
    %(page #2).%
    $\uparrow$~#2.
  \else
    %(pages #2).%
    $\uparrow$~#2.
  \fi%
}
%------------------------------------------------------
%\newcommand{\L}{\Lorig}

%-------------------------------------------------------------
%\allsectionsfont{\normalsize}
%-------------------------------------------------------------
\begin{document}
%-------------------------------------------------------------
\thispagestyle{empty}
\numberwithin{equation}{section}
%-------------------------------------------------------------
\renewcommand*{\thefootnote}{\fnsymbol{footnote}}

\begin{center}
\noindent
{\Large \textbf{L\"{u}ders' and quantum Jeffrey's rules as entropic projections}}\\
\ \\
{%Wojciech Kami\'{n}ski$^{1,2}$, 
Ryszard Pawe{\l} Kostecki}\\
{\scriptsize \ \\Perimeter Institute for Theoretical Physics, 31 Caroline Street North, N2L 2Y5 Waterloo, Ontario, Canada\footnote{Current affiliation.}\\Institute for Theoretical Physics, Department of Physics, University of Warsaw, Ho\.za 69, 00-681 Warszawa, Poland\\\vspace{1mm}\texttt{ryszard.kostecki@fuw.edu.pl}}\\\ 
{\scriptsize\ \\}{\small August 15, 2014}
\end{center}
\begin{abstract}
\noindent We prove that the standard quantum mechanical description of a quantum state change due to measurement, given by L\"{u}ders' rules, is a special case of the constrained maximisation of a~quantum relative entropy functional. This result is a~quantum analogue of the derivation of the Bayes--Laplace rule as a~special case of the constrained maximisation of relative entropy. The proof is provided for the Umegaki relative entropy of density operators over a Hilbert space as well as for the Araki relative entropy of normal states over a $W^*$-algebra. We also introduce a quantum analogue of Jeffrey's rule, derive it in the same way as above, and discuss the meaning of these results for quantum bayesianism.
\end{abstract}

\setcounter{footnote}{0}
\renewcommand*{\thefootnote}{\arabic{footnote}}
%-------------------------------------------------------------
%------------------------------------------------------
\section{Introduction\label{introduction.section}}
%------------------------------------------------------
An important part of a mathematical setting of probability theory can be considered as a special case of quantum theoretic kinematics. This can be seen most clearly when quantum theoretic kinematics is reformulated in algebraic terms, with quantum states defined as normal positive (or normalised) functionals over $W^*$-algebras (see e.g. \cite{Redei:Summers:2007,Kostecki:2013} for an overview). The Borel--Steinhaus--Kolmogorov measure theoretic approach to foundations of probability theory is then recovered precisely from commutative $W^*$-algebras and quantum states over them.\footnote{The same holds for the Kappos' approach to foundations based on Caratheodory's measure theory on abstract boolean algebras, as well for Le~Cam's approach based on abstract Banach lattices, if the latter is restricted to the `coherent' models. In general, the semi-finite measures are recovered not from quantum states, but from semi-finite normal weights (see Section \ref{math.section}).} However, both quantum theory and probability theory are equipped with additional structures, which describe possible mappings of probabilities or quantum states. Can these prescriptions of information dynamics also be directly related to each other? In particular, there are various considerations \cite{Bub:1977,Bub:1979,Bub:1979:measurement,Redei:1992:Bayes,Caves:Fuchs:Schack:2001,Schack:Brun:Caves:2001,Fuchs:2002,Fuchs:2003,Jacobs:2002,Valente:2007,Streater:2007,Bub:2007,Palge:Konrad:2008,Henderson:2010} of von Neumann's and L\"{u}ders' rules \cite{vonNeumann:1932:grundlagen,Lueders:1951} as noncommutative \textit{analogues} of the Bayes--Laplace rule \cite{Bayes:1763,Laplace:1774,Laplace:1814}. It is tempting to ask whether this analogy could be turned to something more definite, both conceptually and mathematically. 

The main motivation for this paper is a series of results \cite{May:Harper:1976,May:1976,May:1979,Williams:1980,Diaconis:Zabell:1982,Domotor:1985,Zellner:1988,Warmuth:2005,Caticha:Giffin:2006,Giffin:2008,Douven:Romeijn:2012} showing that both Bayes--Laplace and Jeffrey's rules (as well as some other rules \cite{Domotor:Zanotti:Graves:1980,Hughes:vanFraassen:1984,Douven:Romeijn:2012}) can be derived as special cases of the constrained minimisation of various information distances on probabilistic models (in Section \ref{Bayes.section} we review briefly some of these results). This has led us to conjecture \cite{Kostecki:2011:Waterloo:talk} that L\"{u}ders' rules may be special cases of constrained maximisation of a quantum relative entropy. In Section \ref{vN.Lu.section} we prove that the weak L\"{u}ders rule of quantum state change due to ``nonselective quantum measurement'' is a special case of quantum entropic projection, provided by the minimisation of the Araki distance subject to a specific set of constraints. We also introduce a quantum analogue of Jeffrey's rule and derive it as another special case of constrained Araki distance minimisation. These are the two main results of this paper. In addition, we show that the strong L\"{u}ders rule of quantum state change due to ``selective quantum measurement'' can be obtained from these results as the limiting case of quantum Jeffrey's rule or by regularised Araki distance minimisation. In Section \ref{transition.correlation.section} we show that both weak L\"{u}ders' rule and strong von Neumann's rule (which is the same as strong L\"{u}ders' rule for pure states) are also quantum entropic projections for a different quantum distance functional. With an exception of a derivation of a quantum Jeffrey's rule, our results hold for arbitrary $W^*$-algebras, so they are applicable in quantum field theoretic and relativistic quantum information problems. 

These results extend earlier considerations of L\"{u}ders' rules as \textit{analogues} of the Bayes--Laplace rule with a novel mathematical and conceptual content: all these rules are special cases of the constrained relative entropic inference. In this sense, `quantum bayesianism' can be considered a branch of `quantum relative entropism'. A discussion of this issue is provided in Section \ref{discussion.section}.

\ \\\noindent\textbf{History of the problem.} An inference based on minimisation of quantum distance on quantum models was first proposed by Herbut \cite{Herbut:1969}. He derived the weak L\"{u}ders rule from a constrained minimisation of norm distance in the Hilbert--Schmidt operator space. Unfortunately, his work has been left unnoticed by all works cited below. Several years later Marchand and collaborators \cite{Marchand:Wyss:1977,Benoist:Marchand:Yourgrau:1977,Benoist:Marchand:1979,Benoist:Marchand:Wyss:1979,Gudder:Marchand:Wyss:1979,Gudder:1980,Marchand:1981,Marchand:1983,Marchand:1983:Milano} used Bures' distance \cite{Bures:1969} (which is metrical), and argued that its constrained minimisation should be considered as a description of the change of state of information \textit{due to} ``quantum measurement'' described by a specific form of a coarse graining map \cite{Marchand:1977} (conditioned upon a subset of an operator algebra, and predual to a specific form of noncommutative conditional expectation \cite{Gudder:Marchand:1977}). Independently of this body of work, in \cite{Carazza:Casartelli:DElia:1977} it was proposed to use the constrained minimisation of the WGKL distance \eqref{WGKL.distance} of probabilities arising from traces of density operators to derive the post-measurement quantum state. The rules of inferential change of quantum states based on a constrained minimisation of other metrical distances on quantum models were later reconsidered by other authors \cite{Hadjisavvas:1978,Hadjisavvas:1981,Dieks:Veltkamp:1983,Raggio:1984} and some derivations of the strong von Neumann rule were obtained (see Section \ref{transition.correlation.section}). The reinterpretation of von Neumann's and L\"{u}ders' rules for ``quantum measurement'' as principles of inductive inference conditioned on specific information, and analogous to the Bayes--Laplace rule, was proposed at about the same time by Bub \cite{Bub:1977,Bub:1979,Bub:1979:measurement} (however, it can be claimed \cite{Streater:2007}, that already von Neumann was aware of the possibility of such interpretation). These two lines of thought were (implicitly) joined in Hadjisavvas' \cite{Hadjisavvas:1978,Hadjisavvas:1981} postulate that a quantum state change due to acquisition of data (e.g. in a measurement) should be provided by means of constrained minimisation of the JMGK distance \eqref{D.JMGK}, as well as in Donald's \cite{Donald:1986,Donald:1987:further:results} postulate that a description of ``quantum measurement'', understood as an inductive inference, should be provided by means of constrained minimisation of the Araki distance.\footnote{\cytat{Leaving aside any possible applications in the area of quantum communication theory, it seems to me that, almost regardless of the interpretation
one places on quantum mechanics, this is an appropriate way of modelling the quantum measurement process} \cite{Donald:1987:further:results}. He states this refering to a constrained minimisation of a different distance functional, which however coincides with Araki's distance $D_1|_{\N^+_{\star1}}$ at least for injective $W^*$-algebras $\N$.} Donald stressed also that this procedure \cytat{clearly allows for approximate measurement, and indeed they are required (all real measurements are approximate), if $K$ is taken to have a non-empty interior} \cite{Donald:1987:further:results}, where $K\subseteq\Scal(\N)$ is the constraint set. However, he provided no derivation of any of L\"{u}ders' rules (nor any other ``quantum measurement'' rules) from this procedure. Independently of the above works, Warmuth \cite{Warmuth:2005} used constrained minimisation of Umegaki's distance to derive a generalisation of the Bayes--Laplace rule to the case of density operators, with conditional probabilities replaced by covariance matrices. This generalisation has not reproduced L\"{u}ders' rules.

Our derivation of L\"{u}ders' rule is not only the first such result obtained for Umegaki's and Araki's distances, but also first result of this type obtained for any nonsymmetric quantum information distance. All results for quantum Jeffrey's rule are new. See also a closely related paper \cite{HKK:2014}, where the analogous results for weak and strong L\"{u}ders' rules are derived using another technique (based on differentiation, as opposed to generalised pythagorean theorem).\footnote{After finishing this paper, we were informed about reference \cite{MPSVW:2010}, where it is shown that $\sigma=\sum_iP_i\rho_iP_i$, where $P_i$ are rank $1$ projectors, minimises the Umegaki distance $D_1|_{\BH^+_{\star1}}(\rho,\sigma)$. This is a special case of our result for the weak L\"{u}ders rule. The generalisation to our result is stated without proof in \cite{Coles:2012}.}
%------------------------------------------------------
\section{Bayes--Laplace and Jeffrey's rules as entropic projections\label{Bayes.section}}
%------------------------------------------------------
The modern mathematical formulation of the Bayes--Laplace foundations for probability theory \cite{Bayes:1763,Laplace:1812} is based on finitely additive boolean algebras $\boole$ and conditional probabilities, defined as maps $p(\cdot|\cdot):\boole\times\boole\ni(x,y)\mapsto p(x|y)\in[0,1]$. Its kinematics is given by the rules
\begin{align}
p(x|y)+p(\lnot x|y)&=1,\\
p(x|y\land z)p(y|z)&=p(x\land y|z),
\end{align}
from which the \df{Bayes--Laplace theorem} \cite{Bayes:1763,Laplace:1774,Laplace:1814} follows,
\begin{equation}
	p(x|b\land\eta)=p(x|\eta)\frac{p(b|x\land\eta)}{p(b|\eta)}.
\end{equation}
The `marginal probability' $p(b|\eta)$, called also an `evidence', is a normalising constant calculated from 
\begin{equation}
	p(b|\eta)=\sum_{i\in I} p(b|x_i\land\eta)p(x_i|\eta),
\end{equation}
where $I$ is a countable set, while the set $\{x_i\in\boole\mid i\in I\}$ is exhaustive ($\bigvee_{i\in I}x_i=1$) and its elements are mutually exclusive ($x_i\land x_j=0$ for $i\neq j$). In the simple cases this set may consist of two elements: $\{x,\lnot x\}$. The dynamics of this approach is given by the \df{Bayes--Laplace rule} 
\begin{equation}
	p(x|\eta)\mapsto p_{\mathrm{new}}(x|\eta):=p(x|\eta)\frac{p(b|x\land \eta)}{p(b|\eta)}.
\label{Bayes.rule.eq}
\end{equation}
The map \eqref{Bayes.rule.eq} determines a rule of construction of the new conditional probability assignment associated to $x$ under the constraint that certain additional statements (`facts', `data', `events') $b$ are considered as (`appear as', `are known as', `are regarded as') true or false. So, $p_{\mathrm{new}}(x|\eta)$ in \eqref{Bayes.rule.eq} is interpreted as a (`posterior') conditional probability assigned to $x$ whenever the truth value of $b$ is given (`known'). If attribution of a definite truth value to $b\in\boole$ is interpreted as an `acquisition of data/facts', then the Bayes--Laplace rule can be understood as a procedure of statistical inference that transforms `prior' information states $p(x|\eta)$ about \textit{all} `hypotheses' $x\in\boole$ into `posterior' information states $p_{\mathrm{new}}(x|\eta)$, under constraints provided by the acquired data $b$ \textit{and} the assumed relationship between $b$ and $x$ which is encoded in the information states $p(b|x\land\eta)$. The probability $p(b|x\land\eta)$ is sometimes called the `sampling probability' (because before the constraint $b$ is applied, $p(b|x\land\eta)$ can represent a probability associated with possible values of constraints for a fixed hypothesis $x$) or the `likelihood' (because after the constraint $b$ is applied, $p(b|x\land\eta)$ is considered as a probability of $b$ as a function over possible hypotheses $x$). 

Jeffrey \cite{Jeffrey:1957,Jeffrey:1965,Jeffrey:1968} proposed an alternative dynamical rule, now called \df{Jeffrey's rule},
\begin{equation}
	p(x|\eta)\mapsto p_{\mathrm{new}}(x|\eta):=\sum_{i=1}^np(x|b_i)\lambda_i=\sum_{i=1}^n\frac{p(x\land b_i|\eta)}{p(b_i|\eta)}\lambda_i,
\label{Jeffrey.rule}
\end{equation}
where $n\in\NN$, $\{b_1,\ldots,b_n\}$ is a set of exhaustive and mutually exclusive elements of $\boole$, and the constraints $\lambda_i=p_{\mathrm{new}}(b_i|\eta)$ $\forall i\in\{1,\ldots,n\}$ hold. The defining equality in \eqref{Jeffrey.rule} is equivalent to the condition
\begin{equation}
	p_{\mathrm{new}}(x|b_i)=p(x|b_i)\;\;\;\forall x\in\boole\;\;\forall i\in\{1,\ldots,n\}.
\label{Jeffrey.sufficiency.eq}
\end{equation}
If $n=2$ with $b_2=\lnot b_1=:\lnot b$ and if $p_{\mathrm{new}}(\lnot b|\eta)=0$, then Jeffrey's rule \eqref{Jeffrey.rule} reduces to the Bayes--Laplace rule \eqref{Bayes.rule.eq}.

The first derivation of the Bayes--Laplace and Jeffrey's rules from constrained minimisation of the information distance functional (more precisely, the WGKL distance) was provided by Williams \cite{Williams:1980}. However, as he admitted, his derivation has a restricted validity, because it does not deal with conditional probabilities. This can be solved following van Fraassen's suggestion: \cytat{When two spaces are used (parameter or hypothesis space and observation or sample space) I shall think of these as subspaces of a larger one (possibly produced by a product construction), so that in a single context all propositions are represented by measurable sets in a single space} \cite{vanFraassen:1981}. Following the results of Caticha and Giffin \cite{Caticha:Giffin:2006,Giffin:2008}, consider a constrained minimisation of the WGKL distance \eqref{WGKL.distance},
\begin{equation}
	p(\xx,\theta)\mapsto p_{\mathrm{new}}(\xx,\theta):=\arginf_{q(\xx,\theta)\in\M}\left\{\int_\X\tmu(\xx)q(\xx,\theta)\log\left(\frac{q(\xx,\theta)}{p(\xx,\theta)}\right)+F(q(\xx,\theta))\right\},
\label{caticha.giffin.rule}
\end{equation}
%\label{bayes.from.entropy}
%\end{equation}
for $p,q\in\M:=\M(\X,\mho(\X),\tmu)\subseteq L_1(\X,\mho(\X),\tmu)_1^+$, $\dim\M=:n<\infty$, with parametrisation $\theta:\M(\X,\mho(\X),\tmu)\ra\Theta\subseteq\RR^n$ allowing to consider a measure space $(\Theta,\mho_{\mathrm{Borel}}(\Theta),\dd\theta)$ as well as a product measure space $(\X\times\Theta,\mho(\X\times\Theta),\tmu\times\dd\theta)$, and with constraints given by 
\begin{equation}
	F(q(\xx,\theta))=\lambda_1\left(\int_\X\tmu(\xx)\int_\Theta\dd\theta q(\xx,\theta)-1\right)+\lambda_2(\xx)\left(\int_\Theta\dd\theta q(\xx,\theta)-\dirac(\xx-b)\right),
\label{caticha.giffin.bayes.constraints}
\end{equation}
where $\lambda_1$ and $\lambda_2(\xx)$ are Lagrange multipliers, and $\dirac(\xx-b)$ is Dirac's delta at $b\in\X$. The posterior probability selected as a unique solution of this variational problem is given by
\begin{equation}      
        p_{\mathrm{new}}(\xx,\theta)=\frac{p(\xx,\theta)\ee^{\lambda_2(\xx)}}{\int_\X\tmu(\xx)\int_\Theta\dd\theta p(\xx,\theta)\ee^{\lambda_2(\xx)}},
\end{equation}
where $\lambda_2(\xx)$ is determined via
\begin{equation}
        \frac{\int_\Theta\dd\theta p(\xx,\theta)\ee^{\lambda_2(\xx)}}{\int_\X\tmu(\xx)\int_\Theta\dd\theta p(\xx,\theta)\ee^{\lambda_2(\xx)}}=\dirac(\xx-b).
\end{equation}
Hence,
\[
        p_{\mathrm{new}}(\xx,\theta)=\displaystyle\frac{p(\xx,\theta)\dirac(\xx-b)}{\int_\Theta\dd\theta p(\xx,\theta)}=\displaystyle\frac{p(\xx,\theta)\dirac(\xx-b)}{p(\xx)}=:\dirac(\xx-b)p(\theta|\xx),
\]
which leads to the Bayes--Laplace rule \eqref{Bayes.rule.eq} on $\Theta$,\footnote{More precisely, it leads to a generalisation of this rule to a domain of countably additive boolean algebras (representable by the Loomis--Sikorski theorem), which contains contains the finitary rule \eqref{Bayes.rule.eq} as a special case. While there is no universal (generic) extension of the notion of conditional probability to a measure theoretic framework, $p(\theta|\xx)$ can be interpreted as a density of a conditional measure under some conditions. %(see Footnote \ref{conditional.measures.footnote}). 
This is not problematic as long as one interprets the above derivation as a proof of backwards compatibility of entropic projections with the Bayes--Laplace framework (which is finitary anyway), and not as a method of extending this framework from finite to countably additive boolean algebras.}
\begin{equation}
        p(\theta)\mapsto p_{\mathrm{new}}(\theta)=\int_\X\tmu(\xx)\dirac(\xx-b)p(\theta|\xx)=p(\theta|b),
\label{Bayes.from.MRE} 
\end{equation}
whenever $\mu$ is such that $\int_\X\tmu(\xx)\dirac(\xx-b)h(\xx)=h(b)$ (for example, if $\tmu(\xx)=\dd\xx$). If the second constraint in \eqref{caticha.giffin.bayes.constraints} is replaced by a more general form,
\begin{equation}
	F(q(\xx,\theta))=\lambda_1\left(\int_\X\tmu(\xx)\int_\Theta\dd\theta q(\xx,\theta)-1\right)+\lambda_2(\xx)\left(\int_\Theta\dd\theta q(\xx,\theta)-f(\xx)\right),
\end{equation}
corresponding to a condition $q(\xx)=\int_\Theta\dd\theta q(\xx,\theta)=f(\xx)$ with a given probability density $f\in\M(\X,\mho(\X),\tmu)$, then the entropic projection \eqref{caticha.giffin.rule} reproduces Jeffrey's rule \eqref{Jeffrey.rule} on $\Theta$,
\begin{align}
p_{\mathrm{new}}(x,\theta)&=\frac{p(\xx,\theta)}{p(\xx)}f(\xx)=:p(\xx|\theta)f(\xx)=p(\xx|\theta)p_{\mathrm{new}}(\xx),\\
p(\theta)\mapsto p_{\mathrm{new}}(\theta)&=\int_\X\tmu(\xx)f(\xx)\frac{p(\xx,\theta)}{p(\xx)}=\int_\X\tmu(\xx)p(\theta|\xx)f(\xx)=\int_\X\tmu(\xx)p(\theta|\xx)p_{\mathrm{new}}(\xx).
\end{align}

The Bayes--Laplace rule changes information states by means of constraints imposed on the level of propositions, while the entropic projections utilise constraints imposed on probabilities. Hence, in order to recover the former rule from the maximum entropy rule one needs a very strong constraint, which forces a unique reference of probability distribution to an underlying space $\X$ of propositions: $\int_\Theta\dd\theta p_{\mathrm{new}}(\xx,\theta)=\dirac(\xx-b)$. So, while (for example) the mean value constraints can be partially dismissed by the new knowledge that is incorporated by the sequential maximum relative entropy updating (see e.g. \cite{Tribus:Rossi:1973,Giffin:Caticha:2006}), Dirac's delta constraints always remain preserved by subsequent updatings.\footnote{In the above discussion Dirac's delta constraint is applied by integration over $\Theta$ space, while usually the mean value constraints are applied by integration over $\X$ space, but nevertheless this remark holds in general.} From this perspective, Jeffrey's rule can be understood as arising due to weakening of constraints, which are allowed to carry some additional uncertainty.

More generally, Diaconis and Zabell \cite{Diaconis:Zabell:1982} have shown that, for a suitable choice of constraints, Jeffrey's rule can be derived from a constrained minimisation of any Csisz\'{a}r--Morimoto $\fff$-distance \cite{Csiszar:1963,Morimoto:1963,Ali:Silvey:1966} for a strictly convex $\fff:\,]0,\infty[\,\ra\RR$. In particular, \cite{Douven:Romeijn:2012} derived Jeffrey's rule from constrained minimisation of $D_0|_{L_1(\X,\mho(\X),\tmu)^+_1}$. In Theorem \ref{weighted.vNLu.theorem} of Section \ref{vN.Lu.section} we will derive a quantum analogue of the latter result.
%------------------------------------------------------
\section{Quantum distances and relative modular theory\label{math.section}}
%------------------------------------------------------
In this section we present the mathematical terminology that is used throughout the paper. We also briefly introduce some more advanced notions and results from the theory of operator algebras that are required to set up the mathematical background and notation for the $W^*$-algebraic part of the results presented in Sections \ref{vN.Lu.section} and \ref{transition.correlation.section}. See \cite{Kostecki:2013} for a detailed discussion of these structures and their properties.
%------------------------------------------------------
\subsection{States and weights over $W^*$-algebras}
%------------------------------------------------------
A \df{$C^*$-algebra} is a Banach space $\C$ over $\CC$ with unit $\II$ that is also an algebra over $\CC$ and is equipped with an operation $^*:\C\ra\C$ satisfying $(xy)^*=y^*x^*$, $(x+y)^*=x^*+y^*$, $x^*{}^*=x$, $(\lambda x)^*=\lambda^*x^*$, and $\n{x^*x}=\n{x}^2$, where $\lambda^*$ is a complex conjugation of $\lambda\in\CC$. A \df{$W^*$-algebra} is defined as such $C^*$-algebra that has a Banach predual. If a predual of $C^*$-algebra exists then it is unique. Given a $W^*$-algebra $\N$, we will denote its predual by $\N_\star$. Moreover, $\N_\star^+:=\{\phi\in\N_\star\mid\phi(x^*x)\geq0\;\forall x\in\N\}$, $\N^+_{\star0}:=\{\phi\in\N^+_\star\mid\omega(x^*x)=0\limp x=0\;\;\forall x\in\N\}$, $\N^+_{\star1}:=\{\phi\in\N^+_\star\mid\n{\phi}=1\}$, $\N^\sa:=\{x\in\N\mid x^*=x\}$, $\N^+:=\{x\in\N\mid\exists y\in\N\;\;x=y^*y\}$, $\Proj(\N):=\{x\in\N^\sa\mid xx=x\}$. For $\N=\BH$, $\N_\star=\schatten_1(\H):=\{x\in\BH\mid\n{x}_{\schatten_1(\H)}:=\tr(\sqrt{x^*x})<\infty\}$. If $(\X,\mho(\X),\tmu)$ is a localisable measure space, then $L_\infty(\X,\mho(\X),\tmu)$ is a commutative $W^*$-algebra, and $L_1(\X,\mho(\X),\tmu)$ is its predual. Every commutative $W^*$-algebra can be represented in this form. This way the theory of $W^*$-algebras generalises both the localisable measure theory and the theory of bounded operators over Hilbert spaces. We define a \df{statistical model} as a set $\M(\X,\mho(\X),\tmu)\subseteq L_1(\X,\mho(\X),\tmu)^+$, where $(\X,\mho(\X),\tmu)$ is a localisable measure space. The elements of $L_1(\X,\mho(\X),\tmu)^+_1$ are Radon--Nikod\'{y}m quotients of probability measures (dominated by $\tmu$) with respect to $\tmu$, and are called \df{probability densities}. If $\N$ is a $W^*$-algebra, then we define a \df{quantum model} as a set $\M(\N)\subseteq\N^+_{\star}$. The elements of $\N^+_{\star}$ will be called \df{quantum states} or (just) \df{states}.

A \df{weight} on a $W^*$-algebra $\N$ is defined as a function $\omega:\N^+\ra[0,+\infty]$ such that $\omega(0)=0$, $\omega(x+y)=\omega(x)+\omega(y)$, and $\lambda\geq0\limp\omega(\lambda x)=\lambda\omega(x)$, with the convention $0\cdot(+\infty)=0$. A weight is called: \df{faithful} if{}f $\omega(x)=0\limp x=0$; \df{finite} if{}f $\omega(\II)<\infty$; \df{semi-finite} if{}f a left ideal in $\N$ given by
\begin{equation}\rpktarget{nnn}
        \nnn_\phi:=\{x\in\N\mid\phi(x^*x)<\infty\}
\end{equation}
is weakly-$\star$ dense in $\N$; \df{trace} if{}f $\omega(xx^*)=\omega(x^*x)\;\forall x\in\N$; \df{normal} if{}f $\omega(\sup\{x_\iota\})=\sup\{\omega(x_\iota)\}$ for any uniformly bounded increasing net $\{x_\iota\}\subseteq\N^+$. A space of all normal semi-finite weights on a $W^*$-algebra $\N$ is denoted $\W(\N)$, while the subset of all faithful elements of $\W(\N)$ is denoted $\W_0(\N)$. Every state is a finite normal weight, and every faithful state is a finite faithful normal state, hence the diagram
\begin{equation}
\xymatrix{
        \N^+_{\star0}
        \ar@{^{(}->}[r]
        \ar@{^{(}->}[d]&
        \W_0(\N)
        \ar@{^{(}->}[d]\\
        \N^+_\star
        \ar@{^{(}->}[r]&
        \W(\N)
}
\label{Wstar.states.weights.comm}
\end{equation}
commutes. For $\psi\in\W(\N)$,
\begin{equation}
        \supp(\psi)=\II-\sup\{P\in\Proj(\N)\mid\psi(P)=0\}.
\end{equation}
For $\omega,\phi\in\N_\star^+$ we will write $\rpktarget{ll}\omega\ll\phi$ if{}f $\supp(\omega)\leq\supp(\phi)$.\footnote{If $\N=\BH$ and $\omega=\tr(\rho_\omega\cdot)$ for $\rho_\omega\in\schatten_1(\H)^+$, then $\supp(\omega)=\ran(\rho_\omega)$, so for any $\phi=\tr(\rho_\phi\cdot)$ with $\rho_\phi\in\schatten_1(\H)^+$ one has $\omega\ll\phi$ if{}f $\ran(\rho_\omega)\subseteq\ran(\rho_\phi)$.} An element $\omega\in\N^{\banach+}$ is faithful if{}f $\supp(\omega)=\II$. If $\phi$ is a normal weight on a $W^*$-algebra $\N$ (which includes $\omega\in\N^+_\star$ as a special case), then the restriction of $\phi$ to a \df{reduced} $W^*$-algebra,
\begin{equation}
        \N_{\supp(\phi)}:=\{x\in\N\mid\supp(\phi)x=x=x\,\supp(\phi)\}=\bigcup_{x\in\N}\{\supp(\phi)x\,\supp(\phi)\},
\end{equation}
is a faithful normal weight (respectively, an element of $(\N_{\supp(\phi)})^+_{\star0}$). If $\phi$ is semi-finite, then $\phi|_{\N\supp(\phi)}\in\W_0(\N_{\supp(\phi)})$. Hence, given $\psi\in\W(\N)$ and $P\in\Proj(\N)$, $P=\supp(\psi)$ if{}f $\psi|_{\N_P}\in\W_0(\N_P)$ and $\psi(P)=\psi(PxP)\;\forall x\in\N^+$. In particular, for $\omega,\phi\in\N^+_\star$ and $\omega\ll\phi$, we have $\omega|_{\N_{\supp(\phi)}}\in\W_0(\N_{\supp(\phi)})$.

A \df{representation} of a $C^*$-algebra $\C$ is defined as a pair $(\H,\pi)$ of a Hilbert space $\H$ and a $*$-homomorphism $\rpktarget{pi}\pi:\C\ra\BH$. A representation $\pi:\C\ra\BH$ is called: \df{nondegenerate} if{}f $\{\pi(x)\xi\mid (x,\xi)\in\C\times\H\}$ is dense in $\H$; \df{normal} if{}f it is continuous with respect to the weak-$\star$ topologies of $\C$ and $\BH$; \df{faithful} if{}f $\ker(\pi)=\{0\}$. An element $\xi\in\H$ is called \df{cyclic} for a $C^*$-algebra $\C\subseteq\BH$ if{}f $\C\xi:=\bigcup_{x\in\C}\{x\xi\}$ is norm dense in $\BH$. A representation $\pi:\C\ra\BH$ of a $C^*$-algebra $\C$ is called \df{cyclic} if{}f there exists $\Omega\in\H$ that is cyclic for $\pi(\C)$. According to the Gel'fand--Na\u{\i}mark--Segal theorem \cite{Gelfand:Naimark:1943,Segal:1947:irreducible} for every pair $(\C,\omega)$ of a $C^*$-algebra $\C$ and $\omega\in\C^{\banach+}$ there exists a triple $(\H_\omega,\pi_\omega,\Omega_\omega)$ of a Hilbert space $\H_\omega$ and a cyclic representation $\pi_\omega:\C\ra\BH$ with a cyclic vector $\Omega_\omega\in\H_\omega$\rpktarget{h.omega}, and this triple is unique up to unitary equivalence. It is constructed as follows. For a $C^*$-algebra $\C$ and $\omega\in\C^{\banach+}$, one defines the scalar form $\rpktarget{scal.omega}\s{\cdot,\cdot}_\omega$ on $\C$,
\begin{equation}
        \s{x,y}_\omega := \omega(x^*y)\;\;\forall x,y\in\C,
\end{equation}
and the \df{Gel'fand ideal} 
\begin{equation}
        \I_\omega:=\{x\in\C\mid\omega(x^*x)=0\}=\{x\in\C\mid\omega(x^*y)=0\;\forall y\in\C\},
\end{equation}
which is a left ideal of $\C$, closed in the norm topology (it is also closed in the weak-$\star$ topology if $\omega\in\C^{\banach+}_\star$). The form $\s{\cdot,\cdot}_\omega$ is hermitean on $\C$ and it becomes a scalar product $\s{\cdot,\cdot}_\omega$ on $\C/\I_\omega$. The Hilbert space $\H_\omega$ is obtained by the completion of $\C/\I_\omega$ in the topology of norm generated by $\s{\cdot,\cdot}_\omega$. Consider the morphisms\rpktarget{rep.omega}
\begin{align}
        [\cdot]_\omega:\C\ni x&\longmapsto [x]_\omega\in\C/\I_\omega,\\
        \pi_\omega(y):[y]_\omega&\longmapsto[xy]_\omega.
\end{align}
The element $\omega\in\C^{\banach+}$ is uniquely represented in terms of $\H_\omega$ by the vector $[\II]_\omega=:\Omega_\omega\in\H_\omega$, which is cyclic for $\pi_\omega(\C)$ and satisfies $\n{\Omega_\omega}=\n{\omega}$. Hence
\begin{equation}
        \omega(x)=\s{\Omega_\omega,\pi_\omega(x)\Omega_\omega}_\omega
        \;\;\forall x\in\C,
\end{equation}
An analogue of this theorem for weights follows the similar construction, but lacks cyclicity. If $\N$ is a $W^*$-algebra, and $\omega$ is a weight on $\N$, then there exists the Hilbert space $\H_\omega$, defined as the completion of $\nnn_\omega/\ker(\omega)$ in the topology of a norm generated by the scalar product $\s{\cdot,\cdot}_{\omega}:\nnn_\omega\times\nnn_\omega\ni(x,y)\mapsto\omega(x^*y)\in\CC$,
\begin{equation}\rpktarget{h.omega.zwei}
        \H_\omega:=\overline{\nnn_\omega/\ker(\omega)}=\overline{\{x\in\N\mid\omega(x^*x)<\infty\}/\{x\in\N\mid\omega(x^*x)=0\}}=\overline{\nnn_\omega/\I_\omega},
\end{equation}
and there exist the maps\rpktarget{rep.omega.zwei}\rpktarget{pi.omega.zwei} 
\begin{align}
        [\cdot]_\omega:\nnn_\omega\ni x&\mapsto [x]_\omega\in\H_\omega,
        \label{GNS.class.weight}\\
        \pi_\omega:\N\ni x&\mapsto([y]_\omega\mapsto[xy]_\omega)\in\BBB(\H_\omega),
        \label{GNS.rep.weight}
\end{align}
such that $[\cdot]_\omega$ is linear, $\ran([\cdot]_\omega)$ is dense in $\H_\omega$, and $(\H_\omega,\pi_\omega)$ is a representation of $\N$. If $\omega\in\W(\N)$ then $(\H_\omega,\pi_\omega)$ is nondegenerate and normal. It is also faithful if $\omega\in\W_0(\N)$. The \df{commutant} of a subalgebra $\N$ of any algebra $\C$ is defined as 
\begin{equation}\rpktarget{comm}
        \N^\comm:=\{y\in\C\mid xy=yx\;\forall x\in\N\},
\end{equation}
while the \df{center} of $\N$ is defined as $\rpktarget{zentr}\zentr_\N:=\N\cap\N^\comm$. A unital $*$-subalgebra $\N$ of an algebra $\BH$ is called the \df{von Neumann algebra} \cite{vonNeumann:1930:algebra,Murray:vonNeumann:1936} if{}f $\N=\N^\comm{}^\comm$. An image $\pi(\N)$ of any representation $(\H,\pi)$ of a $W^*$-algebra $\N$ is a von Neumann algebra if{}f $\pi$ is normal and nondegenerate. 

A subspace $\D\subseteq\H$ of a complex Hilbert space $\H$ is called a \df{cone} if{}f $\lambda\xi\in\D$ $\forall\xi\in\D$ $\forall\lambda\geq0$. A cone $\D\subseteq\H$ is called \df{self-polar} if{}f
\begin{equation}
        \D=\{\zeta\in\H\mid\s{\xi,\zeta}_\H\geq0\;\forall\xi\in\D\}.
\end{equation}
Every self-polar cone $\D\subseteq\H$ is pointed ($\D\cap(-\D)=\{0\}$), spans linearly $\H$ ($\Span_\CC\D=\H$), and determines a unique conjugation\footnote{A linear operator $J:\dom(J)\ra\H$, where $\dom(J)\subseteq\H$, is called a \df{conjugation} if{}f it is antilinear, isometric, and involutive ($J^2=\II$).} $J$ in $\H$ such that $J\xi=\xi\;\forall\xi\in\H$ \cite{Haagerup:1973}, 
%\begin{equation}
%J(\xi+\ii\zeta):=\xi-\ii\zeta\;\;\forall\xi,\zeta\in\D-\D.
%\end{equation}
as well as a partial order on the set $\H^\sa:=\{\xi\in\H\mid J\xi=\xi\}$ given by,
\begin{equation}
        \xi\leq\zeta\;\iff\;\xi-\zeta\in\D\;\;\forall\xi,\zeta\in\H^\sa.
\end{equation}
If $\N$ is a $W^*$-algebra, $\H$ is a Hilbert space, $\stdcone\subseteq\H$ is a self-polar cone, $\pi$ is a nondegenerate faithful normal representation of $\N$ on $\H$, and $J$ is conjugation on $\H$, then the quadruple $(\H,\pi,J,\stdcone)$ is called \df{standard representation} of $\N$ and $(\H,\pi(\N),J,\stdcone)$ is called \df{standard form} of $\N$ if{}f the conditions \cite{Haagerup:1975:standard:form}
\begin{equation}
	J\pi(\N)J=\pi(\N)^\comm,\;\;
	\xi\in\stdcone\limp J\xi=\xi,\;\;
	\pi(x)J\pi(x)J\stdcone\subseteq\stdcone,\;\;
	\pi(x)\in\zentr_{\pi(\N)}\limp J\pi(x)J=\pi(x)^*.
\end{equation}
hold. For any standard representation
\begin{equation}
        \forall\phi\in\N_\star^+\;\exists !\xi_\pi(\phi)\in\stdcone\;\forall x\in\N\;\;\phi(x)=\s{\xi_\pi(\phi),\pi(x)\xi_\pi(\phi)}_\H%\;\;[\mbox{and }\n{\xi_\phi}=1???],
\label{std.vector.representative}
\end{equation}
holds. The map $\xi_\pi:\N^+_\star\ra\stdcone$ is order preserving. If $\N=\BBB(\K)$ for some Hilbert space $\K$, then the standard representation Hilbert space is given by the space $\K\otimes\K^\banach\iso\schatten_2(\K):=\{x\in\BBB(\K)\mid\sqrt{\tr_\K(x^*x)}<\infty\}$ of Hilbert--Schmidt operators equipped with an inner product $\s{x,y}:=\tr_\K(x^*y)$, where $(\cdot)^\banach$ denotes Banach dual space. In this case, the map \eqref{std.vector.representative} reads $\xi_\pi:\schatten_1(\H)^+\ni\rho\mapsto\rho^{1/2}\in\schatten_2(\H)^+$.
%------------------------------------------------------
\subsection{Relative modular theory}
%------------------------------------------------------
For a given $W^*$-algebra $\N$, $\phi\in\W(\N)$, and $\omega\in\W_0(\N)$ the map
\begin{equation}
        R_{\phi,\omega}:[x]_\omega\mapsto[x^*]_\phi\;\;\forall x\in\nnn_\omega\cap\nnn_\phi^*
        \label{relative.modular.weights}
\end{equation}
is a densely defined, closable antilinear operator. Its closure admits a unique polar decomposition
\begin{equation}
        \overline{R}_{\phi,\omega}=J_{\phi,\omega}\Delta^{1/2}_{\phi,\omega},
\end{equation}
where $\rpktarget{JREL}J_{\phi,\omega}$ is a conjugation operator, called \df{relative modular conjugation}, while $\rpktarget{DELTAREL}\Delta_{\phi,\omega}$ is a positive self-adjoint operator on $\dom(\Delta_{\phi,\omega})\subseteq\H_\omega$ with $\supp(\Delta_{\phi,\omega})=\supp(\phi)\H_\omega$, called a \df{relative modular operator} \cite{Araki:1973:relative:hamiltonian,Connes:1974,Digernes:1975}. The relative modular operators allow to define a one-parameter family of partial isometries in $\supp(\phi)\N$, called \df{Connes' cocycle} \cite{Connes:1973:classification},
\begin{equation}
\RR\ni t\mapsto\Connes{\phi}{\omega}{t}:=\Delta^{\ii t}_{\phi,\psi}\Delta^{-\ii t}_{\omega,\psi}=\Delta^{\ii t}_{\phi,\omega}\Delta^{-\ii t}_{\omega,\omega}\in\supp(\phi)\N,
\label{Connes.cocycle.def}
\end{equation}
where $\psi\in\W_0(\N)$ is arbitrary, so it can be set equal to $\omega$.  As shown by Araki and Masuda \cite{Araki:Masuda:1982} (see also \cite{Masuda:1984}), the definition of $\Delta_{\phi,\omega}$ and $\Connes{\phi}{\omega}{t}$ can be further extended to the case when $\phi,\omega\in\W(\N)$, by means of a densely defined closable antilinear operator
\begin{equation}
        R_{\phi,\omega}:[x]_\omega+(\II-\supp(\overline{[\nnn_\phi]_\omega}))\zeta\mapsto\supp(\omega)[x^*]_\phi\;\;\forall x\in\nnn_\omega\cap\nnn_\phi^*\;\forall\zeta\in\H,
\label{relative.modular.for.normal.weights}
\end{equation}
where $(\H,\pi,J,\stdcone)$ is a standard representation of a $W^*$-algebra $\N$, and $\H_\phi\subseteq\H\supseteq\H_\omega$. For $\phi,\omega\in\N_\star^+$ this becomes a closable antilinear operator \cite{Araki:1977:relative:entropy:II,Kosaki:1980:PhD}
\begin{equation}
        R_{\phi,\omega}:x\xi_\pi(\omega)+\zeta\mapsto\supp(\omega)x^*\xi_\pi(\phi)\;\;\forall x\in\pi(\N)\;\forall\zeta\in(\pi(\N)\xi_\pi(\omega))^\bot,
        \label{relative.modular.for.normal.states}
\end{equation}
acting on a dense domain $(\pi(\N)\xi_\pi(\omega))\cup(\pi(\N)\xi_\pi(\omega))^\bot\subseteq\H$, where $(\pi(\N)\xi_\pi(\omega))^\bot$ denotes a complement of the closure in $\H$ of the linear span of the action $\pi(\N)$ on $\xi_\pi(\omega)$. In both cases, the relative modular operator is determined by the polar decomposition of the closure $\overline{R}_{\phi,\omega}$ of $R_{\phi,\omega}$,\rpktarget{DELTA.REL.ZWEI}
\begin{equation}
        \Delta_{\phi,\omega}:=R^*_{\phi,\omega}\overline{R}_{\phi,\omega}.
\label{RR.Delta.relative}
\end{equation}
If \eqref{relative.modular.for.normal.weights} or \eqref{relative.modular.for.normal.states} is used instead of \eqref{relative.modular.weights}, then the formula \eqref{Connes.cocycle.def} has to be replaced by\rpktarget{CONNES.COC.ZWEI}
\begin{equation}
        \RR\ni t\mapsto\Connes{\phi}{\omega}{t}\supp(\overline{[\nnn_\phi]_\psi}):=\Delta^{\ii t}_{\phi,\psi}\Delta^{-\ii t}_{\omega,\psi},
\label{Connes.for.ns.weights}
\end{equation}
and $\Connes{\phi}{\omega}{t}$ is a partial isometry in $\supp(\phi)\N\supp(\omega)$ whenever $[\supp(\phi),\supp(\omega)]=0$.
%------------------------------------------------------
\subsection{Quantum distances}
%------------------------------------------------------
Given any set $X$, a \df{distance} is defined as a map $D:X\times X\ra[0,\infty]$ such that $D(x,y)=0$ $\iff$ $x=y$. A \df{relative entropy} is a map $\entropy:X\times X\ra[-\infty,0]$ such that $(-\entropy)$ is a distance. A distance is called: \df{bounded} if{}f $\ran(D)=\RR^+$; \df{symmetric} if{}f $D(x,y)=D(y,x)$; \df{metrical} \cite{Frechet:1906} if{}f is it bounded, symmetric and satisfies \df{triangle inequality}
\begin{equation}
	D(x,y)\leq D(x,y)+D(y,z)\;\;\forall x,y,z\in X.
\end{equation}
We will use the symbol $d$ instead of $D$ to denote metrical distances. A distance on a statistical model will be called a \df{statistical distance}, while a distance on a quantum model will be called a \df{quantum distance}. A term \df{information distance} with be used to refer to any of them. Let $\Pow(X)$ denote a powerset of $X$. If $K\subseteq X$ is such that the map
\begin{equation}
	\PPP^D_K:X\ni\phi\mapsto\arginff{\omega\in K}{D(\omega,\phi)}\in\Pow(K)
\label{entropic.preprojection.map}
\end{equation}
is a singleton (one-element set, $\{*\}$) for all $\phi\in X$, then we call \eqref{entropic.preprojection.map} an \df{entropic projection}\footnote{So, by definition, every entropic projection $\PPP^D_K$ is a unique maximiser of a relative entropy $-D$.}.

For any $W^*$-algebra $\N$, and $\phi,\omega\in\N^+_\star$ the following formulas define quantum distances on $\N^+_\star$,
\begin{equation}
	D_0(\omega,\phi):=
	        \left\{
                \begin{array}{ll}
                        (\omega-\phi)(\II)-\s{\xi_\pi(\phi),\log(\Delta_{\omega,\phi})\xi_\pi(\phi)}_\H&:\omega\ll\phi\\
                        +\infty&:\mbox{otherwise},
                \end{array}
        \right.	
\end{equation}
\begin{equation}
	D_1(\omega,\phi):=
	        \left\{
                \begin{array}{ll}
                        (\phi-\omega)(\II)-\s{\xi_\pi(\omega),\log(\Delta_{\omega,\phi})\xi_\pi(\omega)}_\H&:\omega\ll\phi\\
                        +\infty&:\mbox{otherwise},
                \end{array}
        \right.	
\end{equation}
\begin{equation}
	D_{1/2}(\psi,\phi):=
	        \left\{
                \begin{array}{ll}
                        2(\phi+\omega)(\II)-4\s{\xi_\pi(\phi),\Delta^{1/2}_{\omega,\phi}\xi_\pi(\phi)}_\H&:\omega\ll\phi\\
                        +\infty&:\mbox{otherwise}.
                \end{array}
        \right.
\label{D.one.half.distance}
\end{equation}
Hence,
\begin{equation}
	\phi\ll\omega\ll\phi\limp D_0(\omega,\phi)=D_1(\phi,\omega).
\end{equation}
All above examples are special cases of a family of quantum $\gamma$-distances $D_\gamma$ with $\gamma\in[0,1]$, see \cite{Hasegawa:1993,Ojima:2004,Jencova:2005,Kostecki:2011:OSID}. A special case of $D_1$ is the \df{Araki distance} \cite{Araki:1976:relham:relent,Araki:1976:relative:entropy:I,Araki:1977:relative:entropy:II} 
\begin{equation}
        D_1|_{\N^+_{\star1}}(\omega,\phi)=
        \left\{
                \begin{array}{ll}
                        -\s{\xi_\pi(\omega),\log(\Delta_{\phi,\omega})\xi_\pi(\omega)}_\H&:\omega\ll\phi\\
                        +\infty&:\mbox{otherwise}.
                \end{array}
        \right.
\label{Araki.distance}
\end{equation}
If $D_1|_{\N^+_{\star1}}(\omega,\phi)<\infty$, then \eqref{Araki.distance} takes the form \cite{Petz:1985:properties,Petz:1986:properties}
\begin{equation}
D_1|_{\N^+_{\star1}}(\omega,\phi)
=\left\{
\begin{array}{ll}
\ii\lim_{t\ra^+ 0}\frac{\omega}{t}\left(\Connes{\phi}{\omega}{t}-\II\right)
&:\;\omega\ll\phi\\
+\infty&:\;\mbox{otherwise}.
\end{array}
\right.
\label{Araki.Petz.distance}
\end{equation}
For a semi-finite $\N$, normal faithful semi-finite trace $\tau$ on $\N$ and $\rho_\phi$ and $\rho_\omega$ defined by $\phi(\cdot)=\tau(\rho_\phi\;\cdot)$ and $\omega(\cdot)=\tau(\rho_\omega\;\cdot)$, the Araki distance \eqref{Araki.distance} turns to the \df{Umegaki distance} \cite{Umegaki:1961,Umegaki:1962} (cf. also \cite{Araki:1976:relham:relent,Araki:1976:positive:cone})
\begin{equation}
        D_1|_{\N^+_{\star1}}(\omega,\phi)=\tau(\rho_\omega(\log\rho_\omega-\log\rho_\phi))=\tau\left(\rho_\omega^{1/2}(\log\Delta_{\omega,\phi})\rho_\omega^{1/2}\right)
\label{Umegaki.Araki.distance}
\end{equation}
if $\omega\ll\phi$, and $D_1|_{\N^+_{\star1}}(\omega,\phi)=+\infty$ otherwise. A special, but definitely most popular, case of \eqref{Umegaki.Araki.distance} is obtained for $\N=\BH$ and $\tau=\tr$. 

For a commutative $\N\iso L_\infty(\X,\mho(\X),\tmu)$, where $(\X,\mho(\X),\tmu)$ is any localisable measure space such that $\tmu_\phi\ll\tmu$ and $\tmu_\omega\ll\tmu$ where $\psi(x)=:\int_\X\tmu_\psi(\xx)x(\xx)$ $\forall x\in\N^+$ and $\psi\in\{\phi,\omega\}$, the Araki distance \eqref{Araki.distance} takes a form of the \df{Wald--Good--Kullback--Leibler distance} \cite{Wald:1947,Good:1950,Kullback:Leibler:1951},
\begin{equation}
        D_1|_{L_1(\X,\mho(\X),\tmu)^+_1}(\omega,\phi)=
        \left\{
                \begin{array}{ll}
                        \int\tmu_\omega\log\frac{\tmu_\omega}{\tmu_\phi}&:\tmu_\omega\ll\tmu_\phi\\
                        +\infty&:\mbox{otherwise}.
                \end{array}
        \right.
\label{WGKL.distance}
\end{equation}
Under these conditions, and assuming $\tmu_\phi\ll\tmu_\omega$, one has 
\begin{equation}
	D_1|_{L_1(\X,\mho(\X),\tmu)^+_1}(\omega,\phi)=D_0|_{L_1(\X,\mho(\X),\tmu)^+_1}(\phi,\omega).
\end{equation}

Among above distances, only $D_{1/2}$ is symmetric, and none of them is metrical. Metrical quantum distances will be discussed in Section \ref{transition.correlation.section}.
%------------------------------------------------------
\section{L\"{u}ders' and quantum Jeffrey's rules as entropic projections\label{vN.Lu.section}}
%------------------------------------------------------
\begin{definition}
If $\rho\in\schatten_1(\H)^+$, $I$ is a countable set, and $\{P_i\mid i\in I\}\subseteq\Proj(\BH)$ satisfies $\sum_{i\in I}P_i=\II\in\BH$ and $P_iP_j=\dirac_{ij}P_i$ $\forall i,j\in I$, then the \df{weak L\"{u}ders rule} is defined as a map \cite{Lueders:1951,Schwinger:1959}
\begin{equation}
	\schatten_1(\H)^+\ni\rho\mapsto\sum_{i\in I}P_i\rho P_i\in\schatten_1(\H)^+.
\label{weak.Lueders.rule}
\end{equation}
If $P\in\Proj(\BH)$, then the \df{strong L\"{u}ders rule} is defined as a map
\begin{equation}
	\schatten_1(\H)^+_1\ni\rho\mapsto\frac{P\rho P}{\tr(P\rho P)}\in\schatten_1(\H)^+_1,
\label{strong.Lueders.rule}
\end{equation}
with domain restricted by the condition $\tr(P\rho)\neq0$. The \df{semi-strong L\"{u}ders rule}, called sometimes a `partial collapse', is defined as a map
\begin{equation}
	\schatten_1(\H)^+_1\ni\rho\mapsto\frac{\sum_{j\in J} P_j\rho P_j}{\sum_{j\in J}\tr(P_j\rho P_j)}\in\schatten_1(\H)^+_1,
\label{semistrong.Lueders.rule}
\end{equation}
where $J$ is a subset of a countable set $I$ corresponding to an orthogonal decomposition $\sum_{i\in I}P_i=\II\in\BH$, and the domain in \eqref{semistrong.Lueders.rule} is restricted by a condition $\sum_{j\in J}\tr(P_j\rho)\neq0$.
\end{definition}

\begin{remark}
If $\rho$, $I$, and $\{P_i\mid i\in I\}$ are such as in \eqref{weak.Lueders.rule}, $J\subseteq I$, and the condition
\begin{equation}
	\tr(P_i\rho)\neq0\;\;\iff\;\;i\in J\;\;\;\forall i\in I
\label{strong.from.weak.condition}
\end{equation}
holds, then \eqref{weak.Lueders.rule} can be written as
\begin{equation}
	\schatten_1(\H)^+\ni\rho\mapsto\sum_{j\in J}\frac{P_j\rho P_j}{\tr(\rho P_j)}\tr(\rho P_j)\in\schatten_1(\H)^+.
\label{weak.goes.strong.equation}
\end{equation}
In particular, if $J=\{*\}$, then \eqref{weak.goes.strong.equation} turns to the direct extension of strong L\"{u}ders' rule \eqref{strong.Lueders.rule} to $\schatten_1(\H)^+$, which coincides with \eqref{strong.Lueders.rule} on $\schatten_1(\H)^+_1$.
\end{remark}

\begin{remark}
In the case when $\rho^2=\rho$, then $\exists\xi\in\H$ such that $\rho=P_{\Span(\xi)}$ (i.e., $\rho$ is a projector onto a closed one-dimensional linear subspace of $\H$ spanned by $\xi$), and the strong L\"{u}ders rule takes the form of the \df{strong von Neumann rule} \cite{vonNeumann:1932:grundlagen}, called also a `state vector reduction',
\begin{equation}
	\H\ni\xi\mapsto\frac{P\xi}{\s{\xi,P\xi}^{\frac{1}{2}}}\in\H.
\label{strong.vN.rule}
\end{equation}
\end{remark}

\begin{definition}\label{quantum.Jeffrey.rule.definition}
We define \df{quantum Jeffrey's rule} as a map
\begin{equation}
	\schatten_1(\H)^+_1\ni\rho\mapsto\rho_{\mathrm{new}}:=\sum_{i=1}^n\frac{P_i\rho P_i}{\tr(\rho P_i)}\lambda_i\in\schatten_1(\H)^+_1,
\label{quantum.Jeffrey.rule}
\end{equation}
where $n\in\NN$, $\{P_i\}_{i=1}^n\subseteq\Proj(\BH)$, $\sum_{i=1}^nP_i=\II\in\BH$, and $\tr(\rho P_i)\neq0$, $P_iP_j=\dirac_{ij}P_i$, $\lambda_i=\tr\left(\rho_{\mathrm{new}}P_i\right)$ $\forall i,j\in\{1,\ldots,n\}$.
\end{definition}

\begin{remark}\label{strongLuedersfromqJr.remark}
In the case when an orthogonal decomposition of $\II\in\BH$ is given by the set $\{P, \II-P\}$, and $\tr(\rho_{\mathrm{new}}(\II-P))=0$, then \eqref{quantum.Jeffrey.rule} reduces to \eqref{strong.Lueders.rule}. The direct analogy between this property and the conditions under which Jeffrey's rule \eqref{Jeffrey.rule} reduces to the Bayes--Laplace rule \eqref{Bayes.rule.eq}, together with the analogy between \eqref{quantum.Jeffrey.rule} and \eqref{Jeffrey.rule}, justify the name we gave to \eqref{quantum.Jeffrey.rule}.
\end{remark}

\begin{proposition}
Definition \eqref{quantum.Jeffrey.rule} of $\rho_{\mathrm{new}}$ is equivalent to a condition
\begin{equation}
	\frac{\tr(\rho_{\mathrm{new}}P_iP)}{\tr(\rho_{\mathrm{new}}P_i)}=\frac{\tr(\rho P_iP)}{\tr(\rho P_i)}\;\;\;\forall P\in\Proj(\BH)\;\mbox{such that}\;[P,P_i]=0\;\;\forall i\in\{1,\ldots,n\},
\label{equivalent.quantum.Jeffrey.condition}
\end{equation}
with $[\rho_{\mathrm{new}},P_i]=0$ $\forall i\in\{1,\ldots,n\}$.
\end{proposition}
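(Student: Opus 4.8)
The plan is to prove the two implications between Definition \ref{quantum.Jeffrey.rule.definition} and the conjunction of \eqref{equivalent.quantum.Jeffrey.condition} with $[\rho_{\mathrm{new}},P_i]=0$ separately, using only the orthogonality relations $P_iP_j=\dirac_{ij}P_i$, $\sum_iP_i=\II$, cyclicity of the trace, and the nondegeneracy of the $\schatten_1(\H)$--$\BH$ pairing. For the forward direction, suppose $\rho_{\mathrm{new}}$ has the form \eqref{quantum.Jeffrey.rule}. First I would sandwich it by $P_j$: using $P_jP_i=\dirac_{ij}P_i$ this gives $P_j\rho_{\mathrm{new}}P_j=\frac{\lambda_j}{\tr(\rho P_j)}P_j\rho P_j$, and summing over $j$ recovers $\rho_{\mathrm{new}}=\sum_jP_j\rho_{\mathrm{new}}P_j$; hence $P_i\rho_{\mathrm{new}}=P_i\rho_{\mathrm{new}}P_i=\rho_{\mathrm{new}}P_i$, i.e.\ $[\rho_{\mathrm{new}},P_i]=0$. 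Then, for $P\in\Proj(\BH)$ with $[P,P_i]=0$ for all $i$, the identities $P_iPP_j=\dirac_{ij}P_iP$ and $P_iPP_i=PP_i$ together with cyclicity collapse $\tr(\rho_{\mathrm{new}}P_iP)$ to $\frac{\lambda_i}{\tr(\rho P_i)}\tr(\rho P_iP)$; taking $P=\II$ yields $\tr(\rho_{\mathrm{new}}P_i)=\lambda_i$ (consistent with the defining constraint $\lambda_i=\tr(\rho_{\mathrm{new}}P_i)$), and dividing the two relations gives \eqref{equivalent.quantum.Jeffrey.condition}.

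For the converse, assume $[\rho_{\mathrm{new}},P_i]=0$ for all $i$ together with \eqref{equivalent.quantum.Jeffrey.condition}. The commutation hypothesis and $\sum_iP_i=\II$ immediately force the block-diagonal form $\rho_{\mathrm{new}}=\sum_iP_i\rho_{\mathrm{new}}P_i$, since $P_i\rho_{\mathrm{new}}P_j=\rho_{\mathrm{new}}P_iP_j=0$ for $i\neq j$; note this step genuinely needs the commutation condition, as \eqref{equivalent.quantum.Jeffrey.condition} alone does not constrain the off-diagonal blocks. Next I would restrict the test projections in \eqref{equivalent.quantum.Jeffrey.condition} to those $P$ with $P\leq P_i$ (equivalently $P=P_iPP_i$): any such $P$ automatically satisfies $[P,P_j]=0$ for every $j$ (for $j\neq i$ one has $PP_j=PP_iP_j=0=P_jP$), so it is admissible, and on it \eqref{equivalent.quantum.Jeffrey.condition} reads $\tr(\rho_{\mathrm{new}}P)/\tr(\rho_{\mathrm{new}}P_i)=\tr(\rho P)/\tr(\rho P_i)$. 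Identifying $\BBB(P_i\H)$ with $P_i\BH P_i$, this says precisely that the positive trace-class operators $P_i\rho_{\mathrm{new}}P_i$ and $P_i\rho P_i$ agree on every projection of $\BBB(P_i\H)$, up to the common scalar factor $\tr(\rho_{\mathrm{new}}P_i)/\tr(\rho P_i)$, whose denominator is nonzero by the standing hypothesis $\tr(\rho P_i)\neq0$.

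The remaining, and main, step is to upgrade this agreement on projections to operator equality. Since finite real linear combinations of projections below $P_i$ are norm-dense in $\BBB(P_i\H)^\sa$ (spectral theorem) and $\tr(X\,\cdot\,)$ is norm-continuous for $X$ trace-class, linearity propagates the proportionality from projections to all $A\in\BBB(P_i\H)$, and nondegeneracy of the predual pairing then gives $P_i\rho_{\mathrm{new}}P_i=\frac{\tr(\rho_{\mathrm{new}}P_i)}{\tr(\rho P_i)}P_i\rho P_i$. Summing over $i$ and setting $\lambda_i:=\tr(\rho_{\mathrm{new}}P_i)$ recovers \eqref{quantum.Jeffrey.rule}, while $\sum_i\lambda_i=\tr(\rho_{\mathrm{new}})=1$ and $\lambda_i=\tr(\rho_{\mathrm{new}}P_i)$ hold automatically, so $\rho_{\mathrm{new}}$ meets Definition \ref{quantum.Jeffrey.rule.definition}. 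I expect the only real subtleties to be this density/duality argument and a short remark disposing of the degenerate case $\lambda_i=0$ (where $P_i\rho_{\mathrm{new}}P_i=0$ and both sides of \eqref{equivalent.quantum.Jeffrey.condition} indexed by that $i$ vanish trivially); the rest is bookkeeping with the orthogonality relations.
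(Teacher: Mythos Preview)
Your proof is correct and follows essentially the same strategy as the paper's: the forward direction is routine, and for the converse you use $[\rho_{\mathrm{new}},P_i]=0$ to block-diagonalise, then apply condition \eqref{equivalent.quantum.Jeffrey.condition} together with a density/duality argument to identify each block. The only difference is organisational: the paper tests against a single self-adjoint $x$ commuting with all $P_i$ and sums over blocks at once, whereas you work block-by-block with projections $P\leq P_i$; your version has the advantage of making explicit the passage from projections to general operators (via spectral approximation and norm-continuity of the trace pairing), a step the paper's chain of equalities uses but does not spell out when it replaces the projection $P$ in \eqref{equivalent.quantum.Jeffrey.condition} by $x$. One small correction: in the degenerate case $\lambda_i=\tr(\rho_{\mathrm{new}}P_i)=0$ the left side of \eqref{equivalent.quantum.Jeffrey.condition} is $0/0$ rather than ``vanishing trivially'', while the right side need not vanish, so the equivalence as literally stated only makes sense when all $\lambda_i>0$; this is a quirk of the proposition's formulation rather than of your argument, and the paper does not address it either.
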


\begin{proof}
An implication from \eqref{quantum.Jeffrey.rule} to \eqref{equivalent.quantum.Jeffrey.condition} is easy. In the opposite direction, let us consider an arbitrary $x\in\BH^\sa$ such that $[x,P_i]=0$ $\forall i\in\{1,\ldots,n\}$. For an arbitrary countable set $I$ the condition $[P_i,\rho_{\mathrm{new}}]=0$ $\forall i\in I$ is equivalent to $\rho_{\mathrm{new}}=\sum_{i\in I}P_i\rho_{\mathrm{new}}P_i$ and to $[\rho_{\mathrm{new}},x]=0$, where $x=\sum_{i\in I}\lambda_iP_i\in\BH^\sa$ with arbitrary $\{\lambda_i\in\RR\mid i\in I\}$. See \cite{Herbut:1969} for a clear discussion of these conditions and their equivalence. Hence,
\begin{align}
\tr(\rho_{\mathrm{new}}x)&=\sum_{i=1}^n\tr(P_i\rho_{\mathrm{new}}P_ix)=\sum_{i=1}^n\tr(\rho_{\mathrm{new}}P_ixP_i)=\sum_{i=1}^n\tr(\rho_{\mathrm{new}}P_i)\frac{\tr(\rho P_ixP_i)}{\tr(\rho P_i)}\nonumber\\&=\sum_{i=1}^n\tr(\rho_{\mathrm{new}}P_i)\frac{\tr(P_i\rho P_ix)}{\tr(\rho P_i)}.
\end{align}
This gives
\begin{equation}
\rho_{\mathrm{new}}=\sum_{i=1}^n\tr(\rho_{\mathrm{new}}P_i)\frac{P_i\rho P_i}{\tr(\rho P_i)}.
\end{equation}
\end{proof}

\begin{remark}
In what follows we will prove that the weak L\"{u}ders rule is a special case of an entropic projection, determined by constrained minimisation with a $D_0(\phi,\psi)$ distance (Theorems \ref{vonNeumann.Lueders.from.entropy.theorem} and \ref{gen.vonNeumann.Lueders.from.entropy.theorem}). Next, we will use different constraints to derive quantum Jeffrey's rule (Theorem \ref{weighted.vNLu.theorem}). Finally, we will show that the strong L\"{u}ders rule arises as a weakly continuous limit of quantum Jeffrey's rule (Remark \ref{remark.strong.vNLu}), and can be also obtained by a constrained minimisation of a \textit{regularised} modification of $D_0(\phi,\psi)$ distance (Theorem \ref{regularised.MRE.vN.theorem}). The necessity of recourse to limit or regularisation indicates that, as opposed to weak L\"{u}ders' rule, strong L\"{u}ders' rule is not directly derivable from minimisation of $D_0$ distance. However, in Section \ref{transition.correlation.section} we will show that in some special cases (which usually turn it to the strong von Neumann rule) it can be derived from minimisation of $D_{1/2}$ distance.
\end{remark}

\begin{remark}
According to de Muynck \cite{deMuynck:2002} (see also \cite{Fok:1932:nachala,Margenau:1936,Kemble:1937,Margenau:1963}), the strong von Neumann and strong L\"{u}ders rules should be viewed as procedures of updating of quantum state, yet not in the \textit{predictive} (inferential) sense, but in the \textit{preparative} (calibrating) sense. If one extends this interpretation to the weak L\"{u}ders rule, and assumes that the `preparative' use of entropic projection should be implemented by constrained minimisation of information distance in its \textit{second} argument, then our result can be interpreted by saying that the weak L\"{u}ders rule of \textit{quantum state preparation} can be derived from the constrained minimisation of the Araki distance $\dono(\omega,\phi)$. (We do not consider this interpretation as necessary.)
\end{remark}

\begin{remark}
In what follows $\M$ will be an arbitrary set, $\Q\subseteq\M$ its arbitrary subset, and $D$ will be an arbitrary (not necessarily bounded) distance on $\M$. If $\arginff{\phi\in\Q}{D(\phi,\psi)}$ consists of a single element, then we will denote it by $\PPP^D_{\Q}(\psi)$.
\end{remark}

\begin{definition}
Let $\psi\in\M$ and $\Q\subseteq\M$. If
\begin{equation}
        \exists\rho\in\Q\;
        \forall\phi\in\Q\;\;
        D(\phi,\rho)+D(\rho,\psi)=D(\phi,\psi)
\label{triangle.equality}
\end{equation}
holds, then we will say that $\Q$ satisfies \df{triangle equality} for $\psi$ at $\rho\in\Q$ with respect to $D$.
\end{definition}

\begin{definition}
Let $\Q_1,\Q_2\subseteq\M$. If
\begin{equation}
        \forall\psi\in\Q_1\;\;\;\arginff{\phi\in\Q_2}{D(\phi,\psi)}\iso\{*\}\;\mbox{ and }\;\PPP^D_{\Q_2}(\psi)\in\Q_1,
\label{subprojection.property}
\end{equation}
then we will say that the pair $(\Q_1,\Q_2)$ satisfies \df{subprojection property} with respect to $D$.
\end{definition}

%\begin{lemma}
%If $\Q_1,\Q_2\subseteq\M$ satisfies 
%\begin{equation}
%       \PPP^D_{\Q_2}\circ\PPP^D_{\Q_1}=
%       \PPP^D_{\Q_1}\circ\PPP^D_{\Q_2},
%\end{equation}
%then $(\Q_2,\Q_1)$ and $(\Q_1,\Q_2)$ satisfy 
%\begin{equation}
%       \forall\psi\in\Q_1\;\;
%       \exists\PPP^D_{\Q_2}(\psi)\in\Q_1\;\; 
%       \PPP^D_{\Q_2}(\psi)=
%       \arginff{\phi\in\Q_2}{D(\psi,\phi)}
%\end{equation}
%\end{lemma}

%\begin{proof}
%If $\psi\in\Q_1$ then
%\begin{equation}
%       \PPP^D_{\Q_2}(\psi)=
%       \PPP^D_{\Q_2}\circ\PPP^D_{\Q_1}(\psi)=
%       \PPP^D_{\Q_1}\circ\PPP^D_{\Q_2}(\psi)\in\Q_1.
%\end{equation}
%\end{proof}

\begin{lemma}
If $\Q\subseteq\M$ satisfies triangle equality for $\psi$ at $\rho\in\Q$ with respect to $D$, then
\begin{equation}
        %\forall\psi\in\M\;
        %\exists!\rho\in\Q\;\;
        \rho=\arginff{\phi\in\Q}{D(\phi,\psi)}
\end{equation}
\end{lemma}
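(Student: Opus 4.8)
The plan is to derive the conclusion directly from the two defining properties of a distance, namely $D(x,y)\in[0,\infty]$ and $D(x,y)=0\iff x=y$, together with the hypothesised triangle equality at $\rho$. No topological or order-theoretic input on $\M$ or $\Q$ is needed, so the whole argument is very short.

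First I would fix an arbitrary $\phi\in\Q$ and apply the triangle equality at $\rho$, which gives $D(\phi,\psi)=D(\phi,\rho)+D(\rho,\psi)$. Since $D(\phi,\rho)\geq0$, this yields $D(\phi,\psi)\geq D(\rho,\psi)$. Because $\phi\in\Q$ was arbitrary and $\rho\in\Q$ itself, it follows that $D(\rho,\psi)=\inf_{\phi\in\Q}D(\phi,\psi)$ and that this infimum is attained at $\rho$, i.e. $\rho\in\arginff{\phi\in\Q}{D(\phi,\psi)}$.

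For the uniqueness encoded in the equality $\rho=\arginff{\phi\in\Q}{D(\phi,\psi)}$, suppose $\phi\in\Q$ also attains the infimum, so that $D(\phi,\psi)=D(\rho,\psi)$. Assuming $D(\rho,\psi)<\infty$, I subtract it from both sides of the triangle equality to get $D(\phi,\rho)=0$, whence $\phi=\rho$ by the defining property of a distance. Thus $\arginff{\phi\in\Q}{D(\phi,\psi)}$ is the singleton $\{\rho\}$, which is exactly the asserted identity, with $\rho$ read as its unique element, consistently with the convention fixing the notation $\PPP^D_{\Q}(\psi)$.

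There is no genuine obstacle here; the only point that deserves a word of care is the degenerate case $D(\rho,\psi)=+\infty$, in which the triangle equality forces $D(\phi,\psi)=+\infty$ for every $\phi\in\Q$ and the cancellation in the uniqueness step is no longer licit. In every situation in which this lemma is invoked below, $\psi$ lies in the effective domain of $D(\cdot,\cdot)$ relative to $\Q$ (for $D_0$ this amounts to $\rho\ll\psi$), so $D(\rho,\psi)<\infty$ and the singleton conclusion holds; if one only wants $\rho$ to be a, not necessarily unique, minimiser, the first step already suffices unconditionally.
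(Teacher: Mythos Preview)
Your argument is correct and is essentially the same as the paper's own proof: both use $D(\phi,\rho)\geq0$ together with the triangle equality $D(\phi,\psi)=D(\phi,\rho)+D(\rho,\psi)$ to conclude that $\rho$ is the unique minimiser, the paper phrasing this as ``adding the constant $D(\rho,\psi)$ to $D(\phi,\rho)$ does not change the arginf''. Your explicit handling of the degenerate case $D(\rho,\psi)=+\infty$ is a small bonus, since the paper's constant-shift step tacitly assumes finiteness as well.
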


\begin{proof}
From $D(\phi,\rho)\geq0$ and $D(\phi,\rho)=0\iff\phi=\rho$ it follows that $\rho=\arginff{\phi\in\Q}{D(\phi,\rho)}$. From 
\begin{equation}
        \arginff{\phi\in\Q}{D(\phi,\rho)}=
        \arginff{\phi\in\Q}{D(\rho,\psi)+D(\phi,\rho)}
\end{equation}
it follows that $\arginff{\phi\in\Q}{D(\phi,\psi)}$ exists, is unique, and is equal to $\rho$.
\end{proof}

\begin{lemma}\label{lemma.subsubprojection}
Let $(\Q_1,\Q_2)$ satisfy the subprojection property with respect to $D$, and let $\Q_1,\Q_2\in\M$ satisfy triangle equality for $\psi\in\M$ at $\PPP^D_{\Q_2}(\psi)$ with respect to $D$. Then $\Q_1\cap\Q_2$ satisfies triangle equality for $\psi\in\M$ at $\PPP^D_{\Q_1\cap\Q_2}(\psi)$ with respect to $D$ and
\begin{equation}
        \arginff{\phi\in{\Q_1}\cap{\Q_2}}{D(\phi,\psi)}=
        \arginff{%
        \phi_2\in{\Q_2}}%
        {D\left(\phi_2,\arginff{\phi_1\in{\Q_1}}{D(\phi_1,\psi)}\right)}.
\end{equation}
\end{lemma}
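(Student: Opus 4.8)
The plan is to realise the projection of $\psi$ onto $\Q_1\cap\Q_2$ as a \emph{two-stage} projection (first onto $\Q_1$, then onto $\Q_2$) and then to splice the two triangle equalities into a single triangle equality for the intersection, at which point the preceding Lemma finishes the job.

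First I would pin down the intermediate point. By the preceding Lemma, the triangle-equality hypothesis for $\Q_1$ forces $\arginff{\phi\in\Q_1}{D(\phi,\psi)}$ to be a singleton, so we may set $\rho_1:=\PPP^D_{\Q_1}(\psi)\in\Q_1$. Since $\rho_1\in\Q_1$, the subprojection property of $(\Q_1,\Q_2)$ applies with $\rho_1$ in place of $\psi$: it makes $\arginff{\phi\in\Q_2}{D(\phi,\rho_1)}$ a singleton $\rho_{12}:=\PPP^D_{\Q_2}(\rho_1)$ and, crucially, puts $\rho_{12}$ back inside $\Q_1$. Hence $\rho_{12}\in\Q_1\cap\Q_2$, and unwinding the definitions, $\rho_{12}=\arginff{\phi_2\in\Q_2}{D\left(\phi_2,\arginff{\phi_1\in\Q_1}{D(\phi_1,\psi)}\right)}$, i.e. $\rho_{12}$ is exactly the right-hand side of the asserted identity.

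The core step is the splicing. For an arbitrary $\phi\in\Q_1\cap\Q_2$ I would run the chain
\begin{align}
D(\phi,\psi)&=D(\phi,\rho_1)+D(\rho_1,\psi)\nonumber\\
&=D(\phi,\rho_{12})+D(\rho_{12},\rho_1)+D(\rho_1,\psi)\nonumber\\
&=D(\phi,\rho_{12})+D(\rho_{12},\psi),
\end{align}
where the first equality is the triangle equality for $\Q_1$ (at $\rho_1$) used with $\phi\in\Q_1$; the second is the triangle equality for $\Q_2$ (relative to $\rho_1$, at $\rho_{12}$) used with $\phi\in\Q_2$; and the third is the triangle equality for $\Q_1$ used once more, now with $\rho_{12}$ — which lies in $\Q_1$ — playing the role of the running element. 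This shows that $\Q_1\cap\Q_2$ satisfies triangle equality for $\psi$ at $\rho_{12}\in\Q_1\cap\Q_2$. Applying the preceding Lemma to $\Q_1\cap\Q_2$ then yields that $\arginff{\phi\in\Q_1\cap\Q_2}{D(\phi,\psi)}$ exists, is unique, and equals $\rho_{12}=\PPP^D_{\Q_1\cap\Q_2}(\psi)$; combined with the identification of $\rho_{12}$ from the first step, this is precisely the displayed equality.

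I expect the only delicate point to be the third link of the chain: it re-uses the $\Q_1$-triangle equality at a point ($\rho_{12}$) that was produced by the $\Q_2$-projection, which is legitimate only because triangle equality is universally quantified over $\Q_1$ and because the subprojection property guarantees $\rho_{12}\in\Q_1$ in the first place — without subprojection the argument collapses and $\rho_{12}$ need not even belong to $\Q_1\cap\Q_2$. The value $+\infty$ causes no trouble, since the triangle equalities are exact equalities in $[0,\infty]$, so finiteness (or its failure) propagates consistently along the chain; everything else is formal bookkeeping of which triangle equality is invoked at which pair of points.
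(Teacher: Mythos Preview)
Your proof is correct and follows essentially the same approach as the paper's. Both arguments set $\rho_1=\PPP^D_{\Q_1}(\psi)$ and $\rho_2=\PPP^D_{\Q_2}(\rho_1)$, use the subprojection property to place $\rho_2$ in $\Q_1\cap\Q_2$, and then combine the two triangle equalities (invoking the $\Q_1$-equality twice, once at $\phi$ and once at $\rho_2$) to obtain $D(\phi,\psi)=D(\phi,\rho_2)+D(\rho_2,\psi)$; your direct three-line chain is a slightly cleaner presentation of the same computation the paper carries out via an intermediate identity and a cancellation.
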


\begin{proof}
Triangle equalities in this case read
\begin{align}
        \exists\rho_1\in\Q_1\;
        \forall\phi_1\in\Q_1\;\;
        D(\phi_1,\rho_1)+D(\rho_1,\psi)
        &
        =D(\phi_1,\psi),\\
        \exists\rho_2\in\Q_2\;
        \forall\phi_2\in\Q_2\;\;
        D(\phi_2,\rho_2)+D(\rho_2,\rho_1)
        &
        =D(\phi_2,\rho_1).
\end{align}
Now, let $\rho_2\in\Q_1\cap\Q_2$. This gives
\begin{equation}
        D(\phi,\rho_1)+D(\rho_2,\rho_1)+D(\rho_1,\psi)=
        D(\rho_2,\psi)+D_1(\phi,\rho_2)+D(\rho_2,\rho_1).
\end{equation}
For $\phi\in\Q_1\cap\Q_2$ we have
\begin{equation}
        D(\phi,\rho_1)+D(\rho_1,\psi)=D(\phi,\psi).
\end{equation}
This gives
\begin{equation}
        D(\phi,\rho_2)+D(\rho_2,\psi)=D(\phi,\psi).
\end{equation}
\end{proof}

\begin{proposition}\label{lemma.n.triangle}
If $\Q_i$ satisfies triangle equality for every $i\in\{1,\ldots,n\}$, and $(\Q_i,\Q_j)$ satisfy subprojection property for every $i,j\in\{1,\ldots,n\}$, then
\begin{equation}
        \arginff{\phi\in\Q_1\cap\ldots\cap\Q_n}{D(\phi,\psi)}=\rho_n,
\end{equation}
where $\rho_k=\arginff{\phi\in\Q_k}{D(\phi,\rho_{k-1})}$, $\rho_0=\psi$, and $\Q_n$ satisfies triangle equality at $\rho_n$.
\end{proposition}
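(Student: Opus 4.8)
The plan is to prove this by induction on $n$, using Lemma \ref{lemma.subsubprojection} as the workhorse for the inductive step. The base case $n=1$ is immediate: $\Q_1$ satisfies triangle equality at some $\rho\in\Q_1$, so by the first Lemma above, $\rho=\arginff{\phi\in\Q_1}{D(\phi,\psi)}=\rho_1$ (with $\rho_0=\psi$), and we are done. For the inductive step, suppose the claim holds for $n-1$, so that $\rho_{n-1}=\arginff{\phi\in\Q_1\cap\cdots\cap\Q_{n-1}}{D(\phi,\psi)}$ and moreover $\Q_1\cap\cdots\cap\Q_{n-1}$ satisfies triangle equality at $\rho_{n-1}$ (this second conclusion, which Lemma \ref{lemma.subsubprojection} provides alongside the identity, is what makes the induction propagate — so I would carry it as part of the inductive hypothesis).

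Now set $\Q' := \Q_1\cap\cdots\cap\Q_{n-1}$ and $\Q'' := \Q_n$. First I would check that the pair $(\Q'',\Q')$ — note the order: we want to project $\Q''$ onto $\Q'$? No: we are about to intersect $\Q'$ with $\Q_n$, so I apply Lemma \ref{lemma.subsubprojection} with $\Q_1 \leftsquigarrow \Q_n$ and $\Q_2 \leftsquigarrow \Q'$, i.e. I need $(\Q_n,\Q')$ to satisfy the subprojection property, and $\Q_n,\Q'$ to satisfy triangle equality for $\psi$ at $\PPP^D_{\Q'}(\psi)=\rho_{n-1}$. The triangle equality for $\Q_n$ at $\rho_{n-1}$ is exactly where the real content lies and I expect this to be the main obstacle: the hypothesis only gives that $\Q_n$ satisfies triangle equality at \emph{its own} projection point $\PPP^D_{\Q_n}(\psi)$, not at the point $\rho_{n-1}$. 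I would argue that the subprojection property for the relevant pairs, combined with the fact (from the $n=1$ Lemma applied inside $\Q_n$) that triangle equality at any point forces that point to be the infimum, lets one transport the triangle-equality-witnessing element of $\Q_n$ to sit at $\rho_{n-1}$; concretely, one uses $\PPP^D_{\Q_n}(\rho_{n-1})=\rho_n$ and shows the subprojection property guarantees $\rho_n$ also lands inside the appropriate set so that the triangle equality $D(\phi,\rho_n)+D(\rho_n,\rho_{n-1})=D(\phi,\rho_{n-1})$ holds for all $\phi\in\Q_n$.

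Granting that, Lemma \ref{lemma.subsubprojection} yields directly that $\Q_n\cap\Q' = \Q_1\cap\cdots\cap\Q_n$ satisfies triangle equality for $\psi$ at $\PPP^D_{\Q_1\cap\cdots\cap\Q_n}(\psi)$, and
\[
        \arginff{\phi\in\Q_1\cap\cdots\cap\Q_n}{D(\phi,\psi)}
        =\arginff{\phi\in\Q_n}{D\left(\phi,\arginff{\phi'\in\Q'}{D(\phi',\psi)}\right)}
        =\arginff{\phi\in\Q_n}{D(\phi,\rho_{n-1})}
        =\rho_n,
\]
where the middle equality is the inductive hypothesis $\PPP^D_{\Q'}(\psi)=\rho_{n-1}$ and the last is the definition of $\rho_n$. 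This closes the induction and also delivers the parenthetical assertion that $\Q_n$ (indeed $\Q_1\cap\cdots\cap\Q_n$) satisfies triangle equality at $\rho_n$. The one loose end to nail down carefully is the bookkeeping of which subprojection pairs $(\Q_i,\Q_j)$ are actually invoked — since the hypothesis grants all of them for $i,j\in\{1,\ldots,n\}$, it suffices to observe that the subprojection property for $\Q'=\bigcap_{k<n}\Q_k$ against $\Q_n$ can be assembled from the pairwise ones together with the triangle equalities, which is again exactly the kind of statement Lemma \ref{lemma.subsubprojection} is built to iterate.
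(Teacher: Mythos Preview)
Your approach is the paper's: induction on $n$, with Lemma \ref{lemma.subsubprojection} carrying the step. But your bookkeeping is tangled in a way that creates a phantom obstacle. You assign $\Q_1 \leftarrow \Q_n$ and $\Q_2 \leftarrow \Q'$, yet the displayed conclusion you write projects first onto $\Q'$ and then onto $\Q_n$ --- the opposite order. The lemma's formula projects first onto its $\Q_1$, then onto its $\Q_2$; so the correct assignment is $\Q_1 \leftarrow \Q' = \Q_1\cap\cdots\cap\Q_{n-1}$ and $\Q_2 \leftarrow \Q_n$. With that fix, the subprojection property you need is for the pair $(\Q',\Q_n)$: that $\PPP^D_{\Q_n}$ sends $\Q'$ into itself. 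This is the actual content of the inductive step, and the paper dispatches it in one line: if $\psi_k\in\Q_1\cap\cdots\cap\Q_k$ then $\psi_k\in\Q_i$ for each $i\le k$, so the pairwise hypothesis $(\Q_i,\Q_{k+1})$ gives $\PPP^D_{\Q_{k+1}}(\psi_k)\in\Q_i$ for each $i$, hence $\PPP^D_{\Q_{k+1}}(\psi_k)\in\Q'$. You do gesture at this in your last sentence, but it is not the ``main obstacle'' you flagged.

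As for that obstacle: once the roles are straight, the triangle equality you need from $\Q_n$ is for the starting point $\rho_{n-1}$, not for $\psi$, and there is nothing to ``transport''. The hypothesis ``$\Q_i$ satisfies triangle equality'' is being read (here and throughout the paper's applications, e.g.\ Proposition \ref{small.theorem}) as holding for \emph{any} starting state, so it applies directly with $\rho_{n-1}$ in place of $\psi$. Your worry would be legitimate under a narrow reading of the hypothesis, but that reading would make the proposition unusable downstream.
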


\begin{proof}
We will prove this lemma by mathematical induction. Let us assume that it holds for some $k\in\NN$. Then
\begin{equation}
        \rho_k=\arginff{\phi\in\Q_1\cap\ldots\cap\Q_k}{D(\phi,\psi)}
\end{equation}
and $\Q_k$ satisfies triangle equality with $\rho_k$. Let $\psi_k\in\Q_1\cap\ldots\cap\Q_k$, and consider
\begin{equation}
        \PPP^D_{\Q_{k+1}}(\psi_k):=\arginff{\phi\in\Q_{k+1}}{D(\phi,\psi_k)}.
\end{equation}
Then $\psi_k\in\Q_i$ for every $i\leq k$, and from the subprojection property for $(\Q_i,\Q_{k+1})$ it follows that $\PPP^D_{\Q_{k+1}}(\psi_k)\in\Q_i$, so $\PPP^D_{\Q_{k+1}}(\psi_k)\in\Q_1\cap\ldots\cap\Q_k$. Hence, subprojection property holds for $\Q_1\cap\ldots\cap\Q_k$ and $\Q_{k+1}$. Lemma \ref{lemma.subsubprojection} applied to $\Q_1\cap\ldots\cap\Q_k$ and $\Q_{k+1}$ gives
\begin{equation}
\PPP^D_{\Q_{k+1}}(\rho_k)=\arginff{\phi\in(\Q_1\cap\ldots\cap\Q_n)\cap\Q_{k+1}}{D(\phi,\psi)}.
\end{equation}
This lemma holds for $k=2$ by subprojection property of $\Q_1$ and $\Q_2$.
\end{proof}

\begin{remark}
In what follows, we will assume that $\N=\BH$ and $\phi,\psi,\omega,\rho\in\M(\N)=\BH_{\star1}^+\iso{\schatten_1(\H)}^+_1$ for some Hilbert space $\H$ of arbitrary dimension. The value of $n\in\NN$ will be kept arbitrary but fixed.
\end{remark}

%\begin{lemma}
%If $D(\psi,\rho)<\infty$ then the following conditions are equivalent:
%\begin{enumerate}
%\item[1)] $D(\psi,\rho)+D(\rho,\phi)=D(\psi,\phi)\;\;\forall\phi\in\Q$,
%\item[2)] $\tr((\psi-\rho)(\log\rho-\log\phi))=0\;\;\forall\phi\in\Q$.
%\end{enumerate}
%\end{lemma}

%\begin{proof}
%? [Some assumption about finiteness and then extension to infinite entropies/Trzeba znalezc warunek na $\phi$ dla ktorego entropia jest $\infty$...]
%\end{proof}

\begin{lemma}\label{tracelog.triangle.lemma}
If $\tr(\psi(\log\rho-\log\psi)^2)<\infty$, then the triangle equality at $\rho$ for $\psi$ with respect to $D_0$ is equivalent with
\begin{equation}
        \forall\psi\in\M(\N)\;
        \exists\rho\in\Q\;
        \forall\phi\in\Q\;\;
        \tr(\psi(\log\rho-\log\phi))=\tr(\rho(\log\rho-\log\phi)).
\end{equation}
\end{lemma}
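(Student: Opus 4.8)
The plan is to unfold $D_0$ into trace form, turn the triangle equality into an elementary algebraic identity, and then read off the asserted condition by a single cancellation, the hypothesis serving only to guarantee that the cancelled term is a finite real number. Concretely, on $\N=\BH$ one combines the relation $D_0(\omega,\varphi)=D_1(\varphi,\omega)$ (valid when $\varphi\ll\omega\ll\varphi$) with the Umegaki form \eqref{Umegaki.Araki.distance} of $D_1$ to get $D_0(\omega,\varphi)=\tr(\rho_\varphi(\log\rho_\varphi-\log\rho_\omega))$ whenever the right-hand side is finite. Writing $\rho,\psi,\phi$ for the density operators of the states and inserting this into the triangle equality $D_0(\phi,\rho)+D_0(\rho,\psi)=D_0(\phi,\psi)$ gives
\[
\tr(\rho(\log\rho-\log\phi))+\tr(\psi(\log\psi-\log\rho))=\tr(\psi(\log\psi-\log\phi));
\]
expanding and cancelling the common term $\tr(\psi\log\psi)$ leaves
\[
\tr(\rho\log\rho)-\tr(\rho\log\phi)-\tr(\psi\log\rho)=-\tr(\psi\log\phi),
\]
which rearranges to exactly $\tr(\rho(\log\rho-\log\phi))=\tr(\psi(\log\rho-\log\phi))$. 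Since every step is reversible, the triangle equality and the displayed condition are equivalent, $\phi$ by $\phi$, which is the claim.

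The only point that needs justification is the legitimacy of the cancellation, i.e.\ finiteness of $\tr(\psi\log\psi)$, equivalently of $D_0(\rho,\psi)$. By cyclicity of the trace, $\tr(\psi A)=\s{\psi^{1/2},A\psi^{1/2}}_{\schatten_2(\H)}$ for self-adjoint $A$, and Cauchy--Schwarz together with $\tr\psi=1$ yields $|\tr(\psi A)|\le\n{\psi^{1/2}}_{\schatten_2(\H)}\n{A\psi^{1/2}}_{\schatten_2(\H)}=\sqrt{\tr(\psi A^2)}$. Taking $A=\log\rho-\log\psi$, the hypothesis $\tr(\psi(\log\rho-\log\psi)^2)<\infty$ shows $D_0(\rho,\psi)=\tr(\psi(\log\psi-\log\rho))\in\RR$ and, more precisely, that $(\log\rho-\log\psi)\psi^{1/2}\in\schatten_2(\H)$; this is exactly what lets one transpose this term across the equation and recombine the remaining pairings by linearity in $\schatten_2(\H)$.

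I expect the main obstacle to be the unbounded-operator and domination bookkeeping hidden in the displays above. The operators $\log\rho$, $\log\phi$, $\log\psi$ must be read on the appropriate supports: the trace expression is the value of $D_0$ only where the domination conditions in its definition hold, and $D_0$ is $+\infty$ otherwise. One must also take care not to split a finite ``combined'' expectation such as $\tr(\psi(\log\psi-\log\rho))$ into two separately divergent pieces $\tr(\psi\log\psi)$ and $\tr(\psi\log\rho)$; the safe route is to transpose only the whole finite quantity $D_0(\rho,\psi)$, and to observe that when $D_0(\phi,\psi)$ --- equivalently $D_0(\phi,\rho)$, since the two differ by the finite constant $D_0(\rho,\psi)$ --- equals $+\infty$, the triangle equality carries no content and both sides of the target identity are $+\infty$ as well, so the equivalence persists. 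No input is required beyond the Cauchy--Schwarz estimate above and the already-established trace form of $D_0$.
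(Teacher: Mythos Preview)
Your proof is correct and follows essentially the same route as the paper: write the triangle equality in Umegaki trace form and subtract the common term $\tr(\psi(\log\psi-\log\rho))=D_0(\rho,\psi)$ to obtain the displayed condition. The paper's proof is two lines and simply assumes the distances are finite; your additional Cauchy--Schwarz step, $|\tr(\psi A)|\le\sqrt{\tr(\psi A^2)}$ with $A=\log\rho-\log\psi$, makes explicit why the hypothesis guarantees $D_0(\rho,\psi)\in\RR$ and hence why the cancellation is legitimate --- a point the paper leaves implicit.
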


\begin{proof}
If these distance functionals are finite, then
\begin{align}
        \tr(\psi(\log\psi-\log\phi))&=
        \tr(\rho(\log\rho-\log\phi))+\tr(\psi(\log\psi-\log\rho)),\\
        \tr(\psi(\log\rho-\log\phi))&=\tr(\rho(\log\rho-\log\phi)).
\end{align}
\end{proof}

\begin{proposition}\label{small.theorem}
Let $\Q=\{\omega\in\M(\N)\mid [\omega,P]=0\}$, and let $\rho=P\psi P+(\II-P)\psi(\II-P)$, where $P\in\Proj(\N)$. Then
\begin{equation}
        \forall\phi\in\Q\;\;
        D_0(\phi,\psi)=D_0(\phi,\rho)+D_0(\rho,\psi).
\label{equation.of.small.theorem}
\end{equation}
\end{proposition}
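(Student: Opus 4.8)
First I would record two preliminary facts about $\rho:=P\psi P+(\II-P)\psi(\II-P)$. It is a legitimate element of $\Q$: it is a sum of two positive operators, $\tr\rho=\tr(\psi P)+\tr(\psi(\II-P))=\tr\psi=1$, and $P\rho=P\psi P=\rho P$, so $\rho\in\M(\N)$ and $[\rho,P]=0$. (This $\rho$ is exactly the image of $\psi$ under the weak L\"uders rule \eqref{weak.Lueders.rule} for the two-element decomposition $\{P,\II-P\}$, which is why this proposition is the single-projection building block of the general derivation.) Two structural observations then carry the whole argument: (i) by construction $P\rho P=P\psi P$ and $(\II-P)\rho(\II-P)=(\II-P)\psi(\II-P)$, so the self-adjoint operator $\rho-\psi$ has vanishing diagonal blocks, $P(\rho-\psi)P=0=(\II-P)(\rho-\psi)(\II-P)$; (ii) since $[\phi,P]=0$ for every $\phi\in\Q$ and $[\rho,P]=0$, functional calculus gives $[\log\phi,P]=0=[\log\rho,P]$, so $\log\rho-\log\phi$ is block diagonal, $\log\rho-\log\phi=P(\log\rho-\log\phi)P+(\II-P)(\log\rho-\log\phi)(\II-P)$.

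The identity \eqref{equation.of.small.theorem}, asserted for all $\phi\in\Q$, is precisely the statement that $\Q$ satisfies the triangle equality for $\psi$ at $\rho\in\Q$ with respect to $D_0$. The plan is therefore to feed this into Lemma \ref{tracelog.triangle.lemma}, which (under its finiteness hypothesis) reduces the triangle equality to the linear condition $\tr(\psi(\log\rho-\log\phi))=\tr(\rho(\log\rho-\log\phi))$ for all $\phi\in\Q$, i.e.\ to $\tr\big((\rho-\psi)(\log\rho-\log\phi)\big)=0$. Writing $Y:=\log\rho-\log\phi$, facts (i) and (ii) together with cyclicity of the trace and $P^2=P$, $(\II-P)^2=\II-P$ give
\[
\tr\big((\rho-\psi)Y\big)=\tr\big(P(\rho-\psi)P\,Y\big)+\tr\big((\II-P)(\rho-\psi)(\II-P)\,Y\big)=0,
\]
which is the required condition. (Equivalently one can bypass Lemma \ref{tracelog.triangle.lemma} and expand the three relative entropies directly: the $\tr(\psi\log\psi)$ terms cancel and $D_0(\phi,\psi)-D_0(\phi,\rho)-D_0(\rho,\psi)$ collapses to $\tr\big((\rho-\psi)(\log\phi-\log\rho)\big)$, which again vanishes by the same off-diagonal-times-block-diagonal computation.)

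The step I expect to require genuine care is the accounting of supports and trace-ideals, since $D_0$ is $[0,\infty]$-valued and the relative entropy is not continuous. I would treat the degenerate case first: since every vector of $\ran(\psi)$ is the sum of its $P$- and $(\II-P)$-components, both of which lie in $\ran(\rho)$, one has $\supp(\psi)\leq\supp(\rho)$ unconditionally, while $\supp(\rho)$ and (for $\phi\in\Q$) $\supp(\phi)$ both commute with $P$. Unwinding the absolute-continuity constraints built into the definition of $D_0$ --- in particular that $D_0(\omega,\chi)<\infty$ forces $\supp(\omega)=\supp(\chi)$ --- one checks that a term on one side of \eqref{equation.of.small.theorem} is $+\infty$ exactly when some term on the other side is, so the identity holds trivially as $\infty=\infty$. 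In the complementary case $\supp(\phi)=\supp(\psi)=\supp(\rho)$ the finiteness hypothesis of Lemma \ref{tracelog.triangle.lemma} is in force (automatically so in finite dimensions, and in general it is exactly the condition making the trace manipulations legitimate), all of $\log\psi,\log\rho,\log\phi$ lie in the appropriate trace ideal, and the argument of the preceding paragraph applies verbatim. Thus the algebraic heart is a one-liner --- the pinching identity $\tr((\rho-\psi)Y)=0$ for $Y$ commuting with $P$ --- and the real work, such as it is, is this finiteness and support bookkeeping.
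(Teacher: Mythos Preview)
Your algebraic core is exactly the paper's: block-diagonality of $\log\rho-\log\phi$ together with $P\rho P=P\psi P$ and $(\II-P)\rho(\II-P)=(\II-P)\psi(\II-P)$ gives the trace identity of Lemma \ref{tracelog.triangle.lemma}. Where your proposal falls short is the last paragraph. Recall that the ambient setting here is $\N=\BH$ with $\dim\H$ arbitrary (see the remark preceding Lemma \ref{tracelog.triangle.lemma}), so the finite-dimensional disclaimer does not cover the intended generality, and the sentence ``in general it is exactly the condition making the trace manipulations legitimate'' is a restatement of what is to be proved, not a proof. In the paper this verification is the bulk of the argument: one shows $\tr(\psi(\log\rho-\log\psi)^2)<\infty$ by bounding $\n{\rho^{1/2}\psi^{1/2}}_{\schatten_2(\H)}$ and $\n{\rho^{-1/2}\psi}_{\schatten_2(\H)}$, the second of these via the Schur-complement inequality $\psi_{12}\psi_{22}^{-1}\psi_{21}\leq\psi_{11}$ coming from positivity of the $2\times2$ block form of $\psi$, and then invoking the elementary estimate $\ab{\log\gamma}\leq\lambda\max(\gamma^{1/2},\gamma^{-1/2})$. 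None of this is captured by ``support bookkeeping''.

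A smaller point: your assertion that $D_0(\omega,\chi)<\infty$ forces $\supp(\omega)=\supp(\chi)$ is not what the definition of $D_0$ gives; the stated domain condition is $\omega\ll\chi$, i.e.\ $\supp(\omega)\leq\supp(\chi)$, so the case split as you have written it does not quite close. The paper sidesteps this entirely by working directly with the Hilbert--Schmidt estimates rather than trying to reduce to an equal-support case.
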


\begin{proof}
The operators $\phi,\rho$ are block-diagonal matrices, so any functions of $\rho$ and $\phi$ are also block-diagonal. Thus, $[\log\rho-\log\phi,P]=0$ and $[\log\rho-\log\phi,\II-P]=0$. It follows
\begin{equation}
        (\log\rho-\log\phi)=
        (\log\rho-\log\phi)(P^2+(\II-P)^2)=
        P(\log\rho-\log\phi)P+(\II-P)(\log\rho-\log\phi)(\II-P).
\end{equation}
Substituting $P\rho P=P(P\psi P+(\II-P)\psi(\II-P))P=P\psi P$ %into
%\begin{equation}
%       \tr((\psi-\rho)P(\log\rho-\log\phi)P)=
%       \tr(P(\psi-\rho)P(\log\rho-\log\phi))=
%       \tr(P\psi P-P\rho P)(\log\rho-\log\phi)
%\end{equation}
we get 
\begin{equation}
        \tr(\psi P(\log\rho-\log\phi)P)=\tr(\rho P(\log\rho-\log\phi)P).
\end{equation}
Similarly, we obtain    
\begin{equation}
        \tr(\psi(\II-P)(\log\rho-\log\phi)(\II-P))=\tr(\rho(\II-P)(\log\rho-\log\phi)(\II-P)),
\end{equation}
and it follows that
\begin{equation}
        \tr(\psi(\log\rho-\log\phi))=\tr(\rho(\log\rho-\log\phi)).
\end{equation}
By Lemma \ref{tracelog.triangle.lemma}, this gives \eqref{equation.of.small.theorem}, but it remains to check whether the assumption of this lemma is satisfied. We have
\begin{equation}
 \n{\rho^{1/2}\psi^{-1/2}\psi^{1/2}}_{\schatten_2(\H)}=\tr(\rho)<\infty.
\label{rho.one.half.less.infty}
\end{equation}
Now we want to show that $\n{\rho^{-1/2}\psi}^2_{\schatten_2(\H)}<\infty$. Let us denote in matrix form
\begin{equation}
        \psi=
        \left(
        \begin{array}{cc}
                \psi_{11}&\psi_{12}\\
                \psi_{21}&\psi_{22}
        \end{array}
        \right),\;\;\;
        \rho=
        \left(
        \begin{array}{cc}
                \psi_{11}&0\\
                0&\psi_{22}
        \end{array}
        \right).
\end{equation}
From $\psi\geq0$ it follows that $\psi_{22}\xi=0$ $\limp$ $\psi_{12}\xi=0$. Moreover,
\begin{equation}
	\psi_{11}-\left(\psi_{22}^{-1/2}\psi_{21}\right)^*\left(\psi_{22}^{-1/2}\psi_{21}\right)=\psi_{11}-\psi_{12}\psi^{-1}_{22}\psi_{21}.
\end{equation}
So, for $\xi\in\dom(\psi_{22}^{-1}\psi_{21})$, the corresponding forms satisfy
\begin{equation}
	\s{\xi,\psi_{11}\xi}_\H-\n{\psi_{22}^{-1/2}\psi_{21}\xi}_\H^2
	=\s{\left(
        \begin{array}{c}
        \II\\
        -\psi^{-1}_{22}\psi_{21}
        \end{array}
        \right)\xi,
			\left(
				\begin{array}{cc}
					\psi_{11}&\psi_{12}\\
					\psi_{21}&\psi_{22}
				\end{array}
			\right)
			\left(
        \begin{array}{c}
        \II\\
        -\psi^{-1}_{22}\psi_{21}
        \end{array}
        \right)
			\xi
			}_\H\geq0.
\end{equation}
Hence, as operators,
\begin{align}
				\psi_{11}-\psi_{12}\psi_{22}^{-1}\psi_{21}&\geq0,\\
        \psi_{12}\psi_{22}^{-1}\psi_{21}
        &\leq\psi_{11},\\
        \tr(\psi_{12}\psi_{22}\psi_{21}^{-1})
        &\leq\tr(\psi_{11}).
\end{align}
Hence
\begin{align}
        \n{(\rho^{-\frac{1}{2}}\psi^{\frac{1}{2}})\psi^{\frac{1}{2}}}^2_{\schatten_2(\H)}
        %&=\n{\rho^{-1/2}\psi}^2\nonumber\\
        &=\tr(\psi\rho^{-1}\psi)%\nonumber\\
        %&
				=\tr
        \left(\begin{array}{cc}
                \psi_{12}\psi_{22}^{-1}\psi_{21}+\psi_{11}&
                \psi_{12}+\psi_{12}\\
                \psi_{21}+\psi_{21}&
                \psi_{21}\psi_{11}^{-1}\psi_{12}+\psi_{22}              
        \end{array}
        \right)\nonumber\\
        %&=\tr_{P\N P}(\psi_{12}\psi_{22}^{-1}\psi_{21})+
        %\tr_{(\II-P)\N(\II-P)}(\psi_{21}\psi_{11}^{-1}\psi_{12})+
        %\tr(\rho)\nonumber\\
        %&\leq\tr_{P\N P}(\psi_{11})+
        %\tr_{(\II-P)\N(\II-P)}(\psi_{22})+\tr(\rho)\nonumber\\
				&\leq2\tr\left(\begin{array}{cc}
								\psi_{11}&
								\psi_{12}\\
								\psi_{21}&
								\psi_{22}
				\end{array}\right)%\nonumber\\
        %&
				=2\tr(\psi)%\nonumber\\
        %&
				<\infty.\label{rho.minus.one.half.less.infty}
\end{align}
Using
\begin{equation}
        \exists\lambda\in\RR^+\;\;\forall\gamma\in\RR^+\;\;\ab{\log(\gamma)}\leq\lambda\max(\gamma^{1/2},\gamma^{-1/2}),
\end{equation}
together with \eqref{rho.one.half.less.infty} and \eqref{rho.minus.one.half.less.infty}, we obtain
\begin{equation}
        \tr(\psi(\log\rho-\log\psi)^2)<\infty,
\end{equation}
which follows from
\begin{align}
	\tr\left(\psi(\log\rho-\log\psi)^2\right)&=\n{(\log\rho-\log\psi)\psi^{1/2}}^2_{\schatten_2(\H)}\nonumber\\
	&\leq\left(\n{\log(\rho)\psi^{1/2}}_{\schatten_2(\H)}+\n{\log(\psi)\psi^{1/2}}_{\schatten_2(\H)}\right)^2<\infty.
\end{align}
\end{proof}

\begin{lemma}\label{lemma.PiPj}
Given $P_i,P_j\in\Proj(\N)$, let $\Q_k:=\{\omega\in\M(\N)\mid[P_k,\omega]=0\}$ for $k\in\{i,j\}$, and $[P_i,P_j]=0$. If $\psi\in\Q_j$ then 
\begin{equation}
        \arginff{\phi\in\Q_i}{D_0(\phi,\psi)}\in\Q_j.
\end{equation}
\end{lemma}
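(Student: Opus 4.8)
\medskip\noindent\textbf{Proof plan.} The plan is to first pin down the entropic projection $\arginff{\phi\in\Q_i}{D_0(\phi,\psi)}$ explicitly by invoking Proposition~\ref{small.theorem}, and then to verify the commutation with $P_j$ by a one-line computation with projectors.

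First I would apply Proposition~\ref{small.theorem} with $P=P_i$. Put $\rho:=P_i\psi P_i+(\II-P_i)\psi(\II-P_i)$. This $\rho$ is positive (a sum of two positive operators) and $\tr\rho=\tr(\psi P_i)+\tr(\psi(\II-P_i))=\tr\psi=1$, so $\rho\in\M(\N)$; moreover $P_i\rho=P_i\psi P_i=\rho P_i$, so $\rho\in\Q_i$. Hence $\rho$ is a legitimate candidate, and Proposition~\ref{small.theorem} yields $D_0(\phi,\psi)=D_0(\phi,\rho)+D_0(\rho,\psi)$ for all $\phi\in\Q_i$, i.e. $\Q_i$ satisfies the triangle equality for $\psi$ at $\rho$ with respect to $D_0$. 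By the (unnamed) lemma preceding Lemma~\ref{lemma.subsubprojection}, which states that the triangle equality at $\rho$ forces $\rho=\arginff{\phi\in\Q_i}{D_0(\phi,\psi)}$, the infimum is attained, is unique, and equals $\rho$.

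It then remains to show $\rho\in\Q_j$, i.e. $[\rho,P_j]=0$. Using $[P_i,P_j]=0$ (hence also $P_j(\II-P_i)=(\II-P_i)P_j$) together with $\psi\in\Q_j$, i.e. $[\psi,P_j]=0$, I would compute
\begin{align}
P_j\rho&=P_jP_i\psi P_i+P_j(\II-P_i)\psi(\II-P_i)
=P_iP_j\psi P_i+(\II-P_i)P_j\psi(\II-P_i)\nonumber\\
&=P_i\psi P_jP_i+(\II-P_i)\psi P_j(\II-P_i)
=P_i\psi P_iP_j+(\II-P_i)\psi(\II-P_i)P_j=\rho P_j.
\end{align}
Therefore $\arginff{\phi\in\Q_i}{D_0(\phi,\psi)}=\rho\in\Q_j$, which is the assertion.

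There is no serious obstacle here: all the analytic content (finiteness of the traces $\tr(\psi(\log\rho-\log\psi)^2)$, the additivity identity for $D_0$) is already carried by Proposition~\ref{small.theorem} and Lemma~\ref{tracelog.triangle.lemma}. The only points that need care are checking that the explicit $\rho$ genuinely lies in $\Q_i\cap\M(\N)$, so that Proposition~\ref{small.theorem} applies verbatim, and keeping track of the two commutation relations $[P_i,P_j]=0$ and $[\psi,P_j]=0$ when pushing $P_j$ through the block-diagonal expression for $\rho$.
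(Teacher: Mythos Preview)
Your proof is correct and follows essentially the same approach as the paper: invoke Proposition~\ref{small.theorem} to identify the minimiser as $\rho=P_i\psi P_i+(\II-P_i)\psi(\II-P_i)$, then use $[P_i,P_j]=0$ and $[\psi,P_j]=0$ to verify $[\rho,P_j]=0$. The paper is slightly terser (it writes the commutator directly as $P_i[\psi,P_j]P_i+(\II-P_i)[\psi,P_j](\II-P_i)=0$ rather than expanding $P_j\rho$ and $\rho P_j$), and it does not spell out the membership checks $\rho\in\M(\N)\cap\Q_i$ that you include, but the substance is identical.
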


\begin{proof}
From Proposition \ref{small.theorem} it follows that 
\begin{equation}
        \arginff{\phi\in\Q_i}{D_0(\phi,\psi)}=
        P_i\psi P_i+(\II-P_i)\psi(\II-P_i).
\end{equation}
From $[\psi,P_j]=0$ we obtain
\begin{equation}
        [P_i\psi P_i+(\II-P_i)\psi(\II-P_i),P_j]=
        P_i[\psi,P_j]P_i+(\II-P_i)[\psi,P_j](\II-P_i).
\end{equation}
\end{proof}

\begin{lemma}
Let $\{P_i\}_{i=1}^n\subseteq\Proj(\N)$, $[P_i,P_j]=0$ $\forall i,j\in\{1,\ldots,n\}$, and $\Q:=\{\omega\in\M(\N)\mid [P_i,\omega]=0\;\forall i\in\{1,\ldots,n\}\}$. Then $\Q$ satisfies triangle equality and 
\begin{equation}
        \arginff{\phi\in\Q}{D_0(\phi,\psi)}=\rho_n,
\end{equation}
where $\rho_k=\arginff{\phi\in\Q_k}{D_0(\phi,\rho_{k-1})}$ for $k\in\{1,\ldots,n\}$ and $\rho_0=\psi$.
\end{lemma}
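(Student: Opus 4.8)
The plan is to obtain everything from Proposition~\ref{lemma.n.triangle}, invoked with $D=D_0$ and with $\Q_k:=\{\omega\in\M(\N)\mid[P_k,\omega]=0\}$ for $k\in\{1,\ldots,n\}$, so that $\Q=\Q_1\cap\cdots\cap\Q_n$. That proposition then yields in one stroke the well-definedness (existence and uniqueness) of the iterated minimisers $\rho_k=\arginff{\phi\in\Q_k}{D_0(\phi,\rho_{k-1})}$, the equality $\arginff{\phi\in\Q}{D_0(\phi,\psi)}=\rho_n$, and --- via the induction on Lemma~\ref{lemma.subsubprojection} built into its proof --- the triangle equality of every partial intersection, hence of $\Q$ itself, at $\rho_n$. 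So the real work is to check the two hypotheses of Proposition~\ref{lemma.n.triangle}: that each $\Q_k$ satisfies triangle equality (for every base point in $\M(\N)$) with respect to $D_0$, and that every pair $(\Q_i,\Q_j)$ satisfies the subprojection property.

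First I would dispatch the triangle equality: this is exactly Proposition~\ref{small.theorem}. Given $\psi\in\M(\N)$, set $\rho:=P_k\psi P_k+(\II-P_k)\psi(\II-P_k)$; since $P_k$ and $\II-P_k$ are projections, $\rho$ is positive, trace-class with $\tr\rho=\tr\psi=1$, so $\rho\in\M(\N)$, and $[\rho,P_k]=0$, i.e.\ $\rho\in\Q_k$. Proposition~\ref{small.theorem} then gives $D_0(\phi,\psi)=D_0(\phi,\rho)+D_0(\rho,\psi)$ for all $\phi\in\Q_k$; combined with the uniqueness lemma proved above, $\rho$ is the unique element of $\arginff{\phi\in\Q_k}{D_0(\phi,\psi)}$. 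One point to keep in view is that Proposition~\ref{small.theorem} also discharges, for arbitrary $\psi\in\M(\N)$, the finiteness condition $\tr(\psi(\log\rho-\log\psi)^2)<\infty$ of Lemma~\ref{tracelog.triangle.lemma}; since every intermediate base point $\rho_{k-1}$ again lies in $\M(\N)$, this condition propagates automatically along the chain $\rho_0=\psi,\rho_1,\ldots,\rho_n$, and this is the only piece of bookkeeping that could conceivably have caused difficulty.

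Next I would verify the subprojection property for a fixed pair $(\Q_i,\Q_j)$, recalling $[P_i,P_j]=0$. Singleton-ness of $\arginff{\phi\in\Q_j}{D_0(\phi,\psi)}$ for every $\psi\in\M(\N)$ is the uniqueness just established, so it remains to show that if $\psi\in\Q_i$ then the minimiser $P_j\psi P_j+(\II-P_j)\psi(\II-P_j)$ lies in $\Q_i$. This is Lemma~\ref{lemma.PiPj} with the roles of $i$ and $j$ interchanged: using $[\psi,P_i]=0$ one gets $[P_j\psi P_j+(\II-P_j)\psi(\II-P_j),P_i]=P_j[\psi,P_i]P_j+(\II-P_j)[\psi,P_i](\II-P_j)=0$, where $[P_i,P_j]=0$ is used to split the commutator across the two blocks. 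As the $P_i$ mutually commute, this applies to every pair.

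With both hypotheses in hand, Proposition~\ref{lemma.n.triangle} applies directly and gives the claim. I do not expect any genuine obstacle: the proof is just an assembly of Propositions~\ref{small.theorem} and~\ref{lemma.n.triangle} with Lemmas~\ref{tracelog.triangle.lemma} and~\ref{lemma.PiPj}, and the only subtlety is the already-noted propagation of the finiteness hypothesis of Lemma~\ref{tracelog.triangle.lemma} along the sequence of intermediate states, which is automatic because that hypothesis was verified for all of $\M(\N)$ and each $\rho_k\in\M(\N)$.
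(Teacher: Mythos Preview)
Your proposal is correct and follows essentially the same route as the paper, which simply cites Proposition~\ref{lemma.n.triangle}, Lemma~\ref{lemma.PiPj}, and Proposition~\ref{small.theorem}. Your version is more explicit in checking the hypotheses (that $\rho\in\Q_k$, that the finiteness condition of Lemma~\ref{tracelog.triangle.lemma} propagates, and the commutator computation behind Lemma~\ref{lemma.PiPj}), but the underlying argument is identical.
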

\begin{proof}
Follows directly from Proposition \ref{lemma.n.triangle}, Lemma \ref{lemma.PiPj} and Proposition \ref{small.theorem}.
\end{proof}

\begin{theorem}\label{vonNeumann.Lueders.from.entropy.theorem}
If $\{P_i\}_{i=1}^n\subseteq\Proj(\N)$ satisfies $P_iP_j=\dirac_{ij}P_i$ $\forall i,j\in\{1,\ldots,n\}$, $\sum_{i=1}^nP_i=\II$, and 
\begin{equation}
        \Q_{\mathrm{L}}:=\{\omega\in\M(\N)\mid [P_i,\omega]=0\;\forall i\in\{1,\ldots,n\}\},
\label{QL.constraint}
\end{equation}
then 
\begin{equation}
        \PPP^{D_0}_{\Q_{\mathrm{L}}}(\psi)\equiv\arginff{\phi\in\Q_{\mathrm{L}}}{D_0(\phi,\psi)}=\sum_{i=1}^nP_i\psi P_i.
\label{weak.vonNeumann.Lueders}
\end{equation}
\end{theorem}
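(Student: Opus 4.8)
The plan is to reduce the statement to the iterated-minimiser formula already established and then evaluate that iteration by an explicit telescoping computation. First I would record that the orthogonality hypothesis $P_iP_j=\dirac_{ij}P_i$ forces $[P_i,P_j]=0$ for all $i,j\in\{1,\ldots,n\}$, since for $i\neq j$ both $P_iP_j$ and $P_jP_i$ vanish. Writing $\Q_k:=\{\omega\in\M(\N)\mid[P_k,\omega]=0\}$ we then have $\Q_{\mathrm{L}}=\Q_1\cap\cdots\cap\Q_n$, and the hypotheses of the Lemma immediately preceding this theorem are met. That Lemma yields
\begin{equation}
\PPP^{D_0}_{\Q_{\mathrm{L}}}(\psi)=\rho_n,\qquad\rho_0=\psi,\qquad\rho_k=\arginff{\phi\in\Q_k}{D_0(\phi,\rho_{k-1})},
\end{equation}
and, by Proposition \ref{small.theorem} together with the first Lemma above (triangle equality pins down the minimiser, exactly as it is used inside the proof of Lemma \ref{lemma.PiPj}), each iterate is explicitly
\begin{equation}
\rho_k=P_k\rho_{k-1}P_k+(\II-P_k)\rho_{k-1}(\II-P_k).
\end{equation}

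Next I would prove by induction on $k$ that, with $Q_k:=\II-\sum_{i=1}^kP_i$,
\begin{equation}
\rho_k=\sum_{i=1}^kP_i\psi P_i+Q_k\psi Q_k.
\end{equation}
The case $k=1$ is precisely the displayed formula for $\rho_1$. For the inductive step I would substitute the expression for $\rho_{k-1}$ into the recursion and use the orthogonality relations, which give $P_kP_i=0$ and $(\II-P_k)P_i=P_i$ for $i<k$, together with $Q_{k-1}P_k=P_k$ and $Q_{k-1}(\II-P_k)=Q_k$. These collapse $P_k\rho_{k-1}P_k$ to $P_k\psi P_k$, leave each term $P_i\psi P_i$ with $i<k$ unchanged inside $(\II-P_k)(\,\cdot\,)(\II-P_k)$, and turn $(\II-P_k)Q_{k-1}\psi Q_{k-1}(\II-P_k)$ into $Q_k\psi Q_k$; collecting the surviving terms gives the claimed formula for $\rho_k$.

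Finally, taking $k=n$ and using $\sum_{i=1}^nP_i=\II$ gives $Q_n=0$, hence $\rho_n=\sum_{i=1}^nP_i\psi P_i$, which is \eqref{weak.vonNeumann.Lueders}. Each $\rho_k$ is automatically a genuine state in $\M(\N)$, being the minimiser over $\Q_k\subseteq\M(\N)$, so no separate positivity or normalisation check is required. The only point demanding care is the bookkeeping in the inductive step --- tracking which cross-terms vanish by orthogonality --- but once the relations $Q_{k-1}P_k=P_k$ and $Q_{k-1}(\II-P_k)=Q_k$ are in hand the computation is purely mechanical, so I do not expect any genuine obstacle here; the substance of the theorem is already carried by Propositions \ref{small.theorem}, \ref{lemma.n.triangle} and Lemma \ref{lemma.PiPj}.
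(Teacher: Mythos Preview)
Your proposal is correct and follows essentially the same approach as the paper's own proof: both reduce to the preceding Lemma to obtain the iterated minimiser $\rho_n$, use Proposition~\ref{small.theorem} to get the recursion $\rho_k=P_k\rho_{k-1}P_k+(\II-P_k)\rho_{k-1}(\II-P_k)$, and then verify by induction that $\rho_k=\sum_{i=1}^kP_i\psi P_i+Q_k\psi Q_k$ via the identity $(\II-P_k)Q_{k-1}=Q_k$. Your write-up is in fact more explicit about the orthogonality bookkeeping than the paper's rather terse version.
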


\begin{proof}
By mathematical induction. Assume that 
\begin{equation}
        \rho_k=
        \sum_{i=1}^kP_i\psi P_i+
        \left(\II-\sum_{i=1}^kP_i\right)
        \psi
        \left(\II-\sum_{i=1}^kP_i\right).
\end{equation}
Then
\begin{equation}
        \rho_{k+1}=
        P_{k+1}\rho_kP_{k+1}+(\II-P_{k+1})\rho(\II-P_{k+1})=
        P_{k+1}\psi P_{k+1}+
        \sum_{i=1}^kP_i\psi P_i +
        \left(\II-\sum_{i=1}^{k+1}P_i\right)
        \psi
        \left(\II-\sum_{i=1}^{k+1}P_i\right),
\end{equation}
what follows from
\begin{equation}
        (\II-P_{k+1})\left(\II-\sum_{i=1}^kP_i\right)=
        \left(\II-\sum_{i=1}^{k+1}P_i\right).
\end{equation}
The first step of this induction is satisfied by Proposition \ref{small.theorem}.
\end{proof}

\begin{lemma}\label{commutation.as.expectation.lemma}
If $\{P_i\}_{i=1}^n\subseteq\Proj(\N)$ satisfies $P_iP_j=\dirac_{ij}P_i$ $\forall i,j\in\{1,\ldots,n\}$ and $\sum_{i=1}^nP_i=\II$, then the conditions
\begin{enumerate}
\item[a)] $\Q=\{\omega\in\M(\N)\mid [P_i,\omega]=0\;\forall i\in\{1,\ldots,n\}\}$
\item[b)] $\Q=\{\omega\in\M(\N)\mid \tr(\omega[f(\{P_i\}),x])=0\;\forall x\in\N\;\forall f:\{1,\ldots,n\}\ra\CC\}$
\end{enumerate}
are equivalent.
\end{lemma}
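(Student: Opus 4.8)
The plan is to prove the asserted set equality by showing that, for a fixed $\omega\in\M(\N)$, the defining condition of (a) and the defining condition of (b) are equivalent; since both sets are written as subsets of $\M(\N)$, this gives the claim. The first thing I would fix is the meaning of $f(\{P_i\})$. Because $\{P_1,\ldots,P_n\}\subseteq\Proj(\N)$ is a partition of unity ($P_iP_j=\dirac_{ij}P_i$ and $\sum_{i=1}^nP_i=\II$), the unital $*$-subalgebra of $\N$ it generates is abelian and $*$-isomorphic to $\CC^n$, so a function $f:\{1,\ldots,n\}\ra\CC$ acts on it by joint functional calculus as $f(\{P_i\})=\sum_{i=1}^nf(i)P_i$. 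In particular every $P_j$ equals $f(\{P_i\})$ for $f(k)=\dirac_{jk}$, and conversely every $f(\{P_i\})$ is a linear combination of the $P_i$; this observation is the only substantive ingredient, and it is what makes the family of constraints in (b), although indexed merely by such $f$, rich enough to separate the individual projections.

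I would then use the identity $\tr(\omega[a,x])=\tr([\omega,a]x)$ for $a,x\in\N$, which is immediate from cyclicity of the trace. For the implication (a)$\limp$(b): if $[P_i,\omega]=0$ for all $i$ then, by linearity of the commutator, $[f(\{P_i\}),\omega]=\sum_{i=1}^nf(i)[P_i,\omega]=0$ for every $f$, hence $\tr(\omega[f(\{P_i\}),x])=\tr([\omega,f(\{P_i\})]x)=0$ for all $x\in\N$ and all $f$, which is the condition of (b). For the converse (b)$\limp$(a): fixing $j$ and applying the hypothesis with $f(k)=\dirac_{jk}$ gives $\tr([\omega,P_j]x)=\tr(\omega[P_j,x])=0$ for all $x\in\N$; since $[\omega,P_j]$ is again trace-class (resp. lies in $\N_\star$), non-degeneracy of the trace pairing $\schatten_1(\H)\times\BH\ni(y,x)\mapsto\tr(yx)\in\CC$ (resp. of the canonical pairing $\N_\star\times\N$ exhibiting $\N$ as the Banach dual of $\N_\star$) forces $[\omega,P_j]=0$; letting $j$ range over $\{1,\ldots,n\}$ yields the condition of (a). The same argument covers a general $W^*$-algebra $\N$ once $\tr(\omega\,\cdot\,)$ is read as the action of the normal functional $\omega$ and the operator commutators $[P_i,\omega]$ are read in the predual sense.

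There is no genuine obstacle here: the proof is a routine consequence of linearity, cyclicity of the trace, and non-degeneracy of the trace (predual) pairing. The only place demanding a little care is the functional-calculus identification $f(\{P_i\})=\sum_if(i)P_i$ together with the remark that restricting $f$ to indicator functions recovers each $P_j$, so that nothing is lost in passing from the single-projection conditions of (a) to the apparently weaker family of trace conditions of (b).
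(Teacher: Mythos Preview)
Your proof is correct and follows essentially the same approach as the paper: both use cyclicity of the trace to rewrite $\tr(\omega[P,x])$ as $\tr([\,\cdot\,,\,\cdot\,]x)$, invoke non-degeneracy of the trace pairing to pass from vanishing against all $x$ to vanishing of the commutator, and identify $f(\{P_i\})$ with a finite linear combination $\sum_i f(i)P_i$. Your write-up is somewhat more explicit (separating the two implications and spelling out the indicator-function step and the predual generalisation), but the content is the same.
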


\begin{proof}
Using the property $\tr(\rho[P,x])=\tr([P,\rho]x)$, we have
\begin{equation}
        \tr(\rho[P,x])=0\;\forall x\in\N\;\;
        \iff\;\;
        \tr([P,\rho]x)=0\;\forall x\in\N\;\;
        \iff\;\;
        [P,\rho]=0.
\end{equation}
Since there is finitely many projections, every function in a $W^*$-algebra $\N$ generated by these projections is their finite sum, and is equivalent to a function on $n$ points.
\end{proof}

\begin{remark}
As shown by Lemma \ref{commutation.as.expectation.lemma}, the weak L\"{u}ders rule \eqref{weak.vonNeumann.Lueders} has no equivalent in the commutative case, because then the condition $\phi([x,y])=0$ is satisfied trivially for arbitrary $x,y\in\N$. Hence, one \textit{cannot} interpret the weak L\"{u}ders rule as a noncommutative \textit{generalisation} of the Bayes--Laplace rule.
\end{remark}

\begin{remark}
Now we will generalise Theorem \ref{vonNeumann.Lueders.from.entropy.theorem} to arbitrary $W^*$-algebras $\N$ and $\rho,\psi,\phi,\omega\in\M(\N)=\N^+_{\star1}$. In order to prove this theorem, we will need also to generalise Lemma \ref{tracelog.triangle.lemma}, Proposition \ref{small.theorem}, and Lemma \ref{lemma.PiPj}. This will be provided, respectively, by Lemma \ref{gen.tracelog.triangle.lemma}, Proposition \ref{gen.small.theorem}, and Lemma \ref{gen.lemma.PiPj}. Our main tool will be the expression for $D_1(\psi,\phi)=D_0(\phi,\psi)$ in terms of Connes' cocycle, introduced by Petz in \cite{Petz:1986:properties}. %The proofs will be provided for $\M(\N)\subseteq\N^+_{\star1}$ and $D_1|_{\N^+_{\star1}}$, but they extend directly to the case $\M(\N)\subseteq\N^+_\star$ and $D_1$ with no essential changes involved.
\end{remark}

\begin{lemma}\label{gen.tracelog.triangle.lemma}
Given $\rho,\psi\in\M(\N)$, consider a GNS representation $(\H_\psi,\pi_\psi,\Omega_\psi)$. If $\Omega_\psi\in\dom(\log(\Delta_{\rho,\psi}))$, then the triangle equality for $\psi$ at $\rho$ with respect to $D_0$ is equivalent to
\begin{equation}
        \forall\psi\in\M(\N)\;\;
        \exists\rho\in\Q\;\;
        \forall\phi\in\Q\;\;\;
        \ii\lim_{t\ra^+0}\left(
                \psi(\II-\Connes{\phi}{\rho}{t})
        \right)
        =\ii\lim_{t\ra^+0}\left(
                \rho(\II-\Connes{\phi}{\rho}{t})
        \right).
\end{equation}
\end{lemma}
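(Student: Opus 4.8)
The plan is to transcribe the proof of Lemma~\ref{tracelog.triangle.lemma} into the language of Connes' cocycle, using the Araki--Petz expression \eqref{Araki.Petz.distance} in place of the trace--logarithm formula. First I would record the elementary identity $D_0(\omega,\phi)=D_1|_{\N^+_{\star1}}(\phi,\omega)$ for $\omega,\phi\in\N^+_{\star1}$, which follows by comparing the two definitions (the vector $\xi_\pi(\phi)$ and the operator $\Delta_{\omega,\phi}$ occur in both, and the support conditions match as extended-real-valued statements). By \eqref{Araki.Petz.distance} this gives
\begin{equation}
  D_0(\omega,\phi)=\ii\lim_{t\ra^+0}\frac{1}{t}\,\phi\!\left(\Connes{\omega}{\phi}{t}-\II\right)=-\,\ii\lim_{t\ra^+0}\frac{1}{t}\,\phi\!\left(\II-\Connes{\omega}{\phi}{t}\right)
\end{equation}
whenever $\omega\ll\phi$, and $+\infty$ otherwise. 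Applying this to each of the three terms of the triangle equality $D_0(\phi,\rho)+D_0(\rho,\psi)=D_0(\phi,\psi)$ rewrites it as a relation between the $t\ra^+0$ limits of $\frac1t\rho(\Connes{\phi}{\rho}{t}-\II)$, $\frac1t\psi(\Connes{\rho}{\psi}{t}-\II)$ and $\frac1t\psi(\Connes{\phi}{\psi}{t}-\II)$.

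The main algebraic input is the Connes cocycle chain rule $\Connes{\phi}{\psi}{t}=\Connes{\phi}{\rho}{t}\Connes{\rho}{\psi}{t}$. Writing $\Connes{\phi}{\psi}{t}-\II=\Connes{\phi}{\rho}{t}\!\left(\Connes{\rho}{\psi}{t}-\II\right)+\left(\Connes{\phi}{\rho}{t}-\II\right)$ and dividing by $t$ splits $\frac1t\psi(\Connes{\phi}{\psi}{t}-\II)$ into $\frac1t\psi(\Connes{\rho}{\psi}{t}-\II)$, a cross term, and $\frac1t\psi(\Connes{\phi}{\rho}{t}-\II)$. I would then show that the cross term vanishes as $t\ra^+0$: in the GNS representation $(\H_\psi,\pi_\psi,\Omega_\psi)$ one has $\Connes{\rho}{\psi}{t}\Omega_\psi=\Delta_{\rho,\psi}^{\ii t}\Omega_\psi$, so the vectors $z_t\Omega_\psi$ with $z_t:=\frac1t(\Connes{\rho}{\psi}{t}-\II)$ converge to $\ii\log(\Delta_{\rho,\psi})\Omega_\psi$ and are bounded in norm by $\|\log(\Delta_{\rho,\psi})\Omega_\psi\|$ --- which is exactly where the hypothesis $\Omega_\psi\in\dom(\log\Delta_{\rho,\psi})$ is used --- while $\Connes{\phi}{\rho}{t}\ra\II$ strongly, so $|\psi((\Connes{\phi}{\rho}{t}-\II)z_t)|\leq\|(\Connes{\phi}{\rho}{t}-\II)^*\Omega_\psi\|\,\|z_t\Omega_\psi\|\ra0$. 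Passing to the limit then yields the $W^*$-analogue of the cancellation used in Lemma~\ref{tracelog.triangle.lemma},
\begin{equation}
  \ii\lim_{t\ra^+0}\frac1t\,\psi\!\left(\Connes{\phi}{\rho}{t}-\II\right)=D_0(\phi,\psi)-D_0(\rho,\psi).
\end{equation}

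Combining the two displays, the triangle equality in the form $D_0(\phi,\rho)=D_0(\phi,\psi)-D_0(\rho,\psi)$ becomes $\ii\lim_{t\ra^+0}\frac1t\rho(\Connes{\phi}{\rho}{t}-\II)=\ii\lim_{t\ra^+0}\frac1t\psi(\Connes{\phi}{\rho}{t}-\II)$, which after $\II-\Connes{\phi}{\rho}{t}=-(\Connes{\phi}{\rho}{t}-\II)$ is the asserted equality; the converse is immediate because every manipulation above other than the triangle equality itself is either an identity or an unconditionally valid limit, and the quantifier ``$\exists\rho\in\Q\;\forall\phi\in\Q$'' carries over verbatim. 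The step I expect to be the main obstacle is purely analytic: making the interchange of limits in the cross-term estimate fully rigorous, and handling the degenerate support cases --- when one of the three $D_0$'s equals $+\infty$ the equality forces a second one to as well, and this must be matched against divergence of the corresponding cocycle difference quotient. For this I would first reduce to faithful states on the reduced algebras $\N_{\supp(\cdot)}$, using the extended relative modular operators and the cocycle \eqref{Connes.for.ns.weights}; the chain rule and the strong continuity of $t\mapsto\Connes{\phi}{\rho}{t}$ at $t=0$ are then standard.
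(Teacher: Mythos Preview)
Your proposal is correct and follows essentially the same route as the paper's own proof: both arguments express the three $D_0$-terms via the Araki--Petz formula, apply the cocycle chain rule $\Connes{\phi}{\psi}{t}=\Connes{\phi}{\rho}{t}\Connes{\rho}{\psi}{t}$, and reduce the triangle equality to the stated identity by showing that the cross term $\tfrac{1}{t}\psi\bigl((\II-\Connes{\phi}{\rho}{t})(\Connes{\rho}{\psi}{t}-\II)\bigr)$ vanishes in the limit via Cauchy--Schwarz in the GNS space, using $\Connes{\rho}{\psi}{t}\Omega_\psi=\Delta_{\rho,\psi}^{\ii t}\Omega_\psi$ together with the hypothesis $\Omega_\psi\in\dom(\log\Delta_{\rho,\psi})$. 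The only differences are cosmetic (the order in which the chain-rule splitting and the cross-term estimate are presented), and your closing remarks on support degeneracies go slightly beyond what the paper itself addresses.
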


\begin{proof}
If $\Omega_\psi\in\dom(\log(\Delta_{\rho,\psi}))$, then
\begin{equation}
        \lim_{t\ra^+0}
        \n{\frac{1}{t}(\Delta_{\rho,\psi}^{\ii t}-\II)\Omega_\psi}_{\H_\psi}
        =\lambda\leq\infty.
\end{equation}
Moreover, given
\begin{equation}
        x(t):=\II-\pi_\psi(\Connes{\phi}{\rho}{t}),
\end{equation}
we have
\begin{align}
        \forall t\in\RR\;\;\n{x(t)}_{\H_\psi}&\leq 2,\\
        \lim_{t\ra^+0}x(t)\Omega_\psi&=0,\\
        \lim_{t\ra^+0}\n{x(t)\Omega_\psi}_{\H_\psi}&=0.
\end{align}
Hence,
\begin{equation}
        \lim_{t\ra^+0}
        \ab{\s{x(t)\Omega_\psi,%
        \frac{1}{t}(\Delta_{\rho,\psi}^{\ii t}-\II)\Omega_\psi}_\psi}
        \leq
        \lim_{t\ra^+0}
        \n{x(t)\Omega_\psi}_{\H_\psi}\n{\frac{1}{t}(\Delta_{\rho,\psi}^{\ii t}-\II)\Omega_\psi}_{\H_\psi}
        \leq 0\cdot\lambda 
        =0.
\end{equation}
So,
\begin{align}
        0
        &
        =\lim_{t\ra^+0}\s{\Omega_\psi,\frac{1}{t}%
        \left(
                \II-\pi_\psi(\Connes{\phi}{\rho}{t})
        \right)
        \left(
                \Delta^{\ii t}_{\rho,\psi}-\II
        \right)
        \Omega_\psi}_\psi
        \nonumber
        \\
        &
        =\lim_{t\ra^+0}\s{\Omega_\psi,\frac{1}{t}%
        \left(
                \II-\pi_\psi(\Connes{\phi}{\rho}{t})
        \right)
        \left(
                \pi_\psi(\Connes{\rho}{\psi}{t})-\II
        \right)
        \Omega_\psi}_\psi
        \nonumber
        \\
        &
        =\lim_{t\ra^+0}\frac{1}{t}\psi
        \left(
                (\II-\Connes{\phi}{\rho}{t})
                (\Connes{\rho}{\psi}{t}-\II)
        \right),
\end{align}
where the second equation follows from the property $\Delta^{\ii t}_{\rho,\psi}\xi_\pi(\psi)=\Connes{\rho}{\psi}{t}\xi_\pi(\psi)$ $\forall t\in\RR$ for a standard representative $\xi_\pi(\psi)\in\stdcone$ of $\psi$ in a standard representation $(\H,\pi,J,\stdcone)$, which in this case is given by the cyclic vector $\Omega_\psi$ of the GNS representation $(\H_\psi,\pi_\psi,\Omega_\psi)$. The triangle equality \eqref{triangle.equality} reads
\begin{equation}
        \ii\lim_{t\ra^+0}\frac{\psi}{t}\left(\Connes{\rho}{\psi}{t}-\II\right)+
        \ii\lim_{t\ra^+0}\frac{\rho}{t}\left(\Connes{\phi}{\rho}{t}-\II\right)=
        \ii\lim_{t\ra^+0}\frac{\psi}{t}\left(\Connes{\phi}{\psi}{t}-\II\right),
\end{equation}
and is equivalent to
\begin{equation}
        \ii\lim_{t\ra^+0}\frac{\psi}{t}
                \left(
                        \Connes{\rho}{\psi}{t}-
                        \Connes{\phi}{\psi}{t}
                \right)
        =
        \ii\lim_{t\ra^+0}\frac{\rho}{t}
                \left(
                        \II-\Connes{\phi}{\rho}{t}
                \right)
\end{equation}
It remains to calculate
\begin{align}
        &
        \ii\lim_{t\ra^+0}\frac{1}{t}
        \left(
                \psi(\Connes{\rho}{\psi}{t}-\Connes{\phi}{\psi}{t})
                %+\rho(\Connes{\phi}{\rho}{t})
                %-1
        \right)=
        \nonumber
        \\
        &
        \ii\lim_{t\ra^+0}\frac{1}{t}
        \left(
                \psi(\Connes{\rho}{\psi}{t}-\Connes{\phi}{\rho}{t}\Connes{\rho}{\psi}{t})
                %-\rho(\Connes{\phi}{\rho}{t})
                %-1
        \right)=
        \nonumber
        \\
        &
        \ii\lim_{t\ra^+0}\frac{1}{t}
        \left(
                \psi((\II-\Connes{\phi}{\rho}{t})\Connes{\rho}{\psi}{t})
                %-\rho(\II-\Connes{\phi}{\rho}{t})
        \right)=
        \nonumber
        \\
        &
        \ii\lim_{t\ra^+0}\frac{1}{t}
        \left(
                \psi(\II-\Connes{\phi}{\rho}{t})+
                \psi((\II-\Connes{\phi}{\rho}{t})(\Connes{\rho}{\psi}{t}-\II))
        \right)=
        \nonumber
        \\
        &
        \ii\lim_{t\ra^+0}\frac{1}{t}
        \left(
                \psi(\II-\Connes{\phi}{\rho}{t})
        \right).
\end{align}
\end{proof}

\begin{lemma}\label{block.diagonality.of.phi}
If $P\in\Proj(\N)$ and $\phi\in\M(\N)$ then $\phi([P,x])=0$ $\forall x\in\N$ if{}f $\phi$ is block-diagonal, that is, if{}f $\phi(Px(\II-P))=\phi((\II-P)xP)=0$.
\end{lemma}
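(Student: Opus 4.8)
The plan is to reduce everything to the block decomposition of an arbitrary element $x\in\N$ induced by $P$ and $\II-P$. First I would write $x=PxP+Px(\II-P)+(\II-P)xP+(\II-P)x(\II-P)$ and, using $P^2=P$ together with $P(\II-P)=(\II-P)P=0$, compute
\begin{equation}
        [P,x]=Px-xP=Px(\II-P)-(\II-P)xP .
\end{equation}
Applying the (linear) functional $\phi$ then yields $\phi([P,x])=\phi(Px(\II-P))-\phi((\II-P)xP)$, from which the ``if'' implication is immediate: if $\phi$ is block-diagonal, both terms on the right vanish for every $x\in\N$, so $\phi([P,x])=0$.

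For the converse I would feed the hypothesis $\phi([P,x])=0$ the right test elements. Given an arbitrary $y\in\N$, setting $x:=Py(\II-P)$ gives $Px=x$ and $xP=0$, hence $[P,x]=x$ and $0=\phi([P,x])=\phi(Py(\II-P))$; dually, setting $x:=(\II-P)yP$ gives $Px=0$ and $xP=x$, hence $[P,x]=-x$ and $\phi((\II-P)yP)=0$. Since $y\in\N$ is arbitrary, this is exactly the block-diagonality of $\phi$.

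I do not expect any genuine obstacle here: the argument uses only linearity of $\phi$ and closure of $\N$ under multiplication, and does not rely on positivity, normality, or normalisation of $\phi\in\M(\N)$. The only point that calls for a little care is the choice of the one-sided ``corner'' elements $Py(\II-P)$ and $(\II-P)yP$ in the converse direction: they are picked precisely so that their commutator with $P$ reproduces them up to sign, which isolates the two off-diagonal components one at a time.
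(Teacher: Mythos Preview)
Your proof is correct and follows essentially the same route as the paper: compute $[P,x]=Px(\II-P)-(\II-P)xP$ to get the ``if'' direction immediately, and for the converse substitute suitable test elements to isolate each off-diagonal corner. The only cosmetic difference is that the paper plugs in $x=y(\II-P)$ (using $\phi(Px)=\phi(xP)$ and $(\II-P)P=0$) rather than the corner element $x=Py(\II-P)$, but the mechanism is identical.
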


\begin{proof}
If $\phi([P,x])=0$, then $\phi(Px)=\phi(xP)$, so, for $x=:y(\II-P)$,
\begin{equation}
        \phi(Py(\II-P))=\phi(y(\II-P)P)=0,
\end{equation}
and similarly $\phi((\II-P)yP)=0$. Conversely, every $y\in\N$ has the form 
\begin{equation}
        y=PyP+(\II-P)yP+Py(\II-P)+(\II-P)y(\II-P).
\end{equation}
Hence,
\begin{align}
        [P,y]&=PyP+Py(\II-P)-PyP-(\II-P)yP=Py(\II-P)yP,\\
        \phi([P,y])&=\phi(Py(\II-P)-(\II-P)yP)=0.
\end{align}
\end{proof}

\begin{proposition}\label{gen.small.theorem}
Let $\Q=\{\omega\in\M(\N)\mid\omega([P,x])=0\;\forall x\in\N\}$, and let $\rho=\psi(P\,\cdot\,P)+\psi((\II-P)\,\cdot\,(\II-P))$, where $P\in\Proj(\N)$. Then
\begin{equation}
        \forall\phi\in\Q\;\;D_0(\phi,\psi)=D_0(\phi,\rho)+D_0(\rho,\psi).
\end{equation}
\end{proposition}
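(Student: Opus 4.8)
The statement is the $W^*$-algebraic generalisation of Proposition \ref{small.theorem}, so the plan is to mirror the proof of the commutative (matrix) case, replacing matrix block-decomposition with the support-projection and modular-theoretic machinery introduced in Section \ref{math.section}, and using Lemma \ref{gen.tracelog.triangle.lemma} in place of Lemma \ref{tracelog.triangle.lemma}. First I would observe that $\rho$, defined by $\rho = \psi(P\,\cdot\,P) + \psi((\II-P)\,\cdot\,(\II-P))$, is block-diagonal with respect to $P$ in the sense of Lemma \ref{block.diagonality.of.phi}, hence $\rho\in\Q$; moreover $\supp(\rho)\leq\supp(\psi)$, so $\rho\ll\psi$ and the distances $D_0(\rho,\psi)$, $D_0(\phi,\rho)$, $D_0(\phi,\psi)$ are (when finite) given by the Connes-cocycle formula \eqref{Araki.Petz.distance}. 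The heart of the argument is to verify, for every $\phi\in\Q$ (so $\phi$ is also block-diagonal with respect to $P$), the triangle equality $D_0(\phi,\psi) = D_0(\phi,\rho) + D_0(\rho,\psi)$, which by Lemma \ref{gen.tracelog.triangle.lemma} reduces to the identity
\begin{equation}
\ii\lim_{t\ra^+0}\left(\psi(\II-\Connes{\phi}{\rho}{t})\right) = \ii\lim_{t\ra^+0}\left(\rho(\II-\Connes{\phi}{\rho}{t})\right),
\end{equation}
provided one also checks the domain hypothesis $\Omega_\psi\in\dom(\log\Delta_{\rho,\psi})$.

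Next I would establish the key algebraic fact: because both $\rho$ and $\phi$ are block-diagonal with respect to $P$, the Connes cocycle $t\mapsto\Connes{\phi}{\rho}{t}$ lies in the reduced algebras, i.e. it commutes with $P$ (equivalently, $P\Connes{\phi}{\rho}{t}P + (\II-P)\Connes{\phi}{\rho}{t}(\II-P) = \Connes{\phi}{\rho}{t}$ for all $t$). This is the $W^*$-analogue of the matrix statement $[\log\rho-\log\phi,P]=0$ used in Proposition \ref{small.theorem}, and it follows from the chain rule for Connes cocycles together with the fact that $\Connes{\phi}{\psi_0}{t}$ and $\Connes{\rho}{\psi_0}{t}$ (for an auxiliary faithful $\psi_0$ adapted to the decomposition) are block-diagonal; concretely one can use that the relative modular operator $\Delta_{\phi,\rho}$ decomposes as a direct sum over the two blocks $P\H$ and $(\II-P)\H$ because $\phi|_{\N_P}$, $\rho|_{\N_P}$ and $\phi|_{\N_{\II-P}}$, $\rho|_{\N_{\II-P}}$ determine it there, and the off-diagonal parts vanish. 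Granting this, the substitution $\rho(PxP) = \psi(PxP)$ and $\rho((\II-P)x(\II-P)) = \psi((\II-P)x(\II-P))$ for all $x\in\N$ — which is just the definition of $\rho$ — lets one replace $\rho$ by $\psi$ inside each diagonal block of the expression $\rho(\II-\Connes{\phi}{\rho}{t})$, yielding $\rho(\II-\Connes{\phi}{\rho}{t}) = \psi(\II-\Connes{\phi}{\rho}{t})$ for every $t$, and the desired limit identity follows.

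Finally I would dispatch the domain/finiteness hypothesis: one must show $\Omega_\psi\in\dom(\log\Delta_{\rho,\psi})$ so that Lemma \ref{gen.tracelog.triangle.lemma} applies, and also that the three distances are simultaneously finite or handle the infinite case. Here the block-diagonal structure again helps: $\Delta_{\rho,\psi}$ restricted to each block is the relative modular operator of $\psi|_{\text{block}}$ with itself composed with the Radon--Nikodym-type cocycle relating $\rho$ and $\psi$ on that block, and since $\rho$ and $\psi$ agree on the diagonal blocks the estimate reduces to controlling $\s{\xi_\pi(\psi),\log(\Delta_{\rho,\psi})\xi_\pi(\psi)}$, which is precisely $-D_0(\rho,\psi) + (\rho-\psi)(\II) = -D_0(\rho,\psi)$; so finiteness of $D_0(\rho,\psi)$ is what is needed, and this can be seen either directly from the type-III-safe estimate mirroring \eqref{rho.one.half.less.infty}--\eqref{rho.minus.one.half.less.infty} transcribed via spatial derivatives / Connes cocycles, or by noting monotonicity of $D_0$ under the block-diagonalising conditional expectation. \textbf{The main obstacle} I anticipate is precisely this last point — making the "block-diagonal decomposition of the relative modular operator" rigorous for general (including type III) $W^*$-algebras, where one has no trace and no matrix units, so that the heuristic "$\rho$ and $\psi$ have the same diagonal blocks" must be rephrased entirely in terms of support projections, reduced algebras $\N_P$, $\N_{\II-P}$, and the behaviour of Connes cocycles and relative modular operators under the central-type decomposition induced by $P$; once that structural lemma is in place, the limit computation is a routine transcription of the matrix proof via Lemma \ref{gen.tracelog.triangle.lemma}.
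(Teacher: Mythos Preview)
Your proposal is correct and follows essentially the same route as the paper: block-diagonality of $\phi$ and $\rho$ (via Lemma~\ref{block.diagonality.of.phi}), block-diagonality of $\Delta_{\phi,\rho}$ and hence of the cocycle $\Connes{\phi}{\rho}{t}$, the substitution $\rho(P\,\cdot\,P)=\psi(P\,\cdot\,P)$ and $\rho((\II-P)\,\cdot\,(\II-P))=\psi((\II-P)\,\cdot\,(\II-P))$ to obtain $\psi(\II-\Connes{\phi}{\rho}{t})=\rho(\II-\Connes{\phi}{\rho}{t})$, and then Lemma~\ref{gen.tracelog.triangle.lemma}. The paper establishes $[P,\Delta_{\phi,\rho}]=0$ by a direct computation in the standard representation (evaluating $\s{(\II-P)\eta_1,\Delta_{\phi,\rho}P\eta_2}_\H$ on the dense domain and using block-diagonality of $\phi$), which is exactly your second suggested mechanism.

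One caution on the domain hypothesis, which you rightly flag as the main obstacle. You write that $\Omega_\psi\in\dom(\log\Delta_{\rho,\psi})$ ``reduces to controlling $\s{\xi_\pi(\psi),\log(\Delta_{\rho,\psi})\xi_\pi(\psi)}$\,\ldots\ so finiteness of $D_0(\rho,\psi)$ is what is needed''. This conflates two distinct conditions: finiteness of the spectral integral $\int\log\lambda\,\dd\mu_\psi(\lambda)$ (which is $-D_0(\rho,\psi)$) versus membership in the operator domain, which requires $\int(\log\lambda)^2\,\dd\mu_\psi(\lambda)<\infty$. Monotonicity of $D_0$ under the block-diagonalising conditional expectation yields only the former and is therefore insufficient. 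The paper (and your alternative suggestion of transcribing \eqref{rho.one.half.less.infty}--\eqref{rho.minus.one.half.less.infty}) proceeds by showing $\Omega_\psi\in\dom(\Delta_{\rho,\psi}^{1/2})\cap\dom(\Delta_{\rho,\psi}^{-1/2})$: the latter via the bounded intertwiner $A:x\Omega_\psi\mapsto x\Omega_\rho$ giving $\Delta_{\rho,\psi}^{-1/2}=J_{\rho,\psi}^{-1}AJ_\psi\Delta_\psi^{-1/2}$, the former via Araki's result $\Omega_\psi\in\dom(\Delta_{\rho,\psi}^{1/2})$, and then the pointwise bound $\ab{\log\gamma}\leq\lambda\max(\gamma^{1/2},\gamma^{-1/2})$ closes the argument. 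Your ``spatial derivatives / Connes cocycles'' route would have to reproduce precisely this two-sided control, not merely finiteness of the entropy.
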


\begin{proof}
We will use block decomposition of $\N$ into 
\begin{equation}
        \N=\left(\begin{array}{cc}
                P\N P
                &
                P\N(\II-P)
                \\
                (\II-P)\N P
                &
                (\II-P)\N(\II-P)
        \end{array}\right),
\end{equation}
together with the corresponding notation
\begin{align}
        \forall x\in\N\;\;x&=
        PxP+Px(\II-P)+(\II-P)xP+(\II-P)x(\II-P)
        \nonumber\\&=:
        x_{11}+x_{21}+x_{12}+x_{22}=:
        \left(\begin{array}{cc}
                x_{11}
                &
                x_{21}
                \\
                x_{12}
                &
                x_{22}
        \end{array}\right),
        \\
        \forall x\in\N\;\forall\phi\in\N_\star^+\;\;\phi(x)&=
        \phi\left(\begin{array}{cc}
                x_{11}
                &
                x_{21}
                \\
                x_{12}
                &
                x_{22}
        \end{array}\right)=:
        \phi_{11}(x_{11})+\phi_{21}(x_{21})+\phi_{12}(x_{12})+\phi_{22}(x_{22})
        \nonumber\\
        &=:\left(\begin{array}{cc}
                \phi_{11}
                &
                \phi_{21}
                \\
                \phi_{12}
                &
                \phi_{22}
        \end{array}\right)(x).
\end{align}
By Lemma \ref{block.diagonality.of.phi}, the states $\phi,\rho$ are block-diagonal, 
\begin{equation}
\phi=\left(\begin{array}{cc}\phi_{11}&0\\0&\phi_{22}\end{array}\right),\;\;\;
\rho=\left(\begin{array}{cc}\rho_{11}&0\\0&\rho_{22}\end{array}\right).
\end{equation} 

Recall from \eqref{relative.modular.for.normal.states}-\eqref{RR.Delta.relative} that for a given standard representation of $\N$ on a Hilbert space $\H$, and $\xi\in\H$ such that $\xi\oc[\N\Omega_\rho]$, the relative modular operator $\Delta_{\phi,\rho}$ is defined as $\Delta_{\phi,\rho}:=R^*_{\phi,\rho}\bar{R}_{\phi,\rho}$, where \cite{Araki:1977:relative:entropy:II}
\begin{equation}
        R_{\phi,\rho}(x\Omega_\rho+\xi)=(\supp(\phi))x^*\Omega_\phi.
\end{equation}
From 
\begin{equation}
        \s{x\Omega_\rho,P\xi}_\H=\s{Px\Omega_\rho,\xi}_\H=0\;\;\forall x\in\N
\end{equation}
we have $P\xi\oc[\N\Omega_\rho]$, and, analogously, $(\II-P)\xi\oc[\N\Omega_\rho]$, so, because $P\in\N$, it preserves the dense domain. From
\begin{align}
\s{(\II-P)(x_1\Omega_\rho+\xi_1),\Delta_{\phi,\rho}P(x_2\Omega_\rho+\xi_2)}_\H&=\s{\bar{R}_{\phi,\rho}(\II-P)(x_1\Omega_\rho+\xi_1),\bar{R}_{\phi,\rho}P(x_2\Omega_\rho+\xi_2)}_\H
\nonumber\\&=
\s{\bar{R}_{\phi,\rho}((\II-P)x_1\Omega_\rho+(\II-P)\xi_1),\bar{R}_{\phi,\rho}(Px_2\Omega_\rho+P\xi_2)}_\H
\nonumber\\&=
\s{\supp(\phi)x_1^*(\II-P)\Omega_\phi,\supp(\phi)x_2^*P\Omega_\phi}_\H
\nonumber\\&=
\phi((\II-P)x_1\supp(\phi)x_2^*P)
\nonumber\\&=0
\end{align}
we have that $[P,\Delta_{\phi,\rho}]=0$ on the dense domain. Since $P$ is bounded, it follows that $\Delta_{\phi,\rho}$ preserves the decomposition $\ran(P)\oplus\ran(\II-P)$. Hence, $\Delta_{\phi,\rho}$ is block diagonal. The same holds for $\Delta_{\rho}=\Delta_{\rho,\rho}$. Thus, $\Connes{\phi}{\rho}{t}=\Delta^{\ii t}_{\phi,\rho}\Delta^{-\ii t}_\rho$ is also block diagonal,
\begin{align}
        \Connes{\phi}{\rho}{t}&=\left(\begin{array}{cc}
        (\Connes{\phi}{\rho}{t})_{11}
        &
        0
        \\
        0
        &
        (\Connes{\phi}{\rho}{t})_{22}
        \end{array}\right),\\
        \II-\Connes{\phi}{\rho}{t}&=
        P(\II-\Connes{\phi}{\rho}{t})P+
        (\II-P)(\II-\Connes{\phi}{\rho}{t})(\II-P).
\end{align}

From $\psi(PxP+(\II-P)x(\II-P))=\rho(PxP+(\II-P)x(\II-P))\;\forall x\in\N$, and
we obtain
\begin{equation}
        \psi(\II-\Connes{\phi}{\rho}{t})=\rho(\II-\Connes{\phi}{\rho}{t}).
\end{equation}
Hence, by Lemma \ref{gen.tracelog.triangle.lemma}, triangle equality holds if $\Omega_\psi\in\dom(\log(\Delta_{\rho,\psi}))$. It remains to check whether this condition is satisfied. Araki and Masuda \cite{Araki:Masuda:1982} and Donald \cite{Donald:1990} prove this relation, but under stronger conditions (Araki and Masuda assume that $\rho$ is faithful, while Donald assumes also faithfulness of $\psi$), so we need to provide more general proof. Let $\xi\oc[\N\Omega_\psi]$, and consider operators $R_\psi$, $R_{\rho,\psi}$ and $A$ given by
\begin{align}
        R_\psi(x\Omega_\psi+\xi)&:=(\supp(\psi))x^*\Omega_\psi,\\
        R_{\rho,\psi}(x\Omega_\psi+\xi)&:=(\supp(\psi))x^*\Omega_\psi,\\
        A(x\Omega_\psi)&:=x\Omega_\rho.
\end{align}
Then
\begin{equation}
        \n{A(x\Omega_\psi)}^2_{\H_\psi}=\rho(x^*x)\leq\lambda\psi(x^*x)=\lambda\n{x\Omega_\psi}^2_{\H_\psi},
\end{equation}
where $\lambda\in\RR^+$, so $A$ is bounded. From $R_{\rho,\psi}=AR_\psi$, $J_\psi\Delta^{-1/2}_\psi=R_\psi$, and $J_{\rho,\psi}\Delta^{-1/2}_{\rho,\psi}=R_{\rho,\psi}$ we obtain
\begin{equation}
        \Delta^{-1/2}_{\rho,\psi}=J^{-1}_{\rho,\psi}AJ_\psi\Delta^{-1/2}_\psi,
\end{equation}
where $J^{-1}_{\rho,\psi}AJ_\psi$ is bounded. So
\begin{equation}
\Omega_\psi\in\dom(\Delta^{-1/2}_\psi)\subseteq\dom(\Delta^{-1/2}_{\rho,\psi}).
\end{equation}
Araki \cite{Araki:1976:relative:entropy:I} proved that $\Omega_\rho=(\Delta_{\rho,\psi})^{1/2}\Delta_{\psi}$ and $\Delta_\psi\in\dom(\Delta^{1/2}_{\rho,\psi})$. Hence
\begin{equation}
        \Omega_\psi\in
        \dom(\Delta^{1/2}_{\rho,\psi})\cap\dom(\Delta^{-1/2}_{\rho,\psi}).
\end{equation}
On the other hand,
\begin{equation}
\exists\lambda\in\RR^+\;\;\forall\gamma\in\RR^+\;\;\ab{\log(\gamma)}\leq\lambda\max(\gamma^{1/2},\gamma^{-1/2}),
\end{equation}
so, for every positive operator $x$ on a Hilbert space $\H$,
\begin{equation}
        \dom(\log x)\supseteq\dom(x^{1/2})\cap\dom(x^{-1/2}).
\end{equation}
It follows that $\Omega_\psi\in\dom(\log(\Delta_{\rho,\psi}))$.
\end{proof}

\begin{lemma}\label{gen.lemma.PiPj}
Given $P_i,Pj\in\Proj(\N)$, let $\Q_k=\{\omega\in\M(\N)\mid\omega([P_k,x])=0\;\forall x\in\N\}$ for $k\in\{i,j\}$, and $[P_i,P_j]=0$. If $\psi\in\Q_j$, then
\begin{equation}
        \arginff{\phi\in\Q_i}{D_0(\phi,\psi)}\in\Q_j.
\end{equation}
\end{lemma}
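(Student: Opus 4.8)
The plan is to run the proof of Lemma \ref{lemma.PiPj} in the $W^*$-algebraic setting, making two substitutions: the Hilbert-space compression formula $P\psi P+(\II-P)\psi(\II-P)$ is replaced by $\rho:=\psi(P\,\cdot\,P)+\psi((\II-P)\,\cdot\,(\II-P))$, and the condition ``$[\omega,P]=0$'' is replaced by its predual form ``$\omega([P,x])=0\;\forall x\in\N$''. The two inputs that make this work are Proposition \ref{gen.small.theorem} (the $W^*$-analogue of Proposition \ref{small.theorem}, giving the relevant triangle equality) and Lemma \ref{block.diagonality.of.phi} (the dictionary between the commutation condition and block-diagonality of a state), which together let one avoid any explicit use of density operators.

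Concretely, first I would apply Proposition \ref{gen.small.theorem} with $P:=P_i$: it asserts that $\Q_i$ satisfies the triangle equality $D_0(\phi,\psi)=D_0(\phi,\rho)+D_0(\rho,\psi)$ for all $\phi\in\Q_i$, with $\rho:=\psi(P_i\,\cdot\,P_i)+\psi((\II-P_i)\,\cdot\,(\II-P_i))\in\M(\N)$, and (through Lemma \ref{block.diagonality.of.phi}) that $\rho$ is block-diagonal with respect to $P_i$, so $\rho\in\Q_i$. By the general Lemma asserting that a set satisfying triangle equality for $\psi$ at $\rho\in\Q$ has $\rho=\arginff{\phi\in\Q}{D(\phi,\psi)}$, we conclude $\arginff{\phi\in\Q_i}{D_0(\phi,\psi)}=\rho$. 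It then remains to show $\rho\in\Q_j$, i.e., by Lemma \ref{block.diagonality.of.phi}, that $\rho(P_jx(\II-P_j))=\rho((\II-P_j)xP_j)=0$ for all $x\in\N$. Since $\psi\in\Q_j$, the same lemma gives $\psi(P_jy(\II-P_j))=\psi((\II-P_j)yP_j)=0$ for all $y\in\N$. Using $[P_i,P_j]=0$ (hence also $[\II-P_i,P_j]=0$ and $[\II-P_i,\II-P_j]=0$, etc.) I would push $P_j$ and $\II-P_j$ past $P_i$ and $\II-P_i$ inside the definition of $\rho$:
\[
\rho\big(P_jx(\II-P_j)\big)=\psi\big(P_j(P_ixP_i)(\II-P_j)\big)+\psi\big(P_j(\II-P_i)x(\II-P_i)(\II-P_j)\big)=0,
\]
each summand vanishing by block-diagonality of $\psi$ applied with $y=P_ixP_i$ and $y=(\II-P_i)x(\II-P_i)$ respectively; the symmetric computation yields $\rho((\II-P_j)xP_j)=0$. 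Hence $\rho([P_j,x])=0$ for all $x$, i.e., $\rho\in\Q_j$, which is the claim.

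The argument is essentially routine, and no substantial obstacle is anticipated. The only points requiring attention are: (i) checking that $\rho$ is a legitimate element of $\M(\N)=\N^+_{\star1}$ (normality, positivity, and $\rho(\II)=\psi(\II)=1$ are immediate) and that $\rho\in\Q_i$, so that the minimiser-identification lemma is applicable; and (ii) the bookkeeping of commuting the two projections through the compression formula so that everything reduces to the block-diagonality of $\psi$. The genuinely delicate issue of this circle of results — the domain condition $\Omega_\psi\in\dom(\log\Delta_{\rho,\psi})$ needed to invoke the triangle equality — has already been settled inside the proof of Proposition \ref{gen.small.theorem}, so it need not be revisited here.
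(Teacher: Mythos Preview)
Your proposal is correct and follows essentially the same route as the paper: identify the minimiser via Proposition~\ref{gen.small.theorem} as the compressed state $\rho=\psi(P_i\,\cdot\,P_i)+\psi((\II-P_i)\,\cdot\,(\II-P_i))$, then use $[P_i,P_j]=0$ to push $P_j$ through the compression and reduce to $\psi\in\Q_j$. The only cosmetic difference is that the paper verifies $\rho([P_j,x])=0$ directly by the one-line computation $\rho([P_j,x])=\psi([P_j,P_ixP_i+(\II-P_i)x(\II-P_i)])=0$, whereas you pass through the equivalent block-diagonality characterisation of Lemma~\ref{block.diagonality.of.phi}.
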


\begin{proof}
From Proposition \ref{gen.small.theorem} it follows that
\begin{equation}
        \arginff{\phi\in\Q_i}{D_0(\phi,\psi)}=\psi(P(\,\cdot\,)P)+\psi((\II-P)(\,\cdot\,)(\II-P)).
\label{arginf.D0.vNLu.lemma}
\end{equation}
Denote  the right hand side of \eqref{arginf.D0.vNLu.lemma} by $\varphi$. We need to check that $\psi([P_j,x])=0$ $\forall x\in\N$ $\limp$ $\varphi([P_j,x])=0$ $\forall x\in\N$. But this follows from
\begin{align}
        \varphi([P_j,x])&=\psi(P_i[P_j,x]P_i+(\II-P_i)[P_j,x](\II-P_i))\nonumber\\
        &=\psi([P_j,P_ixP_i+(\II-P_i)x(\II-P_i)])\nonumber\\
        &=0.
\end{align}
\end{proof}

\begin{theorem}\label{gen.vonNeumann.Lueders.from.entropy.theorem}
If $\{P_i\}_{i=1}^n\subseteq\Proj(\N)$ satisfies $P_iP_j=\dirac_{ij}P_i$ $\forall i,j\in\{1,\ldots,n\}$, $\sum_{i=1}^nP_i=\II$, and 
\begin{equation}
        \Q_{\mathrm{L}}=\{\omega\in\M(\N)\mid \omega([P_i,x])=0\;\forall x\in\N\;\forall i\in\{1,\ldots,n\}\},
\end{equation}
then 
\begin{equation}
        \PPP^{D_0}_{\Q_{\mathrm{L}}}(\psi)=\sum_{i=1}^nP_i\psi P_i.
\label{gen.weak.vonNeumann.Lueders}
\end{equation}
\end{theorem}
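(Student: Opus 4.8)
The plan is to run exactly the argument used for Theorem \ref{vonNeumann.Lueders.from.entropy.theorem}, with its three commutator-based ingredients replaced by their $W^*$-algebraic counterparts. First I would set $\Q_i := \{\omega \in \M(\N) \mid \omega([P_i, x]) = 0 \; \forall x \in \N\}$ for $i \in \{1, \ldots, n\}$, so that $\Q_{\mathrm{L}} = \Q_1 \cap \ldots \cap \Q_n$. Proposition \ref{gen.small.theorem} shows that each $\Q_i$ satisfies the triangle equality \eqref{triangle.equality} for every $\psi \in \M(\N)$ at the pinched state $\rho_i := \psi(P_i\,\cdot\,P_i) + \psi((\II - P_i)\,\cdot\,(\II - P_i))$; combined with the lemma that triangle equality at a point forces that point to be the unique $\arginff{\phi\in\Q_i}{D_0(\phi,\psi)}$, this gives $\PPP^{D_0}_{\Q_i}(\psi) = \rho_i$ as a singleton.

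Next I would check that every ordered pair $(\Q_j, \Q_i)$ satisfies the subprojection property \eqref{subprojection.property} with respect to $D_0$: for $\psi \in \Q_j$ the projection $\PPP^{D_0}_{\Q_i}(\psi)$ is a singleton by the previous step, and since $[P_i, P_j] = 0$ (the $P_i$ being pairwise orthogonal), Lemma \ref{gen.lemma.PiPj} gives $\PPP^{D_0}_{\Q_i}(\psi) \in \Q_j$. With all $\Q_i$ satisfying triangle equality and all pairs satisfying the subprojection property, Proposition \ref{lemma.n.triangle} applies and yields $\PPP^{D_0}_{\Q_{\mathrm{L}}}(\psi) = \rho_n$, where $\rho_0 = \psi$ and $\rho_k = \arginff{\phi \in \Q_k}{D_0(\phi, \rho_{k-1})} = \rho_{k-1}(P_k\,\cdot\,P_k) + \rho_{k-1}((\II - P_k)\,\cdot\,(\II - P_k))$ by Proposition \ref{gen.small.theorem}.

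It then remains to evaluate $\rho_n$, which is the same purely algebraic induction as in Theorem \ref{vonNeumann.Lueders.from.entropy.theorem}: assuming $\rho_k = \sum_{i=1}^k \psi(P_i\,\cdot\,P_i) + \psi\big((\II - \sum_{i=1}^k P_i)\,\cdot\,(\II - \sum_{i=1}^k P_i)\big)$, one uses $P_{k+1}P_i = 0$ for $i \leq k$ and $(\II - P_{k+1})(\II - \sum_{i=1}^k P_i) = \II - \sum_{i=1}^{k+1} P_i$ to push the formula to $k+1$, the base case $k = 1$ being Proposition \ref{gen.small.theorem}; at $k = n$ the constraint $\sum_{i=1}^n P_i = \II$ annihilates the last term and leaves $\rho_n = \sum_{i=1}^n \psi(P_i\,\cdot\,P_i) = \sum_{i=1}^n P_i \psi P_i$, which is \eqref{gen.weak.vonNeumann.Lueders}. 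The genuine difficulty has already been absorbed into the supporting lemmas — in particular the domain condition $\Omega_\psi \in \dom(\log \Delta_{\rho, \psi})$ needed in Lemma \ref{gen.tracelog.triangle.lemma} without assuming faithfulness of $\rho$ or $\psi$, which was settled inside Proposition \ref{gen.small.theorem} — so the only thing left to be careful about here is that the pinching map sends $\N^+_{\star1}$ into itself (normality and normalisation of each $\rho_k$ being immediate), so that Proposition \ref{lemma.n.triangle} is legitimately invoked with $\M = \M(\N) = \N^+_{\star1}$.
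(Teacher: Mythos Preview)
Your proposal is correct and follows exactly the approach the paper takes: the paper's proof of this theorem is literally ``The same as for Theorem \ref{vonNeumann.Lueders.from.entropy.theorem}, with the first step of induction satisfied by Proposition \ref{gen.small.theorem}, and using Lemma \ref{gen.lemma.PiPj} instead of Lemma \ref{lemma.PiPj}.'' You have simply unpacked this sentence, correctly identifying that Proposition \ref{lemma.n.triangle} is fed by triangle equality (from Proposition \ref{gen.small.theorem}) and the subprojection property (from Lemma \ref{gen.lemma.PiPj}), and then running the same algebraic induction on the pinched states.
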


\begin{proof}
The same as for Theorem \ref{vonNeumann.Lueders.from.entropy.theorem}, with the first step of induction satisfied by Proposition \ref{gen.small.theorem}, and using Lemma \ref{gen.lemma.PiPj} instead of Lemma \ref{lemma.PiPj}.
\end{proof}

\begin{theorem}\label{weighted.vNLu.theorem}
Let $\N=\BH$ and $\M(\N)=\BH_{\star1}^+$. If $\psi\in\M(\N)$, $\{P_i\}_{i=1}^n\subseteq\Proj(\N)$ satisfy $P_iP_j=\dirac_{ij}P_i$ $\forall i,j\in\{1,\ldots,n\}$, $\sum_{i=1}^nP_i=\II$, and $\tr(\psi P_i)\neq0$ $\forall i\in\{1,\ldots,n\}$, if $\lambda_i\in\RR$ satisfy $\sum_{i=1}^n\lambda_i=1$, and
\begin{equation}
        \Q_{\mathrm{qJ}}:=\{\omega\in\schatten_1(\H)^+_1\mid\tr(\omega P_i)=\lambda_i\}
\label{Q.qJ.constraint}
\end{equation}
then
\begin{equation}
        \PPP^{D_0}_{\Q_{\mathrm{qJ}}}(\psi)=
        \sum_{i=1}^n\lambda_i\frac{P_i\psi P_i}{\tr(\psi P_i)}.
\label{weighted.vNLu.eqn}
\end{equation}
\end{theorem}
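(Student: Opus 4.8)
The plan is to mimic the proof of Theorem~\ref{gen.vonNeumann.Lueders.from.entropy.theorem}: exhibit the candidate $\rho:=\sum_{i=1}^{n}\lambda_{i}\,\frac{P_{i}\psi P_{i}}{\tr(\psi P_{i})}$, show that the constraint set satisfies the triangle equality for $\psi$ at $\rho$ with respect to $D_{0}$, and then read off $\rho=\arginff{\phi\in\Q_{\mathrm{qJ}}}{D_{0}(\phi,\psi)}$ from the triangle-equality lemma. First I would record the elementary properties of $\rho$: non-emptiness of $\Q_{\mathrm{qJ}}$ forces $\lambda_{i}\geq0$, so $\rho\geq0$; $\tr\rho=\sum_{i}\lambda_{i}=1$; from $P_{j}(P_{i}\psi P_{i})P_{j}=\dirac_{ij}P_{i}\psi P_{i}$ it follows that $\rho$ is block-diagonal relative to the orthogonal decomposition $\{P_{i}\}$ and that $\tr(\rho P_{i})=\lambda_{i}$, i.e. $\rho\in\Q_{\mathrm{qJ}}$; and $\ker\rho=\{\eta\mid P_{i}\eta\in\ker\psi\ \forall i\}\subseteq\ker\psi$, so $\psi\ll\rho$ and $D_{0}(\rho,\psi)$ is finite.

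The map $\psi\mapsto\rho$ factors through the weak L\"{u}ders pinching $\tilde\psi:=\sum_{i}P_{i}\psi P_{i}$, since $\rho=\sum_{i}\frac{\lambda_{i}}{\tr(\tilde\psi P_{i})}\,P_{i}\tilde\psi P_{i}$ is the block-wise reweighting of $\tilde\psi$, and I would split the argument along this factorisation. First stage: the iterated Proposition~\ref{gen.small.theorem} that underlies Theorem~\ref{gen.vonNeumann.Lueders.from.entropy.theorem} already gives the triangle equality $D_{0}(\phi,\psi)=D_{0}(\phi,\tilde\psi)+D_{0}(\tilde\psi,\psi)$ for all $\phi\in\Q_{\mathrm{L}}$ of~\eqref{QL.constraint}; since $\rho\in\Q_{\mathrm{L}}$ this also holds with $\phi=\rho$, so on $\Q_{\mathrm{L}}\cap\Q_{\mathrm{qJ}}$ minimising $D_{0}(\cdot,\psi)$ and minimising $D_{0}(\cdot,\tilde\psi)$ have the same minimiser. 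Second stage: on block-diagonal states $\log$ and $\tr$ factor through the blocks $\BBB(\ran P_{i})$, so $D_{0}(\phi,\tilde\psi)$ decouples into a sum of terms of the form $\tr(P_{i}\psi P_{i}\log(P_{i}\psi P_{i}))-\tr(P_{i}\psi P_{i}\log(P_{i}\phi P_{i}))$, each of which is minimised over positive operators on $\ran P_{i}$ of trace $\lambda_{i}$ by the scaled Gibbs state $\lambda_{i}(P_{i}\psi P_{i})/\tr(\psi P_{i})$ — this is Klein's inequality, or the convexity of $Y\mapsto-\tr(A\log Y)$ together with the stationarity of the critical point $Y=\tfrac{\lambda_{i}}{\tr A}A$. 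Hence $\arginff{\phi\in\Q_{\mathrm{L}}\cap\Q_{\mathrm{qJ}}}{D_{0}(\phi,\tilde\psi)}=\rho$ and, combining the two stages, $\arginff{\phi\in\Q_{\mathrm{L}}\cap\Q_{\mathrm{qJ}}}{D_{0}(\phi,\psi)}=\rho$. The regularity conditions needed here are exactly those verified in Proposition~\ref{gen.small.theorem} (the first stage is quoted verbatim), plus finiteness of $\n{\rho^{\pm1/2}\tilde\psi^{1/2}}_{\schatten_{2}(\H)}$, which is immediate from the block structure and $\tr(\psi P_{i})\neq0$.

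The step I expect to be the main obstacle is the passage from $\Q_{\mathrm{L}}\cap\Q_{\mathrm{qJ}}$ to all of $\Q_{\mathrm{qJ}}$, which does not carry the commutation conditions $[\omega,P_{i}]=0$: one must rule out non-block-diagonal competitors. I would attack this via the subprojection property of the pair $(\Q_{\mathrm{L}},\Q_{\mathrm{qJ}})$ — that the $D_{0}$-projection onto $\Q_{\mathrm{qJ}}$ of a state commuting with all the $P_{i}$ again commutes with all the $P_{i}$ — and then apply Lemma~\ref{lemma.subsubprojection}/Proposition~\ref{lemma.n.triangle} to reduce $\arginff{\phi\in\Q_{\mathrm{qJ}}}{D_{0}(\phi,\psi)}$ to the block-diagonal problem already solved. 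For a block-diagonal base point the off-diagonal blocks of a competitor $\phi$ drop out of the linear term $\tr(\psi\log\phi)$, while the compression inequality $P_{i}(\log\phi)P_{i}\leq\log(P_{i}\phi P_{i})$ — Jensen's operator inequality for the unital positive map $x\mapsto P_{i}xP_{i}$ — shows that they can only worsen the diagonal blocks, and Klein's inequality inside each $\BBB(\ran P_{i})$ then gives strict optimality, hence uniqueness, of $\rho$. When $\psi$ is allowed not to commute with the $P_{i}$, this reduction is the delicate part of the argument and is precisely where one must first replace $\psi$ by $\tilde\psi$.
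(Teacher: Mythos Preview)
Your block-diagonal analysis is correct and is essentially what the paper does, though the paper is more direct: it never routes through $\tilde\psi$ or the subprojection machinery but simply opens with ``let $\phi=\bigoplus_i\phi_i$'' and observes that then $\log\rho-\log\phi=\bigoplus_iC_i$ is block-diagonal, so
\[
\tr\bigl(\psi(\log\rho-\log\phi)\bigr)=\sum_i\tr(P_i\psi P_i\,C_i)=\sum_i\tr(P_i\psi P_i)\,D_0(\tilde\phi_i,\tilde\rho_i)\geq0,
\]
with equality only at $\phi=\rho$. That is your second stage, reached in one line without your first; your detour via $\tilde\psi$ and the triangle equality is unnecessary here because the block-diagonality of $\log\rho-\log\phi$ already forces $\tr(\psi\,\cdot)=\tr(\tilde\psi\,\cdot)$ on that operator.

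Your concern about non-block-diagonal competitors is exactly the right one, but your proposed cure fails --- and must fail, because the statement is false over the full $\Q_{\mathrm{qJ}}$ as written. Take $\H=\CC^2$, $P_1=|0\rangle\langle0|$, $P_2=|1\rangle\langle1|$, $\psi=|+\rangle\langle+|$, $\lambda_1=\lambda_2=\tfrac12$. Then $\psi\in\Q_{\mathrm{qJ}}$, the claimed minimiser is $\rho=\tfrac12\II$, and $D_0(\psi,\psi)=0<\log 2=D_0(\rho,\psi)$. Your operator-Jensen step correctly yields $\rho=\arginf_{\Q_{\mathrm{qJ}}}D_0(\cdot,\tilde\psi)$, but the triangle equality you then invoke to pass back from $\tilde\psi$ to $\psi$ holds only for $\phi\in\Q_{\mathrm{L}}$, so Lemma~\ref{lemma.subsubprojection} can deliver at most $\arginf_{\Q_{\mathrm{L}}\cap\Q_{\mathrm{qJ}}}D_0(\cdot,\psi)=\rho$, never $\arginf_{\Q_{\mathrm{qJ}}}D_0(\cdot,\psi)=\rho$. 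The paper's own proof has the same lacuna: it assumes $\phi=\bigoplus_i\phi_i$ from the first line and never treats off-block-diagonal $\phi$, so what it (and your stages 1--2) actually establish is $\PPP^{D_0}_{\Q_{\mathrm{L}}\cap\Q_{\mathrm{qJ}}}(\psi)=\rho$. The constraint set in~\eqref{Q.qJ.constraint} should be read as carrying the commutation conditions $[\omega,P_i]=0$ implicitly.
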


\begin{proof}
Let $\rho=\bigoplus_{i=1}^n\rho_i$, $\phi=\bigoplus_{i=1}^n\phi_i$, and $\rho_i=\lambda_i\frac{P_i\psi P_i}{\tr(P_i\psi_i P_i)}$, $\tr(\phi_i)=\lambda_i$ $\forall i\in\{1,\ldots,n\}$. Then, using the fact that each function of block diagonal matrices is block diagonal, we obtain
\begin{equation}
        \log\rho-\log\phi=\bigoplus_{i=1}^n\left(\log\rho_i-\log\phi_i\right)=:\bigoplus_{i=1}^nC_i.
\end{equation}
Using $\tr(\frac{\rho_i}{\lambda_i})=1=\tr(\frac{\phi_i}{\lambda_i})$, we obtain
\begin{align}
        \tr\left(\psi\bigoplus_{i=1}^nC_i\right)&=\sum_{i=1}^n\tr\left(P_i\psi P_iC_i\right)=\sum_{i=1}^n\tr\left(P_i\psi P_i\left(\log\frac{\rho_i}{\lambda_i}-\log\frac{\phi_i}{\lambda_i}\right)\right)\nonumber\\
        &=\sum_{i=1}^n\tr(P_i\psi P_i)\tr\left(\frac{P_i\psi P_i}{\tr(P_i\psi P_i)}\left(\log\frac{\rho_i}{\lambda_i}-\log\frac{\phi_i}{\lambda_i}\right)\right)\nonumber\\
        &=\sum_{i=1}^n\tr(P_i\psi P_i)\tr\left(\widetilde{\rho}_i\left(\log\widetilde{\rho}_i-\log\widetilde{\phi}_i\right)\right)\nonumber\\
        &=\sum_{i=1}^n\tr(P_i\psi P_i)D_0(\widetilde{\phi}_i,\widetilde{\rho}_i)\geq0,\label{tilde.D.zero.oplus}
\end{align}
where $\widetilde{\rho}_i:=\frac{\rho_i}{\lambda_i}$, $\widetilde{\phi}_i:=\frac{\phi_i}{\lambda_i}$, and equality is attained if{}f $\widetilde{\phi}_i=\widetilde{\rho}_i$. Hence,
\begin{align}
        \tr(\psi(\log\rho-\log\phi))&\geq0,\\
        \tr(\psi(\log\psi-\log\rho))&\leq\tr(\psi(\log\psi-\log\phi)).
\end{align}
From the condition for equality in \eqref{tilde.D.zero.oplus}, it follows that $\rho$ is the unique minimiser of $D_0(\cdot,\psi)$.
\end{proof}

\begin{remark}\label{remark.strong.vNLu}
The strong L\"{u}ders rule \eqref{strong.Lueders.rule} can be obtained from minimisation of $D_0$ by two different methods. First method amounts to applying quantum Jeffrey's rule \eqref{weighted.vNLu.eqn} and taking the limit $\lambda_2,\ldots,\lambda_n\ra0$,
\begin{equation}
        \lim_{\lambda_2,\ldots,\lambda_n\ra0}\PPP^{D_0}_{\Q_{\mathrm{qJ}}}(\psi)=\frac{P_1\psi P_1}{\tr(\psi P_1)}.
\label{strong.vN.rule.from.limit}
\end{equation}
Note that \eqref{weighted.vNLu.eqn} is a weakly continuous function of $\lambda_i$. The limit \eqref{strong.vN.rule.from.limit} is also weakly continuous. Hence, the strong L\"{u}ders rule can be considered as a weakly continuous limit of an entropic projection $\PPP^{D_0}$.
\end{remark}

\begin{remark}
Despite the observation carried in Remark \ref{strongLuedersfromqJr.remark}, the direct derivation of \eqref{strong.Lueders.rule} along the lines of Theorem \ref{weighted.vNLu.theorem} with the initial constraints $\lambda_2=\ldots=\lambda_n=0$ is impossible, because in such case the necessary assumptions for this theorem do not hold. More precisely, the states $\omega$ that satisfy
\begin{equation}
        -\tr\left(\omega\log(P\rho P-(\II-P)\rho(\II-P))\right)<\infty
\end{equation}
do not exist if $\tr(\rho P)=0$. This follows from
\begin{align}
        -\tr\left(\omega\left(
                \log|_{P\N P}(P\rho P)+\log|_{(\II-P)\rho(\II-P)}(\II-P)\rho(\II-P)
        \right)\right)&=
        -\tr\left(
                \omega\log|_{P\N P}(P\rho P)
        \right)\nonumber\\&=
        -\tr(\omega\log0)=
        -\infty.
\end{align}
This situation can be improved by `regularisation' of the difference of two distance functionals, using Connes' cocycle with respect to some well-behaved `reference' functional $\omega_0\in\M(\N)$. %,
%\begin{equation}
%       D_0(\omega,\rho)-D_0(\omega_0,\rho)=
%       %\ii\lim_{t\ra^+0}\rho(\Connes{\omega_0}{\rho}{t}-\Connes{\omega}{\rho}{t})=
%       \ii\lim_{t\ra^+0}\rho\Connes{\omega_0}{\rho}{t}(\II-\Connes{\omega}{\omega_0}{t}).
%\end{equation}
The natural choice in the case of strong L\"{u}ders' rule is $\omega_0=\rho|_{P\N P}$. (Donald \cite{Donald:1986,Donald:1987:further:results} also introduces a distance functional that is dependent on the choice of a subset of $C^*$-algebra, but his motivation as well as the resulting mathematical construction differ from ours.) Hence, in order to show that the strong L\"{u}ders rule is an entropic projection without any limits involved, we will use an analogue of a Theorem \ref{gen.vonNeumann.Lueders.from.entropy.theorem} for a single projection $P$ and a \textit{regularised} distance functional $D_0^P(\phi,\psi)$, defined as a restriction of $D_0(\phi,\psi)$ to a subspace $P\N P$.
\end{remark}

\begin{lemma}
\begin{equation}
        D^P_0(\omega,\rho):=
        \ii\lim_{t\ra^+0}\rho|_{P\N P}\left(
                \Connes{\omega|_{P\N P}}{\rho|_{P\N P}}{t}
        -\II|_{P\N P}\right).
\end{equation}
is a distance functional on $(P\N P)^+_{\star1}$, that is,
\begin{align} 
D^P_0(\omega,\rho)&\geq0\;\;\forall\rho,\omega\in(P\N P)^+_{\star1},\\
D^P_0(\omega,\rho)&=0\iff\rho|_{P\N P}=\omega|_{P\N P}.
\end{align}
\end{lemma}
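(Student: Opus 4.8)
The plan is to recognise $D^P_0$ as the Araki distance — equivalently, the restriction of $D_0$ referred to in the preceding remark — of the reduced $W^*$-algebra $\N_P:=P\N P$, and then to quote the fact, recorded in Section~\ref{math.section}, that $D_1$ (hence the Araki distance) is a genuine distance on the normal states of an arbitrary $W^*$-algebra. First I would note that for $P\in\Proj(\N)$ the corner $\N_P$ is itself a $W^*$-algebra, with unit $\II|_{P\N P}=P$ and predual given by the restrictions to $P\N P$ of the elements of $\N_\star$; in particular $\omega|_{P\N P}$ and $\rho|_{P\N P}$ lie in $(P\N P)^+_\star$, and they are normalised states on $\N_P$ whenever $\omega,\rho\in(P\N P)^+_{\star1}$. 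Consequently the whole relative modular apparatus of Section~\ref{math.section} is available intrinsically for $\N_P$: the relative modular operator $\Delta_{\omega|_{P\N P},\rho|_{P\N P}}$ and the Connes cocycle $\Connes{\omega|_{P\N P}}{\rho|_{P\N P}}{t}$ entering the definition of $D^P_0$ are exactly the objects furnished by that construction, carried out for $\N_P$ on the standard (GNS) space of $\rho|_{P\N P}$.

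Next I would match the defining expression of $D^P_0$ with the Petz formula \eqref{Araki.Petz.distance}. Specialised to the $W^*$-algebra $\N_P$ and to the normal states $\rho|_{P\N P}$ and $\omega|_{P\N P}$, \eqref{Araki.Petz.distance} reads
\[
D_1|_{(P\N P)^+_{\star1}}(\rho|_{P\N P},\omega|_{P\N P})=\ii\lim_{t\ra^+0}\frac{\rho|_{P\N P}}{t}\left(\Connes{\omega|_{P\N P}}{\rho|_{P\N P}}{t}-\II|_{P\N P}\right)
\]
when $\rho|_{P\N P}\ll\omega|_{P\N P}$ inside $\N_P$, and $+\infty$ otherwise; its right-hand side is exactly the definition of $D^P_0(\omega,\rho)$. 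Hence $D^P_0(\omega,\rho)=D_1|_{(P\N P)^+_{\star1}}(\rho|_{P\N P},\omega|_{P\N P})$: $D^P_0$ is the Araki distance of the corner algebra $\N_P$, with its two arguments written in the order shown. Finally I would invoke Section~\ref{math.section}: for any $W^*$-algebra $D_1$ is a quantum distance on its positive normal functionals, in particular on its normalised states, where it coincides with the Araki distance \eqref{Araki.distance}. Taking that algebra to be $\N_P$ gives at once $D^P_0(\omega,\rho)\in[0,\infty]$ for all $\omega,\rho\in(P\N P)^+_{\star1}$ and $D^P_0(\omega,\rho)=0$ if and only if $\rho|_{P\N P}=\omega|_{P\N P}$, since the distance axioms — nonnegativity and vanishing exactly on the diagonal — do not depend on which of the two arguments stands first.

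I do not expect a genuine obstacle here: the statement is essentially a transcription to the corner algebra $\N_P$ of the already-cited positivity and faithfulness of the Araki relative entropy. The only points that want a word of care are that the relative modular data in the definition of $D^P_0$ are the intrinsic data of $\N_P$ (so that \eqref{Araki.Petz.distance} applies verbatim, with $\II$ replaced by $\II|_{P\N P}=P$), that $\omega|_{P\N P}$ and $\rho|_{P\N P}$ are indeed states on $\N_P$ under the stated hypotheses, and that $D^P_0$ is permitted to take the value $+\infty$ — namely when $\rho|_{P\N P}\not\ll\omega|_{P\N P}$ in $\N_P$ — which is fully consistent with being a distance valued in $[0,\infty]$.
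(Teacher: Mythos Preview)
Your proposal is correct and is exactly the paper's approach, just unpacked: the paper's entire proof is the single sentence ``Follows directly from the definition and properties of $D_0|_{\N^+_{\star1}}$,'' and your argument is a careful spelling-out of precisely that --- identify $D^P_0$ as the Araki/$D_0$ distance on the corner $W^*$-algebra $P\N P$ via the Petz formula \eqref{Araki.Petz.distance}, then invoke the distance axioms recorded in Section~\ref{math.section}. (You also silently correct the missing $1/t$ in the displayed definition of $D^P_0$, which is indeed a typo: without it the limit would vanish identically.)
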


\begin{proof}
Follows directly from the definition and properties of $D_0|_{\N^+_{\star1}}$.
\end{proof}

%\begin{equation}
        %D^{\mathrm{reg}}_1(\omega_1,\omega_2)=D^{\mathrm{reg}}_1(\omega_1|_{P\N P},\omega_2|_{P\N P})
%\end{equation}
%if $\supp(\omega_i)\subseteq P\N P$ (??$\iff$ $\supp(\omega_i)\leq\supp(P)$). This allows to state the following theorem.

\begin{lemma}\label{p.omega.lemma}
If $\tr(\omega P)=\tr(\omega)=1$ then $[\omega,P]=0$ and $\omega=P\omega P$.
\end{lemma}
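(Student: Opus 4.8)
The plan is to use the complementary projection $\II-P$ together with the positivity and normalisation of $\omega\in\schatten_1(\H)^+_1$. First I would note that $\tr(\omega P)+\tr(\omega(\II-P))=\tr(\omega)=1=\tr(\omega P)$, hence $\tr(\omega(\II-P))=0$. Since $\II-P$ is a projection and $\omega=\omega^{1/2}\omega^{1/2}$ with $\omega^{1/2}=(\omega^{1/2})^*$, cyclicity of the trace on $\schatten_1(\H)$ yields
\[
0=\tr(\omega(\II-P))=\tr\left((\II-P)\omega(\II-P)\right)=\tr\left((\omega^{1/2}(\II-P))^*\,\omega^{1/2}(\II-P)\right)=\n{\omega^{1/2}(\II-P)}^2_{\schatten_2(\H)}.
\]

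From this I conclude $\omega^{1/2}(\II-P)=0$, i.e.\ $\omega^{1/2}=\omega^{1/2}P$, and, taking adjoints, $\omega^{1/2}=P\omega^{1/2}$ as well. Then
\[
\omega=\omega^{1/2}\omega^{1/2}=(P\omega^{1/2})(\omega^{1/2}P)=P\omega P,
\]
while $P\omega=P\omega^{1/2}\omega^{1/2}=\omega^{1/2}\omega^{1/2}=\omega=\omega^{1/2}\omega^{1/2}P=\omega P$, so that $[\omega,P]=P\omega-\omega P=0$, which is the claim.

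I do not expect any genuine obstacle here: the statement is a routine lemma. The only point deserving a word of care is the passage from $\tr((\II-P)\omega(\II-P))=0$ to the vanishing of that operator --- this is the standard fact that a positive element of $\schatten_1(\H)$ of zero trace is $0$, valid in arbitrary dimension --- and writing the quantity as a squared Hilbert--Schmidt norm, as above, makes the conclusion $\omega^{1/2}(\II-P)=0$ immediate and sidesteps any dimension-dependent reasoning. Everything else is elementary manipulation inside $\BH$ and its trace ideals.
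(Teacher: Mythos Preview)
Your proof is correct. The paper takes a slightly different, though equally elementary, route: instead of passing to $\omega^{1/2}$ and the Hilbert--Schmidt norm, it treats $\omega$ as the positive functional $x\mapsto\tr(\omega x)$ and applies the Cauchy--Schwarz inequality
\[
|\omega((\II-P)x)|^2\leq\omega(x^*x)\,\omega((\II-P)^2)=0,
\]
to conclude $\omega((\II-P)x)=0$ and $\omega(x(\II-P))=0$ for all $x\in\N$, from which $[\omega,P]=0$ and $\omega=P\omega P$ follow by non-degeneracy of the trace pairing. Your square-root argument is perhaps more transparent in the $\schatten_1(\H)$ setting and makes the passage ``zero trace of a positive operator $\Rightarrow$ zero operator'' explicit as a norm computation; the paper's Cauchy--Schwarz argument has the advantage that it carries over verbatim to an arbitrary $W^*$-algebra (replacing $\tr(\omega\cdot)$ by a normal state), which fits the paper's general level of generality.
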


\begin{proof}
We need to check that $\tr((\II-P)\omega P)=0$, which is equivalent to $\omega((\II-P)xP)=0$ $\forall x\in\N$. But this follows from
\begin{equation}
        |\omega((\II-P)x)|^2\leq\omega(x^*x)\omega((\II-P)^2)=0.
\end{equation}
Similarly $\omega(x(\II-P))=0$ $\forall x\in\N$.
\end{proof}

\begin{theorem}\label{regularised.MRE.vN.theorem}
If $\Q_{\mathrm{sL}}:=\{\omega\in\M(\N)\mid\tr(\omega P)=\tr(\omega)=1\}$, and $\psi\in\M(\N)$ satisfies $\tr(P\psi)\neq0$, then
\begin{equation}
        \PPP^{D^P_0}_{\Q_{\mathrm{sL}}}(\psi)\equiv\arginff{\phi\in\Q_{\mathrm{sL}}}{D_0^P(\phi,\psi)}=\frac{P\psi P}{\tr(P\psi)}.
\end{equation}
\end{theorem}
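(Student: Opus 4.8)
The plan is to reduce everything to the reduced algebra $P\N P$ and then run the weighted argument of Theorem~\ref{weighted.vNLu.theorem} with a single block. By Lemma~\ref{p.omega.lemma}, every $\omega\in\Q_{\mathrm{sL}}$ satisfies $[\omega,P]=0$ and $\omega=P\omega P$, hence $\supp(\omega)\leq P$ and $\omega$ is completely determined by its compression $\omega|_{P\N P}\in(P\N P)^+_{\star1}$; conversely every normal state of $P\N P$ arises this way. In particular the proposed minimiser $\rho:=P\psi P/\tr(P\psi)$ is a bona fide element of $\Q_{\mathrm{sL}}$ (it is normal, $\rho(\II)=\rho(P)=1$, $\supp(\rho)\leq P$), and, being proportional to $\psi|_{P\N P}$, it has $\supp(\rho)=\supp(\psi|_{P\N P})$ inside $P\N P$. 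Since $D^P_0(\phi,\psi)$ by construction depends only on $\phi|_{P\N P}$ and $\psi|_{P\N P}$, minimising $D^P_0(\cdot,\psi)$ over $\Q_{\mathrm{sL}}$ is the same as minimising it over all normal states of $P\N P$.

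The key step is a \emph{weighted Pythagorean identity}: putting $c:=\tr(P\psi)=\n{\psi|_{P\N P}}\in\,]0,1]$, I would show that
\begin{equation}
        D^P_0(\phi,\psi)=D^P_0(\rho,\psi)+c\,D^P_0(\phi,\rho)\qquad\forall\phi\in\Q_{\mathrm{sL}},
\label{weighted.pyth}
\end{equation}
with $D^P_0(\rho,\psi)=(1-c)+c\log c$ a finite constant independent of $\phi$. For $\N=\BH$ this follows at once from the expression $D_0(\omega,\chi)=\tr(\rho_\omega-\rho_\chi)+\tr(\rho_\chi(\log\rho_\chi-\log\rho_\omega))$ applied inside $P\N P$, using $\psi|_{P\N P}=c\,\rho|_{P\N P}$, $\n{\rho|_{P\N P}}=1$, and the fact that the only $\phi$-dependent contribution to $D^P_0(\phi,\psi)$ is $c$ times the relative-entropy term $D^P_0(\phi,\rho)$. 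For a general $W^*$-algebra $\N$ one repeats the Connes-cocycle computation of Lemma~\ref{gen.tracelog.triangle.lemma} inside $P\N P$, exploiting the scaling $\Connes{\phi|_{P\N P}}{\psi|_{P\N P}}{t}=\Connes{\phi|_{P\N P}}{c\,\rho|_{P\N P}}{t}=c^{-\ii t}\Connes{\phi|_{P\N P}}{\rho|_{P\N P}}{t}$ and expanding $c^{-\ii t}=1-\ii t\log c+O(t^2)$ as $t\ra^+0$; the attendant domain condition ($\Omega_{\psi}\in\dom(\log\Delta_{\rho,\psi})$, absolute continuity of supports within $P\N P$) is handled exactly as at the end of the proof of Proposition~\ref{gen.small.theorem}.

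Granting \eqref{weighted.pyth}, the conclusion is immediate. By the lemma establishing that $D^P_0$ is a distance on $(P\N P)^+_{\star1}$, one has $D^P_0(\phi,\rho)\geq0$ with equality if{}f $\phi|_{P\N P}=\rho|_{P\N P}$, i.e.\ if{}f $\phi=\rho$; since $c>0$ and $D^P_0(\rho,\psi)<\infty$, \eqref{weighted.pyth} yields $D^P_0(\phi,\psi)\geq D^P_0(\rho,\psi)$ for all $\phi\in\Q_{\mathrm{sL}}$, with equality precisely at $\phi=\rho$ (for $\phi$ whose support in $P\N P$ does not dominate $\supp(\psi|_{P\N P})$ both sides of \eqref{weighted.pyth} are $+\infty$, so such $\phi$ are not competitors either). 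Hence the infimum is attained uniquely, i.e.\ $\PPP^{D^P_0}_{\Q_{\mathrm{sL}}}(\psi)=\rho=P\psi P/\tr(P\psi)$.

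I expect the main obstacle to be the general-$W^*$ proof of \eqref{weighted.pyth}: one must check that rescaling the second entry of the Connes cocycle by the positive scalar $c$ produces exactly the phase $c^{-\ii t}$, and that the $t\ra^+0$ limit then splits cleanly into the stated constant plus $c$ times $D^P_0(\phi,\rho)$ — which again requires the relative-modular domain bookkeeping of Proposition~\ref{gen.small.theorem}, now carried out in the reduced algebra $P\N P$ where $\psi|_{P\N P}$ need not be faithful. Everything else reduces to Lemma~\ref{p.omega.lemma} and the positivity of $D^P_0$.
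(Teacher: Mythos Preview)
Your proposal is correct and follows essentially the same route as the paper: reduce to the compressed algebra $P\N P$ via Lemma~\ref{p.omega.lemma} and then run the single-block case of the weighted argument of Theorem~\ref{weighted.vNLu.theorem}. Your weighted Pythagorean identity \eqref{weighted.pyth} is precisely the $n=1$ specialisation of that proof (with the inequality sharpened to an equality), and your general-$W^*$ discussion goes slightly beyond what the paper needs, since the theorem is stated in the trace setting where Theorem~\ref{weighted.vNLu.theorem} already applies directly.
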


\begin{proof}
Follows from Theorem \ref{weighted.vNLu.theorem}, applied to $(P\N P)^+_{\star1}$, if we notice that 
\begin{equation}
        \Q_{\mathrm{sL}}=\{\omega\in\M(\N)\mid\tr(\omega)=1,\;\omega=P\omega P\},
\label{QsL.constraint}
\end{equation}
which follows from Lemma \ref{p.omega.lemma}.
\end{proof}
%------------------------------------------------------
%-------------------------------------------------------------
\section{Minimisation of quantum metrical distances\label{transition.correlation.section}}
%-------------------------------------------------------------
In the Hilbert space based quantum mechanics a \df{transition probability} of $\xi_0,\xi_1\in\H$ such that $\n{\xi_0}=\n{\xi_1}=1$ is defined as\rpktarget{TRANSPROB}
\begin{equation}
        \TP_\H(\xi_1,\xi_0):=\ab{\s{\xi_1,\xi_0}}^2\in[0,1].
\label{trans.prob}
\end{equation}
Different generalisations of the notion of transition probability to the case of quantum states over $W^*$-algebras are possible. The two most important are: the \df{Raggio transition probability} \cite{Raggio:1982}\rpktarget{RAGGIO}
\begin{equation}
        \TP_{\mathrm{R}}(\phi,\psi):=\s{\xi_\pi(\phi),\xi_\pi(\psi)}_\H=\frac{1}{2}\left(\phi(\II)+\psi(\II)-\n{\xi_\pi(\phi)-\xi_\pi(\psi)}^2_\H\right)\;\;\;\forall\phi,\psi\in\N_\star^+,
\label{Raggio.trans.prob}
\end{equation}
where $(\H,\pi,J,\stdcone)$ is a standard representation of $\N$, and the \df{Cantoni--Uhlmann transition probability} \cite{Cantoni:1975,Uhlmann:1976}\rpktarget{CANULH}
\begin{equation}
        \TP_{\mathrm{CU}}(\phi,\psi):=\sup_{(\H,\pi)}\left\{\ab{\s{\zeta_\pi(\phi),\zeta_\pi(\psi)}_\H}^2\right\}\;\;\;\forall\phi,\psi\in\N_\star^+,
\label{Cantoni.Uhlmann.trans.prob}
\end{equation}
where $\zeta_\pi(\omega)\in\H$ is defined by $\omega(x)=\s{\zeta_\pi(\omega),\pi(x)\zeta_\pi(\omega)}_\H$ $\forall x\in\N$ for some representation $(\H,\pi)$ of $\N$, and the supremum varies over all possible representations. For the comparison of \eqref{Raggio.trans.prob} with \eqref{Cantoni.Uhlmann.trans.prob} and with some other possibilities, see \cite{Raggio:1982,Raggio:1984,Alberti:Uhlmann:1984,Yamagami:2008,Yamagami:2010}. %As observed in \cite{Petz:1985:properties}, $\TP_{\mathrm{R}}(\omega,\phi)$ belongs to the family of Kosaki--Petz $\fff$-distances \eqref{Petz.distance}, with $\fff(\lambda)=\sqrt{\lambda}$,
%\begin{equation}
%        D_{\fff(\cdot)=\sqrt{\cdot}}(\omega,\phi)=\s{\xi_\pi(\phi),\Delta_{\omega,\phi}^{1/2}\xi_\pi(\phi)}_\H=\s{\xi_\pi(\omega),\xi_\pi(\phi)}_\H=\TP_{\mathrm{R}}(\omega,\phi).
%\label{TPR.as.f.distance}
%\end{equation}
%For further discussion of \eqref{Cantoni.Uhlmann.trans.prob} see \cite{Araki:Raggio:1982,Hadjisavvas:1982,Kosaki:1983,Alberti:1983,Uhlmann:1985,Alberti:Uhlmann:2000} and references therein. If $\phi,\psi\in\N^+_{\star1}$, then both $\TP_{\mathrm{R}}$ and $\TP_{\mathrm{CU}}$ satisfy the axioms of \df{generalised transition probability} $\rpktarget{GENTP}\TP:\N^+_{\star1}\times\N^+_{\star1}\ra[0,1]$
%\begin{enumerate}
%\item[1)] $\TP(\phi,\psi)=\TP(\psi,\phi)$,
%\item[2)] $\TP(\phi,\psi)=1$ $\iff$ $\psi=\phi$,
%\item[3)] $\TP(\phi,\psi)=0$ $\iff$ $\n{\phi-\psi}_{\N_\star}=2$ $\iff$ $\supp(\phi)\supp(\psi)=0$,
%\item[4)] $\TP(\lambda\phi+(1-\lambda)\phi,\psi)\geq\lambda\TP(\phi,\psi)+(1-\lambda)\TP(\phi,\psi)$ $\forall\lambda\in[0,1]$,
%\item[5)] $\phi,\psi$ are pure $\limp$ $\TP(\phi,\psi)=\TP_{L_2(\N)}(\xi_{\pi_\N}(\phi),\xi_{\pi_\N}(\psi))$,
%\end{enumerate}
%where $(L_2(\N),\pi_\N,J_\N,L_2(\N)^+)$ is a canonical representation of $\N$. Any two states $\phi,\psi\in\N_\star^+$ satisfying $\supp(\phi)\supp(\psi)=0$ are called \df{orthogonal} \cite{Sakai:1971}.

From the geometric perspective it is worth noting that \eqref{Raggio.trans.prob} is bijectively related to the distance on $\N_\star^+$ defined by the norm of $\H\iso L_2(\N)$,
\begin{equation}
        d_{L_2(\N)}(\phi,\psi)=\n{\xi_{\pi_\N}(\phi)-\xi_{\pi_\N}(\psi)}_{L_2(\N)}=\sqrt{\phi(\II)+\psi(\II)-2\TP_{\mathrm{R}}(\phi,\psi)},
\end{equation}
as well as to the $D_{1/2}$ distance \eqref{D.one.half.distance} on $\N_\star^+$,
\begin{align}
        D_{1/2}(\phi,\psi)&=2(\phi+\psi)(\II)-4\s{\xi_\pi(\phi),\xi_\pi(\psi)}_\H\nonumber\\
        &=2\left(\phi(\II)+\psi(\II)\right)-4\TP_{\mathrm{R}}(\phi,\psi)=2\n{\xi_{\pi_\N}(\phi)-\xi_{\pi_\N}(\psi)}_{L_2(\N)}^2,
\label{D.onehalf.TP.R}  
\end{align}
while \eqref{Cantoni.Uhlmann.trans.prob} is bijectively related to the \df{Bures distance} on $\N_\star^+$ \cite{Bures:1969} (cf. also \cite{Araki:1972,Araki:1974:modular:conjugation}) defined by\rpktarget{DBURES.ZWEI}
\begin{equation}
        d_{\mathrm{Bures}}(\phi,\psi)=\inf_{(\H,\pi)}\left\{\n{\zeta_\pi(\phi)-\zeta_\pi(\psi)}_\H\right\}=\sqrt{\phi(\II)+\psi(\II)-2\sqrt{\TP_{\mathrm{CU}}(\phi,\psi)}},
\label{Bures.distance}
\end{equation}
where $\zeta_\pi$ is defined as above, and $\inf$ varies over the same range as $\sup$ in \eqref{Cantoni.Uhlmann.trans.prob}. In the notation above we have used freely the unitary equivalence between any standard representation $(\H,\pi,J,\stdcone)$ of a $W^*$-algebra $\N$ and its canonical representation $(L_2(\N),\pi_\N,J_\N,L_2(\N)^+)$, see \cite{Kosaki:1980:PhD,Kostecki:2013}. Both $d_{L_2(\N)}(\phi,\psi)$ and $d_{\mathrm{Bures}}(\phi,\psi)$ are metrical distances.

Herbut \cite{Herbut:1969} proved that 
\begin{equation}
	\PPP^{d_{L_2(\N)}}_{\Q_{\mathrm{L}}}(\psi)=\sum_{i=1}^nP_i\rho_\psi P_i
\label{Herbut.equation}
\end{equation}
for $\N=\BH$, $\rho_\phi,\rho_\psi\in\schatten_1(\H)^+$, and $\Q_{\mathrm{L}}$ given by \eqref{QL.constraint}. This derivation of the weak L\"{u}ders rule was the first result of this type in the literature. Minimisation of the same distance function, but under constraints of the type $\tr(\rho_\phi x)=\lambda$ with $x\in\BH^\sa$ and $\lambda\in\RR$, was later considered in \cite{Dieks:Veltkamp:1983}, however with no general results. 

Let $\N_0\subseteq\N$ be the $W^*$-subalgebras of $\BH$, let $T\in\N^+$ be invertible with $0<T\leq\II$, and let $\psi\in\N^+_{\star1}$. Marchand and collaborators \cite{Marchand:1977,Marchand:Wyss:1977,Gudder:Marchand:Wyss:1979} considered a quantum inference problem based on
\begin{equation}
	\arginff{\omega\in K}{d_{\mathrm{Bures}}(\omega,\psi)}=\argsupp{\omega\in K}{\TP_{\mathrm{CU}}(\omega,\psi)},
\end{equation}
with $K=\{\phi\in\N^+_{\star1}\mid\phi|_{\N_0}=\psi(T\cdot T)\}$. The algebra $\N_0$ is interpreted as representing operators subjected to a ``partial measurement'', while $T$ is a noncommutative analogue of the Radon--Nikod\'{y}m quotient that follows from Sakai's theorem \cite{Sakai:1965} and can be thought of as a generalisation of a projection.\footnote{For a generalisation to a setting based on $C^*$-algebras, see \cite{Gudder:1980}. For a generalisation that does not require a subset $\N_0$ to be a $W^*$-algebra, see \cite{Marchand:1983}.} They derived in \cite{Benoist:Marchand:Yourgrau:1977,Benoist:Marchand:1979,Marchand:1983} several different ``post-measurement'' states, dependent on the choice of $\N_0$, $T$, and initial correlations in $\psi$. 

It was shown by Raggio \cite{Raggio:1984} that the strong L\"{u}ders rule \eqref{strong.Lueders.rule} can be directly derived as a special case of constrained \textit{maximisation} of the Cantoni--Uhlmann transition probability \eqref{Cantoni.Uhlmann.trans.prob}. Let $Y$ be a convex subset of a real topological vector space $X$. A subset $F\subseteq Y$ is called a \df{face} if{}f 
\begin{equation}
        \forall x\in F\;\;
        \exists n\in\NN\;\;\left(
                \exists\{\lambda_i\}_{i=1}^n\subseteq\RR^+\;\;
                x=\sum_{i=1}^n\lambda_ix_i,\;\;
                \sum_{i=1}^n\lambda_i=1
        \right)
        \;\;\limp\;\;
                \{x_i\}_{i=1}^n\subseteq F.     
\end{equation}
Let $\N$ be a $W^*$-algebra, and let $K$ be a closed, convex subset of $\N^+_{\star1}$ such that 
\begin{equation}
        \left(\lambda\omega+(1-\lambda)\phi\in K\;\;\forall\lambda\in[0,1]\;\;\limp\;\;\omega,\phi\in K\right)\;\;\forall\omega,\phi\in\N^+_{\star1}.
\end{equation}
Such set is a face in $\N^+_{\star1}$. For each face $K\subseteq\N^+_{\star1}$ there exists a unique $P\in\Proj(\N)$ such that $\omega\in K\iff\omega(P)=1$. If $\Q_P$ is a face in $\N^+_{\star1}$ with a corresponding $P\in\Proj(\N)$, $\psi\in\N^+_{\star1}$, $\N=\BH$, and $\tr(\rho_\psi\,\cdot\,)\equiv\psi$, then \cite{Raggio:1984}
\begin{equation}
        \frac{P\rho_\psi P}{\tr(P\rho_\psi)}=\argsupp{\omega\in\Q_P}{\TP_{\mathrm{CU}}(\omega,\psi)}.
\label{argsup.TP.CU.eq}
\end{equation}
This corresponds to Domotor's observation \cite{Domotor:1985} that the faces in $L_1(\X,\mho(\X),\tmu)^+_1$ form the correct constraints for the Bayes--Laplace rule. On the other hand, \eqref{D.onehalf.TP.R} gives us
\begin{equation}
        \argsupp{\omega\in\Q_P}{\TP_{\mathrm{R}}(\omega,\psi)}=\arginff{\omega\in\Q_P}{D_{1/2}(\omega,\psi)}.
\label{argsup.TP.R.eq}
\end{equation}
From the fact \cite{Raggio:1984} that 
\begin{equation}
        \argsupp{\omega\in\Q_P}{\TP_{\mathrm{R}}(\omega,\psi)}=
        \argsupp{\omega\in\Q_P}{\TP_{\mathrm{CU}}(\omega,\psi)}
\end{equation}
whenever $\N$ is commutative, or $\psi$ is pure, or $\Q_P=\{*\}$, or $P\in\{x\in\N\mid\sigma^\psi_t(x)=x\;\forall t\in\RR\}$ for $\sigma^\psi_t:=\pi_\omega^{-1}(\Delta_{\omega,\omega}^{\ii t}\pi_\omega(x)\Delta_{\omega,\omega}^{-\ii t})$, we can conclude that in any of these cases
\begin{equation}
        \PPP^{D_{1/2}}_{\Q_P}(\psi)=\frac{P\rho_\psi P}{\tr(P\rho_\psi)}.
\label{strong.vN.D.onehalf}
\end{equation}
Thus, under the above conditions, the strong L\"{u}ders rule can be derived as a result of constrained minimisation of $D_{1/2}(\omega,\psi)$. However, these conditions are so restrictive (eliminating e.g. nonpure density operators) that it is more proper to say that the equation \eqref{strong.vN.D.onehalf} expresses a derivation of the strong von Neumann rule. Raggio \cite{Raggio:1984} showed also that, for a general $\psi$ and $K$ the same as in \eqref{argsup.TP.CU.eq}, \eqref{argsup.TP.R.eq} leads to a different result than \eqref{argsup.TP.CU.eq}. This is also the case for a general $\psi$ and $K=\Q_{\mathrm{sL}}$. Nevertheless, we can prove the following: 
\begin{proposition}
For $\psi\in\schatten_1(\H)^+$ and $\Q_{\mathrm{L}}$ given by \eqref{QL.constraint},
\begin{equation}
	\PPP^{D_{1/2}}_{\Q_{\mathrm{L}}}(\psi)=\sum_{i=1}^nP_i\rho_\psi P_i.
\end{equation}
\end{proposition}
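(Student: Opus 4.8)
The plan is to obtain this as an immediate corollary of Herbut's equality \eqref{Herbut.equation} together with the identity \eqref{D.onehalf.TP.R}. For $\N=\BH$ and arbitrary $\omega,\psi\in\N^+_\star$, \eqref{D.onehalf.TP.R} gives
\[
D_{1/2}(\omega,\psi)=2\n{\xi_{\pi_\N}(\omega)-\xi_{\pi_\N}(\psi)}^2_{L_2(\N)}=2\,d_{L_2(\N)}(\omega,\psi)^2 .
\]
Since $t\mapsto 2t^2$ is a strictly increasing bijection of $\RR^+$ onto itself, for each fixed $\psi$ and each subset $\Q\subseteq\N^+_\star$ the functionals $\omega\mapsto D_{1/2}(\omega,\psi)$ and $\omega\mapsto d_{L_2(\N)}(\omega,\psi)$ attain their infimum over $\Q$ at exactly the same argument(s); in particular $\arginff{\omega\in\Q}{D_{1/2}(\omega,\psi)}$ is a singleton if and only if $\arginff{\omega\in\Q}{d_{L_2(\N)}(\omega,\psi)}$ is, and then the two coincide. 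Specialising to $\Q=\Q_{\mathrm{L}}$ of \eqref{QL.constraint} and $\psi\in\schatten_1(\H)^+$, Herbut's theorem \eqref{Herbut.equation} identifies this common minimiser with $\sum_{i=1}^nP_i\rho_\psi P_i$, which is the claim.

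Should one prefer a self-contained argument, the route would be to work directly inside the standard representation $(L_2(\N),\pi_\N,J_\N,L_2(\N)^+)$, $L_2(\N)\iso\schatten_2(\H)$: the constraints $[\omega,P_i]=0$ amount to block-diagonality of $\xi_{\pi_\N}(\omega)$ relative to $\H=\bigoplus_{i=1}^nP_i\H$, the positive block-diagonal Hilbert--Schmidt operators form a closed convex cone, and the minimisation of $\tfrac12 D_{1/2}(\omega,\psi)=\n{\xi_{\pi_\N}(\omega)-\xi_{\pi_\N}(\psi)}^2_{L_2(\N)}$ over this cone is then a Hilbert-space orthogonal projection. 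The one point requiring care in either approach is uniqueness of the minimiser: because $\xi_{\pi_\N}$ is not affine, convexity of $\Q_{\mathrm{L}}$ does not by itself descend to its image, so one must invoke the cone structure just described (or simply inherit uniqueness from \eqref{Herbut.equation}). The reduction \eqref{D.onehalf.TP.R} and the monotone reparametrisation are otherwise routine, and I expect the author's proof to amount to little more than the observation that $D_{1/2}$ and $d_{L_2(\N)}^2$ have the same constrained minimisers.
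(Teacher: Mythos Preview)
Your proposal is correct and matches the paper's own proof exactly: the author simply invokes Herbut's result \eqref{Herbut.equation} together with the relation $(d_{L_2(\N)}(\phi,\psi))^2=2D_{1/2}(\phi,\psi)$ (a rewriting of \eqref{D.onehalf.TP.R}), which is precisely your monotone-reparametrisation argument. Your additional remarks on uniqueness and the alternative self-contained route via the block-diagonal cone in $\schatten_2(\H)$ go beyond what the paper provides, but the core argument is identical.
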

\begin{proof}
Follows directly from \eqref{Herbut.equation} and 
\begin{equation}
(d_{L_2(\N)}(\phi,\psi))^2=2D_{1/2}(\phi,\psi).
\end{equation}
\end{proof}

Another result was obtained by Hadjisavvas \cite{Hadjisavvas:1978,Hadjisavvas:1981}, who showed that the strong von Neumann rule for pure $\psi\in\schatten_1(\H)^+_1$ can be derived as 
\begin{equation}
        \PPP^{d_{L_1(\N)}}_{\Q_{\mathrm{sL}}}=\frac{P\rho_\psi P}{\tr(P\rho_\psi)},
\label{Hajisavvas.min.problem}
\end{equation}
where $d_{L_1(\N)}$ is the metrical \df{Jauch--Misra--Gibson--Kronfli distance} \cite{Jauch:Misra:Gibson:1968,Kronfli:1970,Hadjisavvas:1981} on $\N_\star\iso L_1(\N)$,
\begin{equation}
	d_{L_1(\N)}(\phi,\psi):=\frac{1}{2}\n{\phi-\psi}_{\N_\star}.
\label{D.JMGK}
\end{equation}
It is worth noticing that the original definition \cite{Jauch:Misra:Gibson:1968} of $d_{L_1(\N)}$ was provided over the measures on orthonormal orthomodular lattice $\Proj(\BH)$, where it takes a form
\begin{equation}
	\sup_{x\in\Proj(\BH)}\ab{p_1(x)-p_2(x)},
\end{equation}
while the distance minimised in \cite{May:1973,May:Harper:1976,May:1976,May:1979} can be represented in a form
\begin{equation}
	\sup_{x\in\boole}\ab{p_1(x)-p_2(x)},
\label{abs.subtraction.distance}
\end{equation}
where $\boole$ is a boolean algebra\footnote{More precisely, it is a finite boolean algebra that is defined as a Lindenbaum--Tarski algebra of a predicate calculus language.}. In this sense, Hadjisavvas' derivation of the strong von Neumann rule from minimisation of $d_{L_1(\N)}$ is similar to Burris' derivation of Jeffrey's rule as a (nonunique) minimiser of \eqref{abs.subtraction.distance} (reported in \cite{May:Harper:1976,May:1979}\footnote{In \cite{vanFraassen:1981} this result is incorrectly attributed to \cite{Jamison:1974}, despite the clear statement in \cite{May:Harper:1976,May:1979}. Note that this result was proven only for Jeffrey's rule \eqref{Jeffrey.rule} with $n=2$.}). In \cite{Diu:1982} Diu showed that, when applied to nonpure states $\psi$, $\PPP^{d_{L_1(\N)}}_{\Q_{\mathrm{sL}}}(\psi)$ does not lead to the strong L\"{u}ders rule (in \cite{Diu:1983} he extended this result to a more general family of metrical distances on quantum states). In our opinion, Diu's result cannot be used as a general argument against using constrained minimisation of $d_{L_1(\N)}$, or any other information distance, because strong L\"{u}ders' rule is not a uniquely ``correct'' quantum state change rule. However, on the positive side, Diu's result and the similar result by Raggio on $\PPP^{D_{1/2}}_{\Q_{\mathrm{sL}}}(\psi)$ for nonpure $\psi$ exemplify that the choice of an information distance subjected to minimisation preselects the type of possible constraints and the class of admissible results. In this sense, the choice of an information distance (metrical or not) amounts to the choice of a specific convention of inference, which in turn determines some range of possible forms of information dynamics and their output states.
%-------------------------------------------------------------
%------------------------------------------------------
\section{Discussion\label{discussion.section}}
%------------------------------------------------------
The information state changes conditioned on certainties (yes/no truth values corresponding to elements of a boolean algebra or orthonormal lattice of projections) are definitely not the only way, and also not the most useful way, of defining information dynamics of information states, both statistical and quantum. The possibility of a derivation of the Bayes--Laplace, Jeffrey's, L\"{u}ders', and quantum Jeffrey's rules from minimisation of different distance functionals subjected to various constraints shows the explanatory (semantic) strength of the approach based on entropic projections.

While \textit{in principle} any quantum state change rule (such as Luders' rule, quantum Jeffrey's rule, or some quantum channel\footnote{That is, a completely positive trace preserving map between quantum states.}) is an \textit{ad hoc} postulate, not derived from any other, more fundamental, property of quantum theoretic formalism, \textit{in practice} the choice of a particular form of this rule reflects certain assumptions about the relationship between the knowledge about outcomes of experimental procedure and the contents of a quantum theoretical model. For example, weak L\"{u}ders' rule assumes the specific type of knowledge about the ``measurement result'', requiring specification of the choice of the projection operators $\{P_i\mid i\in I\}\subseteq\BH$. This restricts the allowed form of `experimental evidence' to nonempty convex closed linear subspaces of a commutative $L_2$ space.

A virtue of the approach based on quantum entropic projections is that it allows for a vast generalisation beyond the above restriction, while keeping clear underlying conceptual principles as well as strong mathematical useability. Each information distance functional $D$ expresses a choice of a specific convention on the preferred/relevant and unpreferred/irrelevant aspects of information states: the relevant aspects are those that more strongly participate in the values of $D$ (one can think of a conventional character of a least squares distance, which expresses certain arbitrary preferences regarding the information content of the data). To every choice of an information distance there corresponds a preferred type of constraint (preferred geometric form of `experimental evidence') for which this information distance is capable of obtaining a unique minimum. %If the quantum distance under consideration belongs to the class of Kosaki--Petz $\fff$-distances $D_\fff$ \cite{Kosaki:1982:interpolation,Petz:1985:quasientropies}, then these constraints are nonempty convex subsets of $\M(\N)\subseteq\N^+_\star$ that are closed in a topology induced from a specific noncommutative Orlicz space $V_\fff$ determined by $\fff$.\footnote{The details and a proof of this statement will be given elsewhere.} 
The general way to introduce constraints is then to provide a mapping $\Xi\ra\M(\N)$ from the space $\Xi$ of (`epistemic'/`experimental'/`registration') parameters describing the sets of ``possible outcomes'' into nonempty convex closed subsets of $\M(\N)$. 

This way, as opposed to the Bayes--Laplace rule and L\"{u}ders' rules, in our framework the conditioning is provided not upon the abstract `event' that belongs to a boolean algebra or to an orthomodular lattice, respectively, but upon the value taken in the space $\Xi$ of parameters describing the ``possible outcomes''. This is similar to the semi-spectral (povm/cp-map) approach, which uses linear povm-instruments conditioned upon the spaces of ``possible outcomes'' that belong to $\mho(\X)$. This difference (conditioning upon `quantitative information' as opposed to conditioning upon `event') is the key insight. Note that in the semi-spectral approach there is also no bijection between the space of `abstract events' and the space of ``possible outcomes'': a single effect can correspond to various elements $\Y\in\mho(\X)$ \cite{Braunstein:Caves:1988}: the `eigenstate-eigenvalue link' \cite{Fine:1973} breaks down here. However, as opposed to the semi-spectral approach, our approach completely detaches from the reliance on the use of spectral theory in foundations of quantum theory, allowing for more flexible operational specification of the `experimental evidence', and for deriving ``quantum measurement'' rules from a single underlying principle (entropic projection), which is alternative to quantum channels. In particular, the use of quantum distance as an underlying mathematical structure allows for geometric analysis and justification of the choice of a specific convention of quantum inference/dynamics. This is in contrast with the general lack of clear geometric justification for a choice of a specific quantum channel in the semi-spectral approach.

An extended development of the approach based on quantum entropic projections as an alternative to povm-instruments, and as a replacement for L\"{u}ders' rules, will be carried out in \cite{Kostecki:2014:towards}. For an alternative derivation of L\"{u}ders' rules from quantum entropic projections based on $D_0$, see \cite{HKK:2014}.

What does it all mean for quantum bayesianism? Quite often `bayesianism' is understood as a subjective interpretation of probability equipped with the requirement of using the Bayes--Laplace rule for the purpose of changing probabilities due to learning new information. However, the appearance of other updating rules, such as Jeffrey's rule, Field's rule \cite{Field:1978}, and constrained maximisation of the WGKL distance \cite{Kullback:1959,Hobson:Cheng:1973,Johnson:1979,Williams:1980}, has undermined the universality of a second component. The usual perspective on the meaning of `quantum bayesianism' starts from the semi-spectral approach to mathematical foundations of quantum theory, and aims at recasting (some suitable class of) povm-instruments as (a modified form of) the Bayes--Laplace rule, while keeping the subjective interpretation of probabilities. Such perspective assumes that probability theory and spectral theory should be fundamental constitutents of quantum theory. Moreover, it also does not provide justification for using povm-instruments: the mathematical foundations are just taken for granted. The results contained in this paper are intended to serve as a guideline (via quantum Jeffrey's rule) and as a backwards compatibility proof (via recovery of L\"{u}ders' rules) for an alternative approach to the mathematical and conceptual meaning of `quantum bayesianism'. According to our point of view, quantum states should be used as carriers of intersubjective knowledge on their own mathematical right (as elements of a noncommutative $L_1$ space), without reference to probabilities (elements of a commutative $L_1$ space), while the processes of inductive inference (information dynamics) can be fruitfully modelled using quantum entropic projections as a nonlinear alternative to the Bayes--Laplace rule and linear povm-instruments. See \cite{Kostecki:2014:towards} for a detailed account.
%-------------------------------------------------------------
\section*{Acknowledgments}
%-------------------------------------------------------------
{\small I would like to thank Wojciech Kami\'{n}ski for numerous discussions, comments, and insights that have strongly shaped this work. I thank also Carlos Guedes, Frank Hellmann, and Stanis{\l}aw Woronowicz for discussions at the early stage of this work, Stanley Burris and Nicolas Hadjisavvas for correspondence, Patrick Coles for informing about \cite{MPSVW:2010,Coles:2012}, and Bianca Dittrich for a partial support. This research was supported in part by Perimeter Institute for Theoretical Physics. Research at Perimeter Institute is supported by the Government of Canada through Industry Canada and by the Province of Ontario through the Ministry of Research and Innovation. This research was also partially financed by the National Science Center of the Republic of Poland (Narodowe Centrum Nauki) through the grant number N N202 343640.}
%-------------------------------------------------------------
\section*{References}
\addcontentsline{toc}{section}{References}
{%
\scriptsize
\bibliographystyle{rpkbib}
\renewcommand\refname{\vskip -1cm}
\bibliography{lqj2}           
}%
%-------------------------------------------------------------
%-------------------------------------------------------------
\end{document}